\definecolor{TSUYUKUSA}{RGB}{46, 169, 223}
\definecolor{KURENAI}{RGB}{203, 27, 69}
\renewcommand{\subsection}{\@startsection{subsection}{2}{\z@}%
                      {8\p@ \@plus -4\p@ \@minus -4\p@}%
                      {8\p@ \@plus 2\p@ \@minus 2\p@}%
                      {\normalfont\large\bfseries\boldmath
                        \rightskip=\z@ \@plus 8em\pretolerance=10000 }}
\renewcommand\subsubsection{\@startsection{subsubsection}{3}{\z@}%
                       {-18\p@ \@plus -4\p@ \@minus -4\p@}%
                       {4\p@ \@plus 2\p@ \@minus 2\p@}%
                       {\normalfont\normalsize\bfseries\boldmath
                        \rightskip=\z@ \@plus 8em\pretolerance=10000 }}
\begin{document}

\title{Generalized Hybrid Search and \\ Applications to Blockchain and Hash Function Security}

\author{Alexandru Cojocaru\inst{1} \and Juan Garay\inst{2} \and Fang
  Song\inst{3}}

\institute{QuICS, University of Maryland, USA.\\
  \email{cojocaru@umd.edu}
  \and Department of Computer Science and
  Engineering, Texas A\&M University, USA.\\
  \email{garay@tamu.edu}
  \and
  Department of Computer Science, Portland State
  University, USA.\\
  \email{fang.song@pdx.edu}
}

\maketitle
\setcounter{page}{0}
\thispagestyle{empty}

\begin{center}
\textbf{Abstract}    
\end{center}

\begin{changemargin}{1.5cm}{1.5cm}
{ \small
In this work
we first examine the hardness of solving various search problems by hybrid quantum-classical strategies, namely, by algorithms that have both quantum and classical capabilities. 
We then construct a hybrid quantum-classical search algorithm and analyze its success probability.

Regarding the former,  
for search problems that are allowed to have multiple solutions and in which the input is sampled according to
arbitrary distributions
we establish their hybrid quantum-classical query complexities---i.e., given a fixed number of classical and quantum queries, determine what is the probability of solving the search task. At a technical level, our results generalize the 
framework for hybrid quantum-classical search algorithms recently proposed by Rosmanis~\cite{Ros22}.
Namely, for an {\em arbitrary} distribution $D$ on Boolean functions, the probability that an algorithm equipped with $\tau_c$ classical queries and $\tau_q$ quantum queries succeeds in finding a preimage of $1$ for a function sampled from $D$ is at most $\solnl\cdot
  (2\sqrt{\cquery} + 2\qquery + 1)^2$, where $\solnl$ captures the average (over $D$) fraction of preimages of $1$.

As applications of our hardness results,
we first revisit and generalize the formal security treatment of the Bitcoin 
protocol called the {\em Bitcoin backbone} [Eurocrypt 2015], to a setting where the adversary has both quantum and classical capabilities, 
presenting a new {\em hybrid honest majority} condition necessary for the protocol to properly operate.
Secondly, we re-examine the generic security of hash functions
[PKC 2016] against quantum-classical hybrid adversaries. 

Regarding our second contribution, we design a hybrid algorithm which first spends all of its classical queries 
and in the second stage runs a ``modified Grover'' in which the initial state depends on the target distribution $\distr$. We then show how to analyze its success probability for arbitrary target distributions and, importantly,
its optimality for the uniform and the Bernoulli distribution cases. 
}
\end{changemargin}

\makeatletter
\renewcommand*\l@author[2]{}
\renewcommand*\l@title[2]{}
\makeatletter
\setcounter{tocdepth}{2}
\setcounter{page}{1}

\newpage

\tableofcontents

\newpage

%--------------------------%
\section{Introduction}
\label{sec:intro} 
The query model is an elegant abstraction and is widely adopted in
cryptography. A notable example is the random oracle (RO)
model~\cite{BR93}, where a hash function $f$ is modeled as a random
black-box function, and all 
parties including the adversary can
evaluate it only by issuing a query $x$ and receiving $f(x)$ in
response. Numerous cryptosystems have been designed and analyzed in
the random oracle model (e.g.,~\cite{BR94,BR96,Shoup01,FOPS04,FO-JoC13}). 

Quantum computing brings about a new quantum query model, where
\emph{superposition} queries to the hash function $f$ are permitted in
the form of: 
$\sum_{x,y} \alpha_{x,y} \ket{x} \ket{y} \mapsto \sum_{x,y}
\alpha_{x,y} \ket{x}\ket{y\oplus f(x)}$. This equips quantum
adversaries with new capabilities. Indeed, some classically secure
digital signature and public-key encryption schemes are broken in the
\emph{quantum} random oracle (QRO) model, where a quantum adversary %can
makes superposition queries to $f$~\cite{YZ21}. A 
significant amount of
effort has 
been devoted to address such quantum-query
adversaries (cf.,~\cite{BDF+11,ES15,Unruh15,HHK17,AHU19,DFMS19,CMS19,ES20,DFMS22})
and often, in order to maintain security, we need to pay a considerable
efficiency overhead, such as more complex constructions or larger key
sizes.

The threat is alarming, but it requires running a large-scale
quantum computer coherently for an extended time. The quantum devices
available in the near-to-intermediate term are likely to be computationally restricted as
well as expensive~\cite{Preskill18}. This reality inspires a
\emph{hybrid} query model, where one is granted a quota of both
%classical queries and quantum queries,
classical and quantum queries,
a model
which subsumes
the classical and fully
quantum query models
as special cases. Establishing a
trade-off between classical and
quantum queries allows us to give a more accurate estimation of
security and hence optimized parameter choices of a cryptosystem
depending on what resources are available to a (near-term) quantum
adversary.

Recently, Rosmanis studied
the basic unstructured
search problem in the hybrid query model~\cite{Ros22}, where given oracle
function $f: X\to \bits$, one wants to find a ``marked'' input, i.e.,
$x$ with $f(x) = 1$. This search problem and many variants, such as  
 multiple or randomly chosen marked inputs, are well understood when
all queries are quantum~\cite{Grover96,BBBV97,Zalka99,DH09,Zhandry19}, where Grover's quantum algorithm
gives a quadratic speedup over classical algorithms and it is also
proven optimal. Rosmanis's work proves
the hardness of searching in the domain of a function with a \emph{unique}
marked input $x^*$ in the hybrid query model. Specifically,
any quantum algorithm with $\cquery$ classical queries and $\qquery$
quantum queries succeeds in finding $x^*$ with probability at most
$ \frac{1}{|X|} \cdot (2 \sqrt{\cquery} + 2\qquery + 1)^2$.
This hardness bound is also shown in~\cite{HLS22} by a new recording technique tailored to the hybrid query model.

\medskip

%--------------------------%
\subsection{Our Contributions and Technical Overview}
\label{sec:contr}
%--------------------------%

%We start by describing our hardness results.

% \paragraph{Hardness: 
\paragraph{Success Probability Lower Bound 
% on Success Probability 
of any Hybrid Search Algorithm.}
In this work, we consider an arbitrary distribution $\distr$ on the function
family $\func = \{f : X\to \bits\}$,
and prove a precise upper
bound on the probability of finding a preimage $x$ with $f(x)=1$ when
$f\gets \distr$, for any algorithm $\adv$ spending $\cquery$
classical and $\qquery$ quantum queries. %\Cojo{(in the black-box model) to $f$}.
Specifically, we show that:
\begin{equation*}
  \Pr_{f\gets \distr}[ f(x)=1: x\gets \adv^{{f}} ] \le \solnl\cdot
  (2\sqrt{\cquery} + 2\qquery + 1)^2 \, , 
\end{equation*}  
where $\solnl \defeq \sup_{\varphi : \norm{\varphi} \leq 1} \left( \expt_{f\gets \distr} 
\norm{ \left(\sum_{x: f(x)=1} 
\ket{x}\bra{x}\right) \varphi}^2  \right)$ captures the \emph{average} fraction of preimages of $1$, and is solely determined by the distribution $\distr$.

With our generalized bound, deriving hardness bounds for specific distributions becomes 
convenient. All we need is to analyze
$\solnl$, and this usually can be done by simple combinatorial
arguments. For instance, let $\distr$ be the uniform distribution over
functions with exactly one marked input. Then we can observe that $\solnl = \Pr_{f \gets \distr }[f(x) = 1] = {1}/{|X|}$ for an arbitrary $x$, which
recovers the result of Rosmanis~\cite{Ros22}. % By a similar token, the hardness of searching a function with $w$ markeditems 
%can be is
%obtained.derived.
The hardness of searching 
a function with $w > 1$ marked
items 
can be 
% is
%obtained.
similarly 
derived.

We further demonstrate our result on another distribution
$\distr_\berpar$, where each input is marked according to a Bernoulli
trial. Namely, for every $x \in X$, we set $f(x) = 1$ with probability
$\berpar$ \emph{independently}. By determining $\solnl$ in this case,
we derive the hardness of search when the function is drawn from $\distr_\berpar$.
This search problem under $\distr_\berpar$, which we call {\em Bernoulli
Search}, is particularly useful in 
several cryptographic
applications. Firstly, we can prove generic security bounds for 
hash function properties,
such as preimage-resistance,
second-preimage resistance and their multi-target extensions, against
hybrid quantum-classical adversaries. 
This follows by adapting the reductions in~\cite{HRS16}, where the hash properties are connected to the 
{\em Bernoulli Search} problem 
in the fully quantum query setting and then
plugging in
our hybrid hardness bound of {\em Bernoulli Search}. In another application,
{\em Bernoulli Search} was shown to dictate the security of 
proofs of work (PoWs)
and security properties of Bitcoin-like
blockchains 
in the random oracle model
(with fully quantum queries)~\cite{CGKSW23}. This allows us to
identify a new \emph{honest-majority} condition under which the
security of Bitcoin blockchain holds against hybrid adversaries with 
classical and quantum queries.

At a technical level, the proof of our hardness bound follows the overall strategy
of~\cite{Ros22}. As in the standard optimality proof of Grover's
algorithm~\cite{BBBV97}, one would consider running an adversary's
algorithm with respect to the input function $f\gets \distr$ or a
constant-0 function. Then one argues that each query diverges the
states in these two cases, which is called a \emph{progress measure},
by a small amount. On the other hand, in order to find a marked input
in $f$, the final states need to differ significantly. Therefore,
sufficiently many queries are necessary for the cumulative progress to
grow adequately.

Now, when classical queries are mixed with quantum queries, the quantum
states would collapse after each classical query and it becomes
unclear how to measure the progress. To address this, Rosmanis 
considers instead an intermediate oracle named
\emph{pseudo-classical}. Namely, consider a quantum query with the
output register initialized in $\ket{0}$:
$\sum_{x} \alpha_{x} \ket{x} \ket{0} \mapsto \sum_{x} \alpha_{x}
\ket{x}\ket{f(x)}$. We can then view a classical query as the result
of measuring the input register that collapses to $x$ and receiving
$f(x)$, whereas a pseudo-classical oracle measures the output
register, resulting in one of two possible outcomes:
$\sum_{x: f(x) = 0} \alpha_x \ket{x}\ket{0}$ (denoted as the 0-outcome branch) or
$\sum_{x: f(x) = 1} \alpha_x \ket{x}\ket{1}$ (denoted as the 1-outcome branch). With this change, one
instead tracks the progress between the 0-outcome branch in case of
$f\gets \distr$ and 
the state in case of the constant-0 function (which always stays in the 0-outcome branch). The algorithm fails if its state stays
in the 0-outcome branch and is close to the state in the constant-0
case. A key ingredient in our proof is to deliberately separate the evolution of
various objects on an \emph{individual} function and what
\emph{characteristics} of the distribution $\distr$ influence the
evolution and in what way. This enables us 
% helps 
to obtain 
% the 
a clean and concise lower bound for the generalized hybrid search problem.
% in the end. 

\paragraph{Hybrid Search Algorithms: Design and Analysis.}
In the second part of our work we construct a hybrid %search
algorithm
%to solve 
for
the search problem for an arbitrary distribution $\distr$ and show that
in several interesting cases (e.g., Bernoulli) algorithm is optimal, 
%as a result,
hence
leading to tight query complexity in the hybrid model. Inspired by
our hardness analysis, our algorithm proceeds in a %simple 
two-stage
fashion:
\begin{itemize}
\item The first stage is purely \emph{classical}. We query the
  $\cquery$ inputs $x$ that are the most likely to be assigned
  the value 1 under
  $\distr$. More precisely, for any $x$ in the input domain, let the function
  $\weight(x)=\sum_f D(f) \cdot f(x)$, which can be viewed as the
  (unnormalized) probability that $f(x) = 1$ with $f$ drawn from
  $\distr$. Let $S$ be the set of inputs whose $\weight(x)$ values are
  the $\cquery$-highest (ties are broken arbitrarily). 
  Then algorithm queries
  all the points $x\in S$. If none of them give a solution, we move on to the
  second stage.
\item The second stage is fully \emph{quantum}. We run a modified Grover's
  algorithm $\cA$ which is tailored to the prior knowledge on the
  distribution $\distr$. Instead of starting from an equal
  superposition of all points in the search space
  as in the standard Grover's search
  algorithm, we construct an initial state in which the amplitude of
  each point is proportional to $\weight(x)$. Then, for each of the $\qquery$ quantum queries, two reflection
  operators are applied to rotate the initial state towards a
  target state encoding the solutions. We give a comprehensive
  analysis and derive a clean lower bound for the success probability
  of $\cA$ on the distribution $\distr$, which amounts to
  $\qquery^2\cdot\frac{\sum_x \weight^2(x)}{\sum_x\weight(x)}$. To be
  more clear, 
%  when we use this
  for the algorithm in the second stage, we %can
  define an induced distribution $\tilde \distr$ by restricting and
  (re-normalizing) $\distr$ to functions $f$ satisfying $f(x)=0$ for
  all $x\in S$. Then we invoke the quantum algorithm $\cA$ on
  $\tilde \distr$ in a modular way.
\end{itemize}

Note that the hybrid algorithm needs to compute the values
$\weight(x)$ from the description of the target distribution $\distr$
and during the quantum procedure, the algorithm will implement a unitary dependent on the
$\weight(x)$ values, hence the algorithm needs not be time efficient.

We can show that the success probability of the hybrid algorithm is
at least the average of the success probabilities of the classical
stage and of the quantum stage. In some special cases, such as
the Bernoulli distribution, both the classical probability (i.e., at
least one success in $\cquery$ Bernoulli trials) and the weights
$w(x)$ (hence the quantum success probability) are easy to derive. We
can show that the hybrid algorithm gives matching lower bounds to the
hardness bounds that we proved in the first part of our work.

% The classical success probability for a general $\distr$ turns out to
% be not as easy to characterize. We propose an approach via
% \emph{boolean Fourier analysis}, and express the failure probability
% of the classical algorithm as the inner product of the Fourier
% coefficients of two different functions: one fully described by the
% classical strategy and the other fully described by the distribution
% $\distr$. We showcase the usefulness of this approach by computing the
% Fourier coefficients and reproducing the classical success
% probabilities in the use cases above.  This approach might help
% determine whether the ``top-$\cquery$'' strategy employed in our
% algorithm is the optimal classical algorithm for search in an
% arbitrary distribution.

%\medskip

We believe that the hybrid query model is both of theoretical and
practical importance. Since near-term quantum computers are limited
and expensive, it is to the interest of a party to supplement it
with massive classical computational power. This also reflects the fact that those parties who have early access to quantum computers (e.g.,
big companies and government agencies) largely coincide with
those who are capable of employing classical clusters and
supercomputers. Next, we discuss 
some future directions.

One immediate question is to study other problems in the hybrid query
model. The work of~\cite{HLS22} also proves the hardness of the collision problem by their 
generalized recording technique in
the hybrid query model. It would be useful to further develop  techniques
and establish more query complexity results.

Our applications to hash functions and Bitcoin blockchains can be seen
as analyzing cryptographic constructions in the 
% quantum random oracle
QRO
model against hybrid adversaries. Many block ciphers rely on 
% another
a different
% idealized 
model, known as the ideal cipher model. 
% To see a simple
As a simple
example, the Even-Mansour cipher encrypts by
$E_k: m \mapsto \sigma(k\oplus m)\oplus k$, where $\sigma$ is a random
permutation given as an oracle. As it turns out, this classically secure cipher is
completely broken when quantum queries are allowed to both $E_k$ and
$\sigma$~\cite{KM10}. Since the secret key $k$ is managed by honest
users, it is debatable whether superposition access to $E_k$ is
realistic. There has been progress in re-establishing the cipher's security under a
partially quantum adversary with quantum access to $\sigma$ but
classical access to $E_k$~\cite{JST21,ABKM22}. The hybrid query model
we consider in this work suggests further relaxing the queries to
$\sigma$ to be a hybrid of classical and quantum ones, and it would be
valuable to re-examine the security of such schemes in the ideal cipher
model.

Querying an oracle also appears more broadly in many other cryptographic scenarios. Security definitions often give some
algorithm as an oracle to the adversary, such as an
encryption oracle in the chosen-plaintext-attack (CPA) game and a
signing oracle in formalizing unforgeability of digital
signatures. There has been a considerable effort of settling
appropriate definitions and constructions (e.g., quantum-accessible
pseudorandom functions, encryption and signatures) when quantum
adversaries are granted superposition queries to these
oracles (cf.~\cite{BZ13,Zhandry15_ibe,AMRS20,Zhandry21_qprf,CEV23}). 
Extending such efforts to the hybrid-adversary landscape would
offer fine-grained security assessments of post-quantum cryptosystems.
Finally, in the context of complexity theory, the study of hybrid algorithms is further motivated by related models 
%which focus on
focusing on
the interplay between classical computation and near-future quantum devices \cite{CCHL22} and between circuit depth and quantum queries \cite{SZ19, CM20, CCL23}.

\paragraph{Updates from previous version (\url{https://eprint.iacr.org/archive/2023/798/20230703:172204}) of this paper}: {We have added the design and analysis of fully quantum as well as hybrid algorithms for distributional search, and have shown the optimality for special cases. This supplements the hardness results in the previous version.}

\subsection{Organization of the paper}
The rest of the paper is organized as follows. 
% The generalized search problem, which we term {\em Distributional Search} ($\dsearch$), hybrid quantum-classical queries \Cojo{should we keep the part about queries if it's a problem description?} is presented in Section~\ref{sec:search}. 
% The problem is defined in Section~\ref{sec:dsearch}, its hardness is established in Section~\ref{sec:proof}---the core of our technical contributions---and two case studies---Grover-like Search and Bernoulli Search---are presented in Section~\ref{sec:cases}. 
The generalized search problem, which we call {\em Distributional Search}, is defined in Section~\ref{sec:dsearch},
and its hybrid quantum-classical  % hybrid quantum-classical 
hardness is proven in Section~\ref{sec:proof} and Section~\ref{sec:progress}.
% with
% the technical overview and the core of our technical contributions are described in Sections~\ref{sec:notation} and~\ref{sec:mainproof} respectively. 
Section~\ref{sec:cases} describes two case studies---Grover-like Search and Bernoulli Search and 
Section~\ref{sec:apps} demonstrates applications of the Bernoulli Search results, where we show the generic security of hash functions (Sections~\ref{sec:hash}) and the security of the Bitcoin blockchain against hybrid adversaries (Sections~\ref{sec:pqbc_hybrid}). 
In Section~\ref{sec:quantum_algorithm} we construct a quantum search algorithm and establish its probability of success for any target distribution. 
Section~\ref{sec:hybrid_algorithm} describes our proposed hybrid search algorithm,
% and 
as well as the 
% framework 
approach for the analysis of general distributions, and also shows the optimality of the algorithm for particular distributions.

%--------------------------%
\section{Distributional Search with Hybrid Strategies}
\label{sec:search} 
%--------------------------%
\subsection{The Distributional Search Problem}
\label{sec:dsearch} 
%--------------------------%

The underlying problem we consider is the search for a preimage of $1$ of an arbitrarily
distributed black-box boolean function.

\begin{mdframed}[linecolor=black!7, backgroundcolor=black!7]
  \begin{center}
    \textbf{Distributional Search Problem ($\dsearch$) }
  \end{center}

  Let $\distr$ be an arbitrary distribution supported on the function
  family $\ff : = \ffunc$.

\noindent\textbf{Given}: Black-box access to function $f$ drawn
from distribution $D$. 

\noindent\textbf{Goal}: Find $x$ 
% with 
such that $f(x) = 1$ if there exists such
an $x$.
\end{mdframed}
It is not surprising that the hardness of the problem is crucially influenced by the
number of preimages of $1$ \emph{on average} under $\distr$; however, what is
% elegant 
interesting about our study is that we can show a clean quantitative relation. Let
$f : X\to \bits$ be an arbitrary function. Define the projector on the
space spanned by preimages 
of $1$:
\[ \pi_f \defeq \sum_{x: f(x) = 1} | x\rangle\langle x| \, . \]
Denote
%$\pi_f^\perp := \identity - \pi_f$. 
$\pi_f^\perp \defeq \identity - \pi_f$,
and let $\distr$ be a distribution on
$\ff$. We define the value that captures the \emph{average}
fraction of preimages of $1$ as:

\begin{definition}[\solnl] The average fraction of preimages of $1$ is defined as:
    \begin{equation}
        \solnl \defeq \sup_{\varphi : \norm{\varphi} \leq 1} \left(\expt_{f\gets \distr} \norm{\pi_f \varphi}^2  \right)
       % \expt_{f\gets \distr} \specnorm{\pi_f}^2 \, ,\]
   % $\specnorm{A} =
   % \max\{\|A u\|: \|u\|\le 1 \}$. 
    \end{equation}    
    where $\norm{ \varphi }$ denotes the Euclidean norm of the quantum state $\varphi$.
\end{definition}

In this paper, we are able to establish the following bound for the 
success probability 
of solving $\dsearch$, which constitutes one of our main results:
\begin{theorem}[Hardness of $\dsearch$]
\label{thm:main} For any
algorithm $\adv$ making up to $\cquery$ classical queries and
$\qquery$ quantum queries, it holds that:
\[
  \succp_{\adv,\distr}: = \Pr_{f\gets \distr}[ f(x)=1: x\gets
  \adv^{{f}} ] \le \solnl\cdot (2\sqrt{\cquery} + 2\qquery + 1)^2 \, .
\]
\label{thm:dsearch}
\end{theorem}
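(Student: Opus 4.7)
The plan is to follow the overall progress-measure strategy of Rosmanis~\cite{Ros22} but refactor it so that the distribution $D$ enters only through the single scalar $\nu_D$. First, I would reduce classical queries to \emph{pseudo-classical} queries: a pseudo-classical query applies the standard query unitary on the input register initialized in $|0\rangle$ and then measures the \emph{output} register, collapsing the state to either $\sum_{x: f(x)=0} \alpha_x |x\rangle|0\rangle$ (the $0$-branch) or $\sum_{x: f(x)=1} \alpha_x |x\rangle|1\rangle$ (the $1$-branch). Since the $1$-branch already produces a valid solution $x$ with $f(x)=1$, a hybrid adversary can simulate a classical query from a pseudo-classical one without loss, so it suffices to prove the bound for an adversary that issues $\tau_c$ pseudo-classical and $\tau_q$ quantum queries in an arbitrary interleaving.

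Next I would set up the progress measure by comparing the real execution on $f$ with the execution on the constant-$0$ function $\fnull$. Let $|\psi_f^t\rangle$ and $|\psi_\emptyset^t\rangle$ denote the respective states after $t$ queries, where in the $f$-world we condition on remaining in the $0$-branch after each pseudo-classical query (in the $\fnull$-world every pseudo-classical query trivially stays in the $0$-branch). Define
\[
P_t \defeq \mathbb{E}_{f\gets D}\, \bigl\| \,|\psi_f^t\rangle - |\psi_\emptyset^t\rangle \,\bigr\|
\]
so that $P_0 = 0$. The key per-query claims I would establish are: (i) for a quantum query at step $t$, $P_{t+1} - P_t \le 2\,\mathbb{E}_{f}\|\pi_f |\psi_\emptyset^t\rangle\| \le 2\sqrt{\nu_D}$, where the first inequality is the BBBV-style hybrid argument and the second is Jensen/Cauchy–Schwarz together with the fact that $|\psi_\emptyset^t\rangle$ is $f$-independent, so the definition of $\nu_D$ applies directly; (ii) for a pseudo-classical query, the contribution to $P_t$ can be analyzed by separating the amplitude lost to the $1$-branch — the expected squared $1$-branch weight is at most $\nu_D$, and after $\tau_c$ such queries the accumulated $L^2$-increase to $P$ is bounded by $\sqrt{\tau_c\cdot\nu_D}$ via a Cauchy–Schwarz across query steps (this is where the $2\sqrt{\tau_c}$ term in the theorem arises, in contrast to the linear-in-$\tau_q$ term for quantum queries).

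Finally, I would convert the progress bound into a success-probability bound. If the adversary outputs $x$ with $f(x)=1$ with probability $\succp$, then either some pseudo-classical query already landed in the $1$-branch (contributing $\le$ the accumulated pseudo-classical progress) or the final measurement of $|\psi_f^T\rangle$ produces a marked input — the latter forces $|\psi_f^T\rangle$ and $|\psi_\emptyset^T\rangle$ to be distinguishable by $\pi_f$, and a standard inequality gives $\sqrt{\succp} \le P_T + \mathbb{E}_f\|\pi_f|\psi_\emptyset^T\rangle\| \le 2\tau_q\sqrt{\nu_D} + 2\sqrt{\tau_c\,\nu_D} + \sqrt{\nu_D}$. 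Squaring yields the claimed $\nu_D\cdot(2\sqrt{\tau_c}+2\tau_q+1)^2$.

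The main technical obstacle will be step (ii): tracking the state through an arbitrary interleaving of pseudo-classical and quantum queries while keeping $D$ factored out cleanly. One must be careful that the conditioning on the $0$-branch across multiple pseudo-classical queries does not bias the distribution over $f$ in a way that breaks the clean $\nu_D$ bookkeeping, and that the Cauchy–Schwarz step that converts $\tau_c$ per-query $\nu_D$-bounds into a single $\sqrt{\tau_c \nu_D}$ progress increment is justified for any adaptive schedule of queries. This is precisely the place where our generalization departs from the uniform/unique case in~\cite{Ros22}: we must argue that the only feature of $D$ that survives the progress analysis is the operator norm quantity $\nu_D = \sup_{\varphi}\mathbb{E}_f\|\pi_f\varphi\|^2$, and that this quantity correctly upper-bounds both $\mathbb{E}_f\|\pi_f|\psi_\emptyset^t\rangle\|^2$ (for an $f$-independent state) and the expected $1$-branch weight after each pseudo-classical query (for the correlated conditional state).
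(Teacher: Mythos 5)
Your skeleton matches the paper's: reduce classical to pseudo-classical queries, compare the $0$-branch of the run on $f$ against the run on $\fnull$, and factor the distribution out through $\solnl$ by noting that $\expt_f\norm{\pi_f\varphi}^2\le\solnl$ for any fixed unit vector $\varphi$. The quantum-query step (i) and the final conversion are essentially right (modulo defining the progress as $\sqrt{\expt_f\norm{\cdot}^2}$ rather than $\expt_f\norm{\cdot}$ so that Jensen points the right way). But there is a genuine gap exactly where you flag difficulty, in step (ii), and the mechanism you propose does not close it. With a single scalar progress measure $P_t=\expt_f\norm{\psi_f^t-\psi_\emptyset^t}$, a pseudo-classical query gives, by the triangle inequality, $P_{t+1}\le P_t+\expt_f\norm{\Pi_f\psi_\emptyset^t}\le P_t+\sqrt{\solnl}$, i.e.\ an \emph{additive} $\sqrt{\solnl}$ per classical query and hence $\cquery\sqrt{\solnl}$ in total --- which after squaring yields $\cquery^2\solnl$, not the claimed $4\cquery\solnl$. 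The Cauchy--Schwarz ``across query steps'' you invoke cannot rescue this: the quantity $\sum_{t\in T_c}\expt_f\norm{\Pi_f\psi_\emptyset^t}^2$ is not bounded by $\solnl$ (each summand can individually equal $\solnl$, e.g.\ by querying a fresh point each round in the unique-marked-item case), so that inequality again returns $\cquery\sqrt{\solnl}$. The $\sqrt{\cquery}$ scaling is a Pythagorean phenomenon: the per-query deviations caused by classical queries add in \emph{square}, and a quantum query interleaved between them can rotate previously accumulated orthogonal deviation back into a coherently-adding direction, so the accumulation rule genuinely depends on the ordering of the two query types.

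This is precisely what the paper's two-component progress measure handles and what a one-dimensional distance cannot. The paper tracks $A_f^{(t)}=|\langle\phi^{(t)},\psi_f^{(t)}\rangle|^2$ together with the orthogonal mass $B_f^{(t)}=\norm{\psi_f^{(t)}}^2-A_f^{(t)}$; a pseudo-classical query increases $B$ by at most $\solnl$ \emph{additively} (in the square), and only a budget $z_t\le B^{(t)}$ of that mass can be converted into a decrease of $A$ of size $2\sqrt{\solnl z_t}$. The $\sqrt{\cquery}$ then comes from Cauchy--Schwarz applied to $\sum_{t\in T_c}\sqrt{z_{t-1}}\le\sqrt{\cquery\sum_t z_{t-1}}$, where $\sum_t z_{t-1}$ is in turn controlled by the recursion for $B$ --- a quantity your single measure never defines. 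On top of this, one still needs an exchange (greedy) argument showing that front-loading all classical queries is the worst-case schedule, in order to bound the cross terms $\sum_{t\in T_q}\sqrt{B^{(t-1)}}$ for an arbitrary adaptive interleaving. To complete your proof you would need to reintroduce this second bookkeeping quantity (or an equivalent decomposition of $\psi_f^{(t)}$ into its component along $\phi^{(t)}$ and the rest) and the scheduling argument; as written, the proposal proves at best the weaker bound $\solnl\cdot(\cquery+2\qquery+1)^2$.
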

\vspace{-1em}

Next, we turn to proving the above result.

%--------------------------%
% \subsection{Proving 

\subsection{Hardness of \texorpdfstring{$\dsearch$}{}}
\label{sec:proof} 
%--------------------------%

% This section is dedicated to proving~\cref{thm:dsearch} above.

%--------------------------%
% \subsubsection{Overview and notation}
\subsubsection{Preliminaries and Overview}
\label{sec:notation} 
%--------------------------%

We first formally describe an oracle function for the case of quantum
and pseudo-classical queries.

\begin{definition}[Query Operators] \label{def:query_operators} We
  define the following operators, which describe the actions of
  quantum and pseudo-classical oracles for a hybrid algorithm given a boolean
  function $f$.
  \begin{itemize}
  \item A pseudo-classical oracle is described by
    \begin{equation*}
      \coracle_{f, b} \defeq \sum_{x: f(x)=b} \ket{x}\bra{x} \otimes \identity \otimes \ket{b}
    \end{equation*}
  \item A quantum oracle is described by
    \begin{equation*}
      \qoracle_f \defeq \sum_{x, b} |x\rangle \langle x| \otimes \identity \otimes \ket{b \oplus f(x)} \langle b |
    \end{equation*}      
  \end{itemize}
\end{definition}

We denote $\Pi_f \defeq \pi_f\otimes \identity$ ($\identity$ operates on the output and ancilla registers) and $\Pi_f^\perp \defeq \identity - \Pi_f$ ($\identity$ operates on the entire system). Then on a pseudo-classical query, the two operators
$\coracle_{f,0} = \Pi_f^\perp \otimes \ket{0}$ and
$\coracle_{f,1} = \Pi_f \otimes \ket{1}$ correspond to the two
possible measurement outcomes. It is more convenient to answer
quantum queries by the corresponding phase oracle:
\[ \qoracle_f \defeq \identity - 2 \Pi_f \, . \] 
This can be seen as
setting the output register of the standard oracle in $\ket{-}$, and
as a result, a quantum query flips the signs of the $1$-preimages. %i.e.,
%$\qoracle_f \sum_x\alpha_x \ket{x} = \sum_{x: f(x) = 0} \alpha_x
%\ket{x} + \sum_{x: {f(x)=1}} (-1)\alpha_x\ket{x}$.

When running a hybrid query algorithm with $f$, we will keep track of
the (sub-normalized) pure state $\psi_f^{(t)}$, which denotes the state of the algorithm on input $f$ after $t$ queries in the situation where
every pseudo-classical query measures $0$ (we will call this the
$0$-branch of $\adv^f$). Namely, consider an arbitrary algorithm with at
most $\aquery$ queries ($\qquery$ quantum and $\cquery$
pseudo-classical) specified by a sequence of unitary
operators\footnote{Dimensions may grow depending on the arrangement of
 the pseudo-classical queries.}
$(U^{(0)},U^{(1)},\ldots, U^{(\aquery)})$. Let
$T_c = \{t: \text{$t$-th query is pseudo-classical}\}$ and
$T_q = \{ t: \text{$t$-th query is quantum}\}$. Then $\psi_f^{(t)}$
is defined recursively by
\begin{equation} \label{eq:def_state_evolution}
\psi_f^{(t)} \defeq \begin{cases}
                     U^{(t)}\coracle_{f,0} \psi_f^{(t-1)}, & \text{if }  t\in T_c \, ;\\
                     U^{(t)}\qoracle_f \psi_f^{(t-1)} & \text{if }
                                                        t\in T_q \, .
                  \end{cases}     
\end{equation}

From this definition, the projection of $\psi_f^{(t)}$ under $\Pi_f^\perp$
characterizes the event that an algorithm fails to find a $1$-preimage. 

\begin{lemma} For any algorithm $\adv$, the failure probability of
  finding a $1$-preimage of $f$ after $t$ queries is
  \[ \pfail_f^{(t)} =\Pr[f(x) \ne 1 : x\gets \adv^f] \ge
    \norm{\Pi_f^{\perp} \psi_f^{(t)}}^2\, . \] Hence, the failure
  probability with respect to distribution $\distr$ satisfies
  \[ \pfail_\distr^{(t)} = \expt_{f\gets D} \pfail_f^{(t)}\ge
    \expt_{f\gets \distr}\norm{ \Pi_f^{\perp} \psi_f^{(t)}}^2 \, .\]
  \label{lemma:failure prob}
\end{lemma}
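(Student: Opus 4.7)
The plan is to prove the inequality by exhibiting a single failure event whose probability equals $\norm{\Pi_f^\perp \psi_f^{(t)}}^2$ and noting that this event is only one of several events contributing to $\pfail_f^{(t)}$. The key observation is that the hybrid execution naturally decomposes into branches indexed by the outcomes of its pseudo-classical queries, and the sub-normalized state $\psi_f^{(t)}$ defined in~\eqref{eq:def_state_evolution} captures exactly the branch in which every pseudo-classical query returns outcome $0$.

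First, I would justify that a hybrid algorithm making genuine classical queries can, without loss of generality, be modeled using the pseudo-classical query operators $\coracle_{f,b}$ of Definition~\ref{def:query_operators}. Concretely, each classical query is implemented by writing the classical input into a quantum register, applying the standard oracle, and measuring the output register; by deferring all final measurements one may track the entire process coherently through the recursion in~\eqref{eq:def_state_evolution}. Under this model, $\norm{\psi_f^{(t)}}^2$ is precisely the probability that every pseudo-classical query issued within the first $t$ steps of $\adv^f$ yields outcome $0$.

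Next, within this $0$-branch, the algorithm's final output $x$ is obtained by measuring the input register of the (normalized) state $\psi_f^{(t)}/\norm{\psi_f^{(t)}}$. The projector $\Pi_f^\perp$ captures exactly the event that the measured $x$ satisfies $f(x)\neq 1$, so the conditional failure probability in this branch is $\norm{\Pi_f^\perp \psi_f^{(t)}}^2/\norm{\psi_f^{(t)}}^2$. Multiplying by the branch probability $\norm{\psi_f^{(t)}}^2$ yields the joint probability $\norm{\Pi_f^\perp \psi_f^{(t)}}^2$ for the specific failure event ``all pseudo-classical outcomes are $0$ and the final output is not a preimage of $1$''. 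Since failures in the remaining branches only add non-negatively to the total, I would conclude $\pfail_f^{(t)} \geq \norm{\Pi_f^\perp \psi_f^{(t)}}^2$.

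The distributional inequality then follows immediately by taking expectation over $f\gets D$ and applying linearity. The only conceptual subtlety I anticipate is the coherent modelling step that reinterprets genuine classical queries as pseudo-classical ones, together with the commitment that the algorithm is WLOG unitary between queries and measures only at the end; once this standard reduction is granted, the rest reduces to direct Born-rule accounting.
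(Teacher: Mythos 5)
Your proposal is correct and follows exactly the reasoning the paper relies on: the paper states this lemma without a written proof, justifying it only by the remark that $\Pi_f^\perp\psi_f^{(t)}$ ``characterizes the event that an algorithm fails to find a $1$-preimage,'' and your Born-rule accounting (branch probability $\norm{\psi_f^{(t)}}^2$ times conditional failure probability, plus non-negative contributions from the other pseudo-classical branches, followed by linearity of expectation) is precisely the fleshed-out version of that one-line justification. The only point to keep precise is the WLOG step: a genuine classical query refines the pseudo-classical measurement by additionally collapsing the input register, and by deferred measurement this refinement can be absorbed into the algorithm's subsequent unitaries, which is the standard reduction from~\cite{Ros22} that the paper also takes for granted.
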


\vspace{-.1in}
Thus, our goal 
% thus 
becomes lower-bounding
$\norm{\Pi_f^{\perp} \psi_f^{(t)}}$. To do this, we consider running
the same algorithm, but with a null function:
\[ \fnull: x\mapsto 0,\forall x\in X \, .\] In this case, a quantum
query is equivalent to applying identity (denoted
$Q_\emptyset \defeq \identity$), and a pseudo-classical query does not
tamper the input state either, but just appends $\ket{0}$. To be
precise, we define
\[\coracle_{\emptyset,0}\defeq \identity \otimes \ket{0} \, ,\]
and at each step $t\ge 0$, the state of the algorithm denoted by $\phi^{(t)}$ can be described as:
\begin{equation*}
  \phi^{(t)} =
  \begin{cases}
    U^{(t)}\coracle_{\emptyset,0} \phi^{(t-1)}, & \text{if }  t\in T_c \, ;\\
    U^{(t)} \phi^{(t-1)} & \text{if } t\in T_q \, .
  \end{cases}
\end{equation*}
              
Without loss of generality we assume initially
$\psi_f^{(0)} = \phi^{(0)} = \ket{0}$, and hence
$\norm{\Pi_f^\perp \psi_f^{(0)}} = \norm{\Pi_f^\perp\phi^{(0)}} =
1$. In order to succeed, algorithm $\adv^f$ needs to move
$\psi_f^{(t)}$ away from the kernel of $\Pi_f^\perp$ or reduce its
norm. This motivates defining the progress measures below.

\begin{definition}[Progress Measures $(A^{(t)}, B^{(t)})$] \label{def:progress_measures}
For any
  function $f$ and $t \ge 0$, define
  \begin{equation*}
    A_f^{(t)} \defeq  \left|\langle \phi^{(t)},
      \psi_f^{(t)}\rangle\right|^2 \, , \quad 
    B_f^{(t)} \defeq  \left\| \psi_f^{(t)} \right\|^2 - \left|\langle
      \phi^{(t)} , \psi_f^{(t)}\rangle \right|^2 \, .
  \end{equation*}
  Given a distribution $\distr$ on $\ff$, define the expected progress
  measures by
  \begin{equation*}
    A_\distr^{(t)} \defeq \expt_{f\gets \distr} \left( A_f^{(t)} \right)\, , \quad 
    B_\distr^{(t)} \defeq \expt_{f\gets \distr} \left( B_f^{(t)} \right) \, .
  \end{equation*} 
\label{def:2progress}
\end{definition}
% \fsnote{draw a projection picture to visualize the two quantities.}
\vspace{-1em}
Notice that
\begin{equation*}
  A_f^{(t)} + B_f^{(t)} = \norm{\psi_f^{(t)}}^2 , \quad A_f^{(0)} = 1
  , \quad B_f^{(0)} = 0 \, .
\end{equation*}

%\Cojo{Should we use in the below paragraph the notations $A_\distr^{(t)}, B_\distr^{(t)}$?}
We will show that $A_\distr^{(t)} - B_\distr^{(t)}$ essentially lower bounds the
failure probability $\pfail_\distr^{(t)}$ (\cref{lemma:pfail}). Hence, an
algorithm's objective would be to \emph{reduce} $A_\distr^{(t)}$ and
\emph{increase} $B_\distr^{(t)}$. However, we can limit how much change can
occur after $\aquery$ queries (\cref{prop:progressbound}). This is by
carefully analyzing the effect of each quantum or pseudo-classical query
(\cref{lemma:progress_fixed,lemma:progress_general}). Roughly
speaking,

%\begin{itemize}
\begin{tiret}
\item A quantum query reduces $A_\distr^{(t)}$ by at most
  $4 \sqrt{\solnl \cdot B_\distr^{(t)}}$ and increases $B_\distr^{(t)}$ by the same
  amount (as a quantum query does not affect $\norm{\psi_f^{(t)}}^2$), and
\item A pseudo-classical query increases $B_\distr^{(t)}$ by at most
  $\solnl$, while a part $z^{(t)}$ of $B_\distr^{(t)}$ can also be spent to
  decrease $A_\distr^{(t)}$ by $\sqrt{\solnl \cdot z^{(t)}}$.
%\end{itemize}
\end{tiret}

%For convenience, we collect a few important notations here.

\begin{table}[h!]
%\begin{table}[H]
\centering
\begin{tabular}{c | c } 
  \hline\hline
  $\pi_f$  & $\sum_{x:f(x) = 1} \ket{x}\bra{x}$  \\
  \hline
  $\Pi_f$  & $\pi_f\otimes \identity$ ($\identity$ on ancilla registers) \\
  \hline
  $\pfail_f$ & $\Pr[f(x) \ne 1 : x\gets \adv^f]$ \text{(Failure
               probability with $f$)} \\
  \hline
  $\pfail_\distr$ & $\expt_\distr \pfail_f$ \text{(Failure probability
                    with $f\gets \distr$)} \\
  \hline
  $\phi^{(0)} = \psi^{(0)}$ & \text{Initial state} \\
  \hline
  $\phi^{(t)}$ & \text{State after $t$-th
                 query in $\adv^{\fnull}$} \\
  \hline
  $\psi_f^{(t)}$ & \text{State on the 0-branch after $t$-th
                   query in $\adv^{f}$} \\
  \hline
  $\qoracle_{f}$   & $\identity - 2\Pi_f$ \text(quantum
                     oracle of $f$)\\
  \hline
  $\qoracle_{\emptyset}$   & $\identity$ \text(quantum
                             oracle of $\fnull$)\\
  \hline
  $\coracle_{f,0}$  & $\Pi_f^\perp \otimes \ket{0}$ \text(pseudo-classical
                      oracle of $f$)\\
  \hline
  $\coracle_{f,1}$   & $\Pi_f \otimes \ket{1}$
                       \text(pseudo-classical oracle of $f$)\\
  \hline
  $\coracle_{\emptyset,0}$  & $\identity \otimes \ket{0}$ \text(pseudo-classical
                              oracle of $\fnull$)\\
  \hline
  $\tproj_f^{(t)}$ & $\norm{\Pi_f \phi^{(t)}}^2$ \\
  \hline
  $\tproj^{(t)}$ & $\expt_\distr (\tproj_f^{(t)})$ \\
  \hline\hline  
\end{tabular}
\caption{Summary of variables and quantities used in our $\dsearch$
  analysis.}
% Notation for analysis of $\dsearch$.}
\label{table:notations}
\end{table}

%\vspace{-.4in} 
%--------------------------%
\subsubsection{Proof of Theorem~\ref{thm:main}}
\label{sec:mainproof} 
%--------------------------%

First off, we state the Cauchy-Schwarz inequality for random variables and derive a lemma that is useful in several places. 

\begin{lemma}[Cauchy-Schwarz] For any random variables $X$, $Y$, it
  holds that:
\begin{equation*}
    \left| \expt{[X Y]} \right|^2 \leq \expt{[X^2]}\cdot \expt{[Y^2]} .
\end{equation*}
\label{lemma:cs}
\end{lemma}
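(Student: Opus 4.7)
The plan is to prove this by the textbook discriminant argument. First I would introduce a real parameter $\lambda \in \mathbb{R}$ and observe the pointwise inequality $(X - \lambda Y)^2 \ge 0$, which by linearity of expectation yields the quadratic inequality
\[ \expt[Y^2]\, \lambda^2 \;-\; 2\,\expt[XY]\, \lambda \;+\; \expt[X^2] \;\ge\; 0 \qquad \text{for all } \lambda \in \mathbb{R}. \]
If $\expt[X^2] = \infty$ or $\expt[Y^2] = \infty$ the claim is vacuous, so we may as well assume both are finite (which in turn implies $\expt[XY]$ is finite by the elementary bound $|XY| \le (X^2+Y^2)/2$).

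Next I would split into two cases according to whether $\expt[Y^2]$ vanishes. If $\expt[Y^2]=0$, then $Y=0$ almost surely, hence $XY=0$ almost surely and $\expt[XY]=0$, so the claimed bound reduces to $0\le 0$. Otherwise, the left-hand side above is an upward-opening parabola in $\lambda$ that is non-negative everywhere, forcing its discriminant to be non-positive:
\[ 4\,\expt[XY]^2 \;-\; 4\,\expt[X^2]\,\expt[Y^2] \;\le\; 0, \]
which rearranges to $|\expt[XY]|^2 \le \expt[X^2]\cdot \expt[Y^2]$ as stated. Equivalently, one may simply plug in the minimizer $\lambda^\ast = \expt[XY]/\expt[Y^2]$ into the quadratic and read off the bound directly.

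I do not anticipate any real obstacle; the only small subtlety is handling the degenerate case $\expt[Y^2]=0$ separately so that the choice $\lambda^\ast = \expt[XY]/\expt[Y^2]$ does not involve division by zero. An equally clean alternative would be to instantiate the abstract Cauchy-Schwarz inequality on the semi-inner product $\langle X,Y\rangle := \expt[XY]$ on the space of square-integrable random variables, but I would prefer the elementary polynomial argument above since it is self-contained and makes no appeal to $L^2$-theory.
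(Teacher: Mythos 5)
Your proof is correct; the paper itself states this lemma as a standard fact without proof, and your discriminant argument (including the careful treatment of the degenerate case $\expt[Y^2]=0$) is the canonical way to establish it. Nothing further is needed.
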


\vspace{-.3in} 

As a direct consequence of the Cauchy-Schwarz inequality, we have the following useful corollary.
%\vspace{-.1in} 
\begin{corollary}
Let $Z$ be a discrete random variable, and
% . Let 
$g(Z)$ and $h(Z)$ be two
non-negative functions. Then it holds that: 
\begin{equation*} \expt_{Z} \left(\sqrt{g(Z)\cdot h(Z)}\right) \le \sqrt{\expt_Z g(Z)
    \cdot \expt_Zh(Z)} \, .
\end{equation*}
\label{lemma:expproduct}
\end{corollary}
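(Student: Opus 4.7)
The plan is to derive this corollary as an immediate substitution into the Cauchy-Schwarz inequality stated in Lemma~\ref{lemma:cs}. Since $g$ and $h$ are assumed to be non-negative functions, the square roots $\sqrt{g(Z)}$ and $\sqrt{h(Z)}$ are well-defined real-valued random variables, so I can set $X \defeq \sqrt{g(Z)}$ and $Y \defeq \sqrt{h(Z)}$ and apply Lemma~\ref{lemma:cs} verbatim.

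The chain of equalities/inequalities I would write is: by Lemma~\ref{lemma:cs},
\begin{equation*}
    \bigl| \expt[\sqrt{g(Z)} \cdot \sqrt{h(Z)}] \bigr|^2 \;\leq\; \expt[g(Z)] \cdot \expt[h(Z)].
\end{equation*}
Using non-negativity of $g(Z)$ and $h(Z)$ to combine the two square roots, the left-hand side equals $\bigl( \expt_Z \sqrt{g(Z)\cdot h(Z)} \bigr)^2$, which is itself non-negative. Hence I may take the (positive) square root of both sides, which preserves the inequality, to obtain the claimed bound.

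There is essentially no obstacle here: the only things to verify are (i) that the square roots are real, which follows from non-negativity of $g,h$, and (ii) that both sides of Cauchy-Schwarz are non-negative so that the square-root operation is monotone. Both are immediate. I would present this as a short two-line proof rather than a separately-structured argument, since the corollary is really just Cauchy-Schwarz applied to $\sqrt{g(Z)}$ and $\sqrt{h(Z)}$; the purpose of stating it separately is to have it in a form directly invocable in the subsequent progress-measure analysis (e.g., where one needs to bound expressions of the form $\expt_f \sqrt{\alpha_f \beta_f}$ by $\sqrt{\expt_f \alpha_f \cdot \expt_f \beta_f}$).
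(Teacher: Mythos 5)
Your proposal is correct and is exactly the derivation the paper intends: the paper states the corollary as a "direct consequence" of Lemma~\ref{lemma:cs} without writing out the substitution, and your choice of $X = \sqrt{g(Z)}$, $Y = \sqrt{h(Z)}$ followed by taking square roots is that intended two-line argument.
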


\vspace{-.3in} 

It will be helpful to consider a
two-dimensional plane in our analysis, which we now define explicitly.

\begin{definition}[Useful 2-D Plane] \label{def:2dplane} For $t\ge 0$,
  let
  \begin{equation*}
    \phi_f^{(t)} \defeq \frac{\Pi_f \phi^{(t)}}{\norm{\Pi_f \phi^{(t)}}}
    = \Pi_f\phi^{(t)} / \sqrt{\tproj_f^{(t)}} \, , \qquad
    \phi_f^{(t)\perp} \defeq \frac{\Pi_f^\perp
      \phi^{(t)}}{\norm{\Pi_f^\perp \phi^{(t)}}} =
    \Pi_f^{\perp}\phi^{(t)} / \sqrt{1 - \tproj_f^{(t)}} \, 
  \end{equation*} be the
  \emph{normalized} vectors resulting of projecting $\phi$ on the orthogonal
  subspaces spanned by $1$ and $0$ preimages of $f$, respectively,
% Define 
  and let $\Phi^{(t)}$ 
  % to 
  be the $2$-dimensional plane spanned by
  $\{\phi_f^{(t)},\phi_f^{(t)\perp}\}$. Then $\phi^{(t)\perp}$ is
  identified as the normalized state perpendicular to $\phi^{(t)}$ in
  $\Phi^{(t)}$, i.e.,
  \begin{equation*}
    \phi^{(t)\perp} \defeq \phi_f^{(t)} \sqrt{1-\tproj_f^{(t)}} - \phi_f^{(t)\perp}
    \sqrt{\tproj_f^{(t)}} \, .
  \end{equation*}  
\end{definition}  

It is 
% then 
useful to decompose $\psi_f^{(t)}$ with respect to
$\Phi^{(t)}$:

\begin{lemma}[Decomposition of $\psi_f^{(t)}$ wrt $\Phi^{(t)}$]
  Let $a$ and $b$ be projecting $\psi_f^{(t)}$ on the plane
  $\Phi^{(t)}$ and then decomposing it under basis
  $\{\phi^{(t)}, \phi^{(t)\perp}\}$, and let $c$ be the remaining component of
  $\psi_f^{(t)}$ orthogonal to $\Phi^{(t)}$, i.e.,
  $c \perp \Phi^{(t)}$. Then $\psi_f^{(t)}$ can be expressed as
  $\psi_f^{(t)} = a + b + c$ with
  \begin{equation*}
    a = \phi^{(t)} \sqrt{A_f^{(t)}}\, , \qquad b = \omega
    \sqrt{B_f^{(t)} -\norm{c}^2}\cdot \phi^{(t)\perp}\, , 
  \end{equation*}
  where $\omega$ is a complex phase (i.e., $|\omega| = 1$) of the
  vector
  $\psi_f^{(t)} - \langle \psi_f^{(t)}, \phi_f^{(t)}\rangle \cdot
  \phi^{(t)} - c$. As a result,
  \begin{equation*}
    \Pi_f^\perp\psi_f^{(t)} =
    \phi_f^{(t)\perp}\left(\sqrt{1 -
        \tproj_f^{(t)}}\sqrt{A_f^{(t)}} - \sqrt{\tproj_f^{(t)}}
      \cdot \omega \sqrt{B_f^{(t)}
        -\norm{c}^2} \right) + c_f^\perp \, ,
  \end{equation*}
  with $c_f^\perp := \Pi_f^{\perp}c$.
  \label{lemma:psidecompose}
\end{lemma}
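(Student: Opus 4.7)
The plan is to derive the decomposition by straightforward orthogonal projection onto the two-dimensional plane $\Phi^{(t)}$, and then substitute the resulting pieces into $\Pi_f^\perp \psi_f^{(t)}$.

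First I would verify that $\{\phi^{(t)}, \phi^{(t)\perp}\}$ is an orthonormal basis of $\Phi^{(t)}$. Since running the algorithm on $\fnull$ yields $\qoracle_\emptyset = \identity$ and $\coracle_{\emptyset,0} = \identity \otimes \ket{0}$, both of which are norm-preserving, we have $\|\phi^{(t)}\| = 1$. Decomposing through $\Pi_f$ and $\Pi_f^\perp$ gives $\phi^{(t)} = \sqrt{\tproj_f^{(t)}}\,\phi_f^{(t)} + \sqrt{1-\tproj_f^{(t)}}\,\phi_f^{(t)\perp}$, so in particular $\phi^{(t)} \in \Phi^{(t)}$, and a direct inner-product check against Definition~\ref{def:2dplane} confirms that $\phi^{(t)\perp}$ is a unit vector in $\Phi^{(t)}$ orthogonal to $\phi^{(t)}$.

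Next I would perform the orthogonal decomposition of $\psi_f^{(t)}$ against this basis. Set $a := \langle \phi^{(t)}, \psi_f^{(t)}\rangle\,\phi^{(t)}$, $b := \langle \phi^{(t)\perp}, \psi_f^{(t)}\rangle\,\phi^{(t)\perp}$, and $c := \psi_f^{(t)} - a - b$, so that $c \perp \Phi^{(t)}$ by construction. By Definition~\ref{def:2progress}, $\|a\|^2 = |\langle\phi^{(t)}, \psi_f^{(t)}\rangle|^2 = A_f^{(t)}$, and Pythagoras then gives $\|\psi_f^{(t)}\|^2 = A_f^{(t)} + \|b\|^2 + \|c\|^2 = A_f^{(t)} + B_f^{(t)}$, whence $\|b\|^2 = B_f^{(t)} - \|c\|^2$ (which is automatically non-negative). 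Writing $b = \omega\sqrt{B_f^{(t)} - \|c\|^2}\,\phi^{(t)\perp}$, where $\omega$ is the unit phase of $\langle\phi^{(t)\perp}, \psi_f^{(t)}\rangle$, recovers the stated form for $b$. The expression $a = \sqrt{A_f^{(t)}}\,\phi^{(t)}$ should be read up to the phase of $\langle \phi^{(t)}, \psi_f^{(t)}\rangle$: since the only downstream use of this decomposition bounds $\|\Pi_f^\perp \psi_f^{(t)}\|$, one may absorb that phase into an overall rotation of $\psi_f^{(t)}$ and fold it into the definition of $\omega$.

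Finally, to obtain the explicit expression for $\Pi_f^\perp\psi_f^{(t)}$, I would compute the action of $\Pi_f^\perp$ on each basis vector. Since $\phi_f^{(t)}$ lies in the image of $\Pi_f$ and $\phi_f^{(t)\perp}$ in the image of $\Pi_f^\perp$, substituting the expansions of $\phi^{(t)}$ and $\phi^{(t)\perp}$ from Definition~\ref{def:2dplane} yields $\Pi_f^\perp\phi^{(t)} = \sqrt{1-\tproj_f^{(t)}}\,\phi_f^{(t)\perp}$ and $\Pi_f^\perp\phi^{(t)\perp} = -\sqrt{\tproj_f^{(t)}}\,\phi_f^{(t)\perp}$. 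Applying $\Pi_f^\perp$ to $\psi_f^{(t)} = a + b + c$, collecting the coefficient of $\phi_f^{(t)\perp}$, and setting $c_f^\perp := \Pi_f^\perp c$ produces the claimed identity. The main obstacle is purely notational: tracking the minus sign generated by $\Pi_f^\perp\phi^{(t)\perp}$ (which is precisely the source of the subtraction in the final formula) and the phase convention relating $a$ and $\omega$; no deeper argument is required beyond the orthogonal decomposition itself.
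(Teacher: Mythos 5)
Your proof is correct and is exactly the standard orthogonal-projection argument the lemma is implicitly relying on (the paper states it without proof): identify $\|a\|^2 = A_f^{(t)}$ from Definition~\ref{def:2progress}, get $\|b\|^2 = B_f^{(t)} - \|c\|^2$ by Pythagoras, and apply $\Pi_f^\perp$ to the basis vectors of $\Phi^{(t)}$, with the sign coming from $\Pi_f^\perp\phi^{(t)\perp} = -\sqrt{\tproj_f^{(t)}}\,\phi_f^{(t)\perp}$. Your handling of the phase of $\langle\phi^{(t)},\psi_f^{(t)}\rangle$ via a global rotation of $\psi_f^{(t)}$ is also the right way to reconcile the literal statement $a=\sqrt{A_f^{(t)}}\,\phi^{(t)}$ with the general complex case.
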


%\Cojo{. 
Intuitively, for the next result, the goal is to relate the failure probability with the progress measures $A$ and $B$. To do so, we will first relate the failure probability with the norm of the non-solution component. By decomposing this norm in terms of the two progress measures A and B and an orthogonal component which can be removed, we can determine a lower bound on the failure probability as a function of the two progress measure after each performed query.
%}

\begin{lemma} For any fixed $f$ and $t\ge 0$,
  \[ \pfail_f^{(t)} \ge A_f^{(t)} - \gamma_f^{(t)} -
    2\sqrt{\gamma_f^{(t)}\cdot B_f^{(t)}} \, . \]
  \label{lemma:pfail_fixed}
\end{lemma}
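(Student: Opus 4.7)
The plan is to combine Lemma~\ref{lemma:failure prob}, which gives $\pfail_f^{(t)} \geq \|\Pi_f^\perp \psi_f^{(t)}\|^2$, with the explicit decomposition of $\Pi_f^\perp \psi_f^{(t)}$ supplied by Lemma~\ref{lemma:psidecompose}. First I would verify that the two summands on the right-hand side of that decomposition---the multiple of $\phi_f^{(t)\perp}$ and the vector $c_f^\perp = \Pi_f^\perp c$---are mutually orthogonal. Since $\phi_f^{(t)\perp}$ lies in the range of $\Pi_f^\perp$ and $c$ is orthogonal to $\Phi^{(t)}$ (and hence to $\phi_f^{(t)\perp}$), we have
\begin{equation*}
\langle c_f^\perp, \phi_f^{(t)\perp}\rangle \;=\; \langle c, \Pi_f^\perp \phi_f^{(t)\perp}\rangle \;=\; \langle c, \phi_f^{(t)\perp}\rangle \;=\; 0,
\end{equation*}
so the squared norm of $\Pi_f^\perp \psi_f^{(t)}$ splits additively into a scalar-squared contribution in the $\phi_f^{(t)\perp}$ direction plus $\|c_f^\perp\|^2$.

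Next, setting $u \defeq \sqrt{(1-\gamma_f^{(t)})\,A_f^{(t)}}$ and $v \defeq \sqrt{\gamma_f^{(t)}(B_f^{(t)} - \|c\|^2)}$ and discarding the nonnegative $\|c_f^\perp\|^2$, I would bound
\begin{equation*}
\|\Pi_f^\perp \psi_f^{(t)}\|^2 \;\geq\; |u - \omega v|^2 \;\geq\; (u - v)^2 \;\geq\; u^2 - 2\,u v,
\end{equation*}
using $|\omega| = 1$ and $u,v \geq 0$ for the second step. The term $u^2 = A_f^{(t)} - \gamma_f^{(t)} A_f^{(t)}$ contributes the main piece. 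For the cross term, note that $A_f^{(t)} = |\langle \phi^{(t)}, \psi_f^{(t)}\rangle|^2 \leq 1$ by Cauchy--Schwarz (since $\|\phi^{(t)}\| = 1$ and $\|\psi_f^{(t)}\| \leq 1$) and $\|c\|^2 \geq 0$, hence $u v \leq \sqrt{\gamma_f^{(t)} B_f^{(t)}}$. A final application of $A_f^{(t)} \leq 1$, in the form $\gamma_f^{(t)} A_f^{(t)} \leq \gamma_f^{(t)}$, yields
\begin{equation*}
\pfail_f^{(t)} \;\geq\; A_f^{(t)} - \gamma_f^{(t)} A_f^{(t)} - 2\sqrt{\gamma_f^{(t)}\,B_f^{(t)}} \;\geq\; A_f^{(t)} - \gamma_f^{(t)} - 2\sqrt{\gamma_f^{(t)}\,B_f^{(t)}},
\end{equation*}
which is the claimed inequality.

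The main obstacle, conceptually, is handling the phase $\omega$: the weakening $|u - \omega v|^2 \geq (u-v)^2$ is only tight in the worst case $\omega$ real and positive, and it must be combined with the orthogonality of $c_f^\perp$ to $\phi_f^{(t)\perp}$ so that no cross term is lost when dropping $\|c_f^\perp\|^2$. The other subtle point is scheduling the bound $A_f^{(t)} \leq 1$: it should be invoked twice, once in the cross-term estimate to eliminate the $\sqrt{A_f^{(t)}}$ factor inside the square root, and once at the very end to trade $\gamma_f^{(t)} A_f^{(t)}$ for $\gamma_f^{(t)}$, so that neither weakening contaminates the leading $A_f^{(t)}$ term of the statement.
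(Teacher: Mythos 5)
Your proposal is correct and follows essentially the same route as the paper: decompose $\Pi_f^\perp\psi_f^{(t)}$ via Lemma~\ref{lemma:psidecompose}, use orthogonality of $c_f^\perp$ to the $\phi_f^{(t)\perp}$ direction, take the worst-case phase $\omega$, and apply $A_f^{(t)}\leq 1$ to clean up the cross term and the $\gamma_f^{(t)}A_f^{(t)}$ term. If anything, your version is slightly more careful than the paper's write-up, which informally adds norms of orthogonal components rather than their squares and speaks of ``choosing'' $c=0,\omega=1$; your additive split of the squared norm and the chain $|u-\omega v|^2\geq (u-v)^2\geq u^2-2uv$ makes the same estimate rigorous.
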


\begin{proof} For convenience, we omit writing the superscript $(t)$ 
% do not write $(t)$ explicitly 
in this proof. We first show that
  $\norm{\pi_f^\perp \psi_f }\ge \sqrt{(1 - \gamma_f)A_f} -
  \sqrt{\gamma_f B_f}$. By~\cref{lemma:psidecompose}, we have that
 \begin{equation*}
    \Pi_f^\perp\psi_f =
    {\phi_f^\perp}\left(\sqrt{1 -
        \tproj_f}\sqrt{A_f} - \sqrt{\tproj_f}
      \cdot \omega \sqrt{B_f
        -\norm{c}^2} \right) + c_f^\perp \, ,
  \end{equation*}
  with $c_f^\perp := \pi_f^{\perp}c$. Since $c \perp \Phi$, it follows
  that
  \[ \langle \phi_f^\perp, c_f^\perp \rangle = \langle \phi_f^\perp ,
    \Pi_f^\perp c \rangle = \langle \Pi_f^\perp \phi_f^\perp , c
    \rangle = \langle \phi_f^\perp , c \rangle = 0 \, .\] We can then
  obtain: 
  % that
  \begin{equation*}
    \norm{\Pi_f^\perp \psi_f} = \left|\sqrt{1 -
        \tproj_f}\cdot \sqrt{A_f} - \sqrt{\tproj_f}
      \cdot \omega \sqrt{B_f
        -\norm{c}^2}\right| + \norm{c_f^\perp}
  \end{equation*}

  Hence by choosing $c = 0, \omega=1$, we get that
  \[ \norm{\Pi_f^\perp \psi_f} \ge \sqrt{(1 - \tproj_f)A_f} -
    \sqrt{\tproj_f B_f} \, .\]
  
  Therefore we can lower bound the failure probability
  \begin{align*}
    \pfail_f & \ge \norm{\pi_f^\perp \psi_f}^2 \\
             & \ge (1 - \tproj_f)A_f - 2\sqrt{(1-\tproj_f)\tproj_f B_f}\\
             & \ge A_f - \tproj_f - 2\sqrt{\tproj_f B_f} \qquad \text{($A_f,\tproj_f \le 1$)}
  \end{align*}
\end{proof}

Taking the expectation over $\distr$, we can express the failure
probability with respect to the distribution.

\begin{lemma} For any distribution $\distr$ and $t\ge 0$,
  \[ \pfail_\distr^{(t)} \ge A^{(t)} - \gamma^{(t)} -
    2\sqrt{\gamma^{(t)}\cdot B^{(t)}} \, . \]
  \label{lemma:pfail}
\end{lemma}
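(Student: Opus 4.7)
The plan is to deduce the distributional bound directly from the pointwise bound proved in \cref{lemma:pfail_fixed} by taking expectations over $f\gets \distr$. Taking expectations of the two linear terms $A_f^{(t)}$ and $\gamma_f^{(t)}$ is immediate from the definitions of $A^{(t)}$ and $\gamma^{(t)}$, so the only delicate step is handling the square-root cross term $\expt_{f\gets \distr}\sqrt{\gamma_f^{(t)}\cdot B_f^{(t)}}$, and doing so in the correct direction of the inequality.

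First I would start from \cref{lemma:pfail_fixed}, which states $\pfail_f^{(t)}\ge A_f^{(t)}-\gamma_f^{(t)}-2\sqrt{\gamma_f^{(t)}\cdot B_f^{(t)}}$ for every $f$, and take expectations on both sides. By definition $\pfail_\distr^{(t)}=\expt_{f\gets \distr}\pfail_f^{(t)}$, $A^{(t)}=\expt_{f\gets\distr}A_f^{(t)}$, $\gamma^{(t)}=\expt_{f\gets\distr}\gamma_f^{(t)}$, and $B^{(t)}=\expt_{f\gets\distr}B_f^{(t)}$, so this yields
\[
\pfail_\distr^{(t)} \;\ge\; A^{(t)}-\gamma^{(t)}-2\,\expt_{f\gets\distr}\sqrt{\gamma_f^{(t)}\cdot B_f^{(t)}}.
\]

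Next, since $\gamma_f^{(t)}=\norm{\Pi_f\phi^{(t)}}^2\ge 0$ and $B_f^{(t)}\ge 0$ (as the definition gives $B_f^{(t)}=\norm{\psi_f^{(t)}}^2-|\langle\phi^{(t)},\psi_f^{(t)}\rangle|^2\ge 0$ by Cauchy--Schwarz), I can apply \cref{lemma:expproduct} with $Z=f$, $g(f)=\gamma_f^{(t)}$, $h(f)=B_f^{(t)}$ to obtain
\[
\expt_{f\gets\distr}\sqrt{\gamma_f^{(t)}\cdot B_f^{(t)}} \;\le\; \sqrt{\expt_{f\gets\distr}\gamma_f^{(t)}\cdot\expt_{f\gets\distr}B_f^{(t)}} \;=\; \sqrt{\gamma^{(t)}\cdot B^{(t)}}.
\]

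Substituting this bound into the previous display (noting that the minus sign reverses the direction of the inequality as required) gives
\[
\pfail_\distr^{(t)} \;\ge\; A^{(t)}-\gamma^{(t)}-2\sqrt{\gamma^{(t)}\cdot B^{(t)}},
\]
which is the claim. The only subtlety to watch is the sign flip when pulling the $-2$ through the Cauchy--Schwarz step; this is the main (and only) obstacle, and it works out exactly because the cross term enters with a negative coefficient.
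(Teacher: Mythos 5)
Your proposal is correct and follows essentially the same route as the paper: take expectations of the pointwise bound from \cref{lemma:pfail_fixed}, use linearity for the $A_f^{(t)}$ and $\gamma_f^{(t)}$ terms, and apply \cref{lemma:expproduct} to push the expectation inside the square root of the cross term, with the sign handled exactly as you describe.
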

\begin{proof}
  \begin{align*}
    \pfail_\distr^{(t)} & = \expt_{f\gets \distr} (\pfail_f^{(t)}) \\
                        & \ge \expt_\distr(A_f^{(t)}) - \expt_\distr(\gamma_f^{(t)}) -
                          2\expt_\distr \sqrt{\gamma_f^{(t)}\cdot B_f^{(t)}}   \qquad
                          \text{(Linearity of expectation)}\\
                        & \ge  A^{(t)} - \gamma^{(t)} -
                          2\sqrt{\expt_\distr(\gamma_f^{(t)})\cdot \expt_\distr(B_f^{(t)})} \qquad
                          \text{(\cref{lemma:expproduct})}\\
                        & =  A^{(t)} - \gamma^{(t)} - 2\sqrt{\gamma^{(t)}\cdot B^{(t)}} 
  \end{align*} 
\end{proof}

We can also relate $\gamma^{(t)}$ to the value $\solnl$ determined by the
distribution $\distr$:

\begin{lemma} For any $t\ge 0$ and any distribution $\distr$, we have:
  $\gamma^{(t)} \le \solnl$.
  \label{lemma:gammavssolnl}
\end{lemma}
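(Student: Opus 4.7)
The plan is to observe that this is essentially a direct consequence of the definition of $\solnl$. Recall that $\gamma_f^{(t)} = \norm{\Pi_f \phi^{(t)}}^2$ and $\gamma^{(t)} = \expt_{f\gets\distr}(\gamma_f^{(t)})$. The crucial point is that $\phi^{(t)}$ is the state obtained by running $\adv$ with the null function $\fnull$, and therefore $\phi^{(t)}$ is a fixed state that does not depend on $f$ at all.

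First I would verify that $\norm{\phi^{(t)}} = 1$ for every $t\ge 0$. This is an easy induction on $t$: we have $\phi^{(0)} = \ket{0}$ which has unit norm, a quantum query on $\fnull$ applies the unitary $\qoracle_{\emptyset} = \identity$ (norm-preserving), and a pseudo-classical query applies $\coracle_{\emptyset,0} = \identity \otimes \ket{0}$ (which only appends an ancilla, also norm-preserving), followed in either case by a unitary $U^{(t)}$. So $\norm{\phi^{(t)}}=1$ for all $t$.

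Next, I would instantiate the supremum in the definition of $\solnl$ with the specific choice $\varphi := \phi^{(t)}$, which is legitimate since $\norm{\phi^{(t)}} \le 1$. Interpreting $\pi_f$ appropriately on the multi-register state (i.e., as $\Pi_f = \pi_f \otimes \identity$), this yields
\begin{equation*}
\gamma^{(t)} = \expt_{f\gets\distr} \norm{\Pi_f \phi^{(t)}}^2 \le \sup_{\varphi : \norm{\varphi}\le 1}\left(\expt_{f\gets\distr}\norm{\pi_f \varphi}^2\right) = \solnl,
\end{equation*}
which is exactly the desired inequality.

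There is essentially no main obstacle here; the only subtlety worth spelling out is that $\phi^{(t)}$ is independent of $f$ (so the supremum over fixed states applies), together with the identification of $\Pi_f$ with $\pi_f$ acting on the appropriate register of $\varphi$. Once those are noted, the bound is immediate from the definition of $\solnl$.
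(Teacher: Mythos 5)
Your overall approach is the same as the paper's---bound $\gamma^{(t)}$ directly against the definition of $\solnl$---and the inequality you want is true, but the step you dismiss as a ``subtlety worth spelling out'' is in fact the entire content of the paper's proof, and as written you have not established it. The supremum defining $\solnl$ ranges over states $\varphi$ of the \emph{input register alone} (this is how it is evaluated in Section~\ref{sec:nu_D}, where $\varphi=\sum_i\alpha_i\ket{i}$ and $\solnl=\max_i\weight_i$), whereas $\phi^{(t)}$ lives on the joint input/output/ancilla system and the relevant projector is $\Pi_f=\pi_f\otimes\identity$. So you cannot literally instantiate $\varphi:=\phi^{(t)}$ in the supremum; you need the separate (true, but not definitional) fact that
\begin{equation*}
\expt_{f\gets\distr}\norm{(\pi_f\otimes\identity)\,\Psi}^2\ \le\ \sup_{\varphi:\norm{\varphi}\le 1}\expt_{f\gets\distr}\norm{\pi_f\varphi}^2
\end{equation*}
for every joint state $\Psi$ with $\norm{\Psi}\le 1$. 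A priori, entanglement between the input register and the ancillas could conceivably help, and ruling this out is the point of the lemma.

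The paper closes this gap by taking the Schmidt decomposition $\phi^{(t)}=\sum_i\alpha_i\ket{u_i}\otimes\ket{v_i}$ across the input versus output-and-ancilla cut, using orthonormality of the $\ket{v_i}$ to obtain $\norm{(\pi_f\otimes\identity)\phi^{(t)}}^2=\sum_i\alpha_i^2\norm{\pi_f\ket{u_i}}^2$, and then bounding each $\expt_f\norm{\pi_f\ket{u_i}}^2$ by $\solnl$ and summing (using $\sum_i\alpha_i^2=1$, which is where your unit-norm induction enters). An equally short alternative: $\expt_f\norm{(\pi_f\otimes\identity)\phi^{(t)}}^2=\mathrm{Tr}\bigl[\bigl(\expt_f\pi_f\bigr)\rho^{(t)}\bigr]$, where $\rho^{(t)}$ is the reduced density matrix of $\phi^{(t)}$ on the input register, and $\solnl$ is exactly the largest eigenvalue of the positive semidefinite operator $\expt_f\pi_f$, so the trace is at most $\solnl\cdot\mathrm{Tr}[\rho^{(t)}]\le\solnl$. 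With either of these arguments inserted, your proof is complete and coincides with the paper's.
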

\begin{proof}
 % \[ \gamma^{(t)} = \expt_{f\gets \distr} \| \Pi_f \phi^{(t)} \|^2 \le
 %   \expt_{f\gets \distr} \norm{\Pi_f}^2 \le \expt_{f\gets \distr}
 %   \norm{\pi_f}^2 = \solnl \, .\]
% \begin{equation}
%     \begin{split}
%     \gamma^{(t)} &:= \expt_{f\gets \distr} \| \Pi_f \phi^{(t)} \|^2 \\
%     &\leq \sup_{u: \norm{u}_2 \leq 1} \expt_{f\gets \distr} \| \Pi_f u \|^2 \leq \sup_{\varphi: \norm{\varphi}_2 \leq 1} \expt_{f\gets \distr} \| \pi_f \varphi \|^2 \\
%     &= \solnl
%     \end{split}
% \end{equation}
\begin{equation*}
    \begin{split}
    \gamma^{(t)} &:= \expt_{f\gets \distr} \norm{ \Pi_f \phi^{(t)}}^2 = \expt_{f\gets \distr} \norm{ (\pi_f \otimes \identity) \phi^{(t)} }^2 \\
%    &\leq \sup_{u: \norm{u}_2 \leq 1} \expt_{f\gets \distr} \| \Pi_f u \|^2 \leq \sup_{\varphi: \norm{\varphi}_2 \leq 1} \expt_{f\gets \distr} \| \pi_f \varphi \|^2 \\
 %   &= \solnl
    \end{split}
\end{equation*}
We write $\phi^{(t)} = \sum_i \alpha_i \ket{u_i} \otimes \ket{v_i}$ under the Schmidt decomposition, where $\alpha_i \geq 0$ such that $\sum_i \alpha_i^2 = 1$ are the Schmidt coefficients, and $\{\ket{u_i}\}$
are orthonormal states on the system of the input register and 
$\{\ket{v_i}\}$
are orthonormal states on the system of output and ancilla registers.

% We will use Schmidt decomposition for the 2-register state 
% %$\phi^{(t)} \in \cH_1 \otimes \cH_2$
% $\phi^{(t)}$
% . %\Cojo{Specify what $\cH_1$ and $\cH_2$ refer to?} 
% Namely, Schmidt decomposition ensures that there exist orthonormal states 
% %$\{\ket{u_i}\} \subset \cH_1$
% $\{\ket{u_i}\}$
% for the first system and orthonormal states 
% %$\{\ket{v_i}\} \subset \cH_2$
% $\{\ket{v_i}\}$
% for the second system such that: $\phi^{(t)} = \sum_i \alpha_i \ket{u_i} \otimes \ket{v_i}$, where $\alpha_i \geq 0$ such that $\sum_i \alpha_i^2 = 1$ (known as Schmidt coefficients). 
Then we can rewrite $\gamma^{(t)} $ as
\begin{equation*}
    \begin{split}
        \gamma^{(t)} &:= \expt_{f\gets \distr} \norm{(\pi_f \otimes \identity) \phi^{(t)} }^2 = \expt_{f\gets \distr} \norm{(\pi_f \otimes \identity) \left(  \sum_i \alpha_i \ket{u_i} \otimes \ket{v_i} \right)}^2 \\
        &= \expt_{f\gets \distr} \norm{ \sum_i \alpha_i (\pi_f \ket{u_i}) \otimes \ket{v_i}}^2   \\
        &= \expt_{f\gets \distr} \sum_i \alpha_i^2 \norm{(\pi_f \ket{u_i}) \otimes \ket{v_i}}^2 \qquad \text{($\ket{v_i}$ are orthogonal)} \\
        &= \expt_{f\gets \distr} \sum_i \alpha_i^2 \norm{\pi_f \ket{u_i}}^2 \cdot \norm{\ket{v_i}}^2 \qquad \text{($\norm{a \otimes b} = \norm{a} \cdot \norm{b}$)} \\
         &= \expt_{f\gets \distr} \sum_i \alpha_i^2 \norm{\pi_f \ket{u_i}}^2 \\ %= \expt_{f\gets \distr} \sum_i \norm{\pi_f (\alpha_i \cdot \ket{u_i})}^2 \\
         &= \sum_i \alpha_i^2 \expt_{f\gets \distr}  \norm{\pi_f \ket{u_i}}^2 \\
         & \leq \sum_i \alpha_i^2 \solnl \qquad \text{(definition of $\solnl$)} \\
         &= \solnl \sum_i \alpha_i^2
         %&\leq \expt_{f\gets \distr} \norm{\sum_i \pi_f (\alpha_i \cdot \ket{u_i})}^2 = \expt_{f\gets \distr} \norm{ \pi_f (\sum_i \alpha_i \cdot \ket{u_i})}^2 \text{as $\sum_i \alpha_i^2 = 1$ $\Rightarrow$} \\
         %& \leq \sup_{\varphi: \norm{\varphi} \leq 1} \expt_{f\gets \distr} \| \pi_f \varphi \|^2 \\
         = \solnl
    \end{split}
\end{equation*}
\end{proof}
\begin{proposition}[Bounding Progress Measures]
\label{prop:bounding} After
  $\aquery = \cquery + \qquery$ queries,
  \[ A^{(\aquery)} \ge 1 - 4 \solnl \cdot (\sqrt{\cquery} + \qquery)^2
    \, , \quad B^{(\aquery)} \le \solnl \cdot (\sqrt{\cquery} +
    2\qquery)^2 \, .\]
  \label{prop:progressbound}
\end{proposition}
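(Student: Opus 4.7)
My plan is to prove both bounds simultaneously by induction on the query index $t$, maintaining the joint invariant
\[
  B^{(t)} \le \solnl\bigl(\sqrt{c^{(t)}} + 2q^{(t)}\bigr)^2, \qquad
  1 - A^{(t)} \le 4\solnl\bigl(\sqrt{c^{(t)}} + q^{(t)}\bigr)^2,
\]
where $c^{(t)}$ and $q^{(t)}$ count the classical and quantum queries used by step $t$. The base case $t=0$ is immediate since $A^{(0)} = 1$ and $B^{(0)} = 0$. The inductive step invokes the per-query estimates from \cref{lemma:progress_fixed,lemma:progress_general} (previewed in the bullets preceding \cref{table:notations}) and splits into a quantum and a classical subcase.

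For a quantum query, the per-query lemma gives $B^{(t+1)} \le B^{(t)} + 4\sqrt{\solnl\,B^{(t)}}$, and, since a unitary query preserves $\|\psi_f^{(t)}\|^2 = A_f^{(t)} + B_f^{(t)}$, an identical increment bounds $(1 - A^{(t+1)}) - (1 - A^{(t)})$. Writing $s^{(t)} := \sqrt{B^{(t)}/\solnl}$ and using $(s+2)^2 \ge s^2 + 4s$ yields $s^{(t+1)} \le s^{(t)} + 2$, and the inductive hypothesis then gives $s^{(t+1)} \le \sqrt{c^{(t)}} + 2(q^{(t)}+1)$, as required. For the $A$-side, substituting the $B^{(t)}$ bound into the increment and expanding $(\sqrt{c^{(t)}} + q^{(t)} + 1)^2$ reduces the desired inequality to the trivial $0 \le \sqrt{c^{(t)}} + 1$.

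For a classical query, the lemma gives $B^{(t+1)} \le B^{(t)} + \solnl$ for any valid spend $z^{(t)} \ge 0$, so $(s^{(t+1)})^2 \le (s^{(t)})^2 + 1$, and the $B$-invariant then follows from the elementary inequality $(\sqrt{c^{(t)}} + 2q^{(t)})^2 + 1 \le (\sqrt{c^{(t)} + 1} + 2q^{(t)})^2$. For the $A$-invariant, the per-query decrement $\sqrt{\solnl\,z^{(t)}}$ must be absorbed into the budget
\[
  4\solnl\bigl[(\sqrt{c^{(t)}+1} + q^{(t)})^2 - (\sqrt{c^{(t)}} + q^{(t)})^2\bigr] \;=\; 4\solnl + 8\solnl\,q^{(t)}\bigl(\sqrt{c^{(t)}+1} - \sqrt{c^{(t)}}\bigr).
\]
Taking $c^{(\aquery)} = \cquery$ and $q^{(\aquery)} = \qquery$ at the end of the induction then delivers both claimed bounds.

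The main obstacle will be this last inequality. The tighter factor $(\sqrt{\cquery} + \qquery)^2$ in the $A$-bound, compared to $(\sqrt{\cquery} + 2\qquery)^2$ in the $B$-bound, reflects that classical spending transfers $B$-mass into $A$-decrease at a strictly limited per-query rate: the detailed analysis of pseudo-classical projections behind \cref{lemma:progress_general} shows $\sqrt{\solnl\,z^{(t)}}$ is always controlled at roughly $4\solnl$ per step, regardless of how large the accumulated $B^{(t)}$ may be. This rules out the naive and too-loose estimate $z^{(t)} \le B^{(t)}$ (which would fail when $\qquery$ is small relative to $\cquery$), and is the real content behind the step-by-step bookkeeping above.
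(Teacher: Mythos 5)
Your handling of the $B$-bound is correct and is a legitimate alternative to the paper's route: where the paper uses an exchange/greedy argument to show that front-loading all classical queries maximizes the auxiliary sequence $(d_t)$ and then solves that recurrence explicitly, your step-by-step invariant $B^{(t)} \le \solnl(\sqrt{c^{(t)}}+2q^{(t)})^2$ closes cleanly in both subcases ($(\sqrt{B}+2\sqrt{\solnl})^2$ for a quantum query, $+\solnl$ for a classical one) and is arguably tidier for that half of the statement.

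The $A$-bound, however, has a genuine gap, and you have located it but not closed it. Lemma~\ref{lemma:progress_general} gives \emph{no} per-step control on the classical spending term beyond $z_t \in [0, B^{(t)}]$; your claim that ``$\sqrt{\solnl\,z^{(t)}}$ is always controlled at roughly $4\solnl$ per step'' does not follow from that lemma and is false in general. A single pseudo-classical query may spend the entire accumulated $B^{(t)}$, which your own invariant only caps at $\solnl(\sqrt{c^{(t)}}+2q^{(t)})^2$, making the one-step decrement of $A$ as large as $2\solnl + 2\solnl(\sqrt{c^{(t)}}+2q^{(t)})$ --- far exceeding the per-step budget $4\solnl\bigl[(\sqrt{c^{(t)}+1}+q^{(t)})^2-(\sqrt{c^{(t)}}+q^{(t)})^2\bigr] \approx 4\solnl$ once $c^{(t)}$ is large. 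The two invariants you carry simply do not record how much of $B^{(t)}$ has already been spent, so the induction cannot be self-sustaining. The paper resolves exactly this point by an \emph{amortized} (global) argument rather than a per-step one: writing $z_{t-1} = \solnl + (b_{t-1}-b_t)$ for $t \in T_c$ and telescoping, the total spend satisfies $\sum_{t\in T_c} z_{t-1} \le \solnl(\cquery + 4\qquery^2 + 4\qquery\sqrt{\cquery})$, and then Cauchy--Schwarz over the $\cquery$ classical steps yields $\sum_{t\in T_c}\sqrt{z_{t-1}} \le \sqrt{\cquery}\cdot\sqrt{\sum_{t\in T_c} z_{t-1}} \le \sqrt{\solnl}\,(\cquery+2\qquery\sqrt{\cquery})$, which is what makes the final summation for $a_\aquery$ come out to $1-4\solnl(\sqrt{\cquery}+\qquery)^2$. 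To repair your argument you would need either this aggregate bookkeeping or an additional potential-function invariant coupling $A^{(t)}$ to the unspent portion of $B^{(t)}$; the two invariants as stated are not enough.
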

Proving~\cref{prop:progressbound} is the most involved step technically speaking. We
present the details separately in~\cref{sec:progress} and here we apply it to prove~\cref{thm:dsearch}.

\begin{proof}[Proof of~\cref{thm:dsearch}]

 %  Plugging the bounds 
 Assuming the bounds above on the two progress measures, we obtain that:
  \begin{align*}    
    \pfail^{(\aquery)} & \ge  1 - 4\gamma^{(\tau)} \cdot (\sqrt{\cquery}
                         + \qquery)^2 - \gamma^{(\tau)} - 2 \gamma^{(t)}\cdot
                         (\sqrt{\cquery} + 2 \qquery)
                         \qquad \text{(\cref{prop:progressbound})} \\
                       & = 1 - \gamma^{(\tau)} \cdot (4(\sqrt{\cquery}+\qquery) + 2\sqrt{\cquery}
                         + 4\qquery + 1)\\
                       & \ge 1 - \gamma^{(\tau)} \cdot (2(\sqrt{\cquery}+\qquery) + 1)^2 \qquad \text{($\cquery \ge 0$)} \\
                       & \ge 1 - \solnl\cdot
                         (2\sqrt{\cquery} + 2\qquery + 1)^2 \qquad 
                         \text{($\gamma^{(\aquery)} \le \solnl$~\cref{lemma:gammavssolnl})}
  \end{align*}
  Therefore,
  \[\succp_{\adv,\distr} \le 1 - \pfail^{(\aquery)} \le  \solnl\cdot
    (2\sqrt{\cquery} + 2\qquery + 1)^2\, .\]
\end{proof}

% A closer look at our proof actually
% % shows 
% reveals that 
% % the 
% its {\em decisional} version,
% namely, distinguishing $f \gets D$ from a constant-zero function
% $\fnull: x \mapsto 0, \forall x\in X$, is hard as well. Here we state it
% without repeating the proof.

% \begin{theorem}[Hardness of $\dsearch$: Decisional version] For any
% adversary $\adv$ making up to $\cquery$ classical queries and
% $\qquery$ quantum queries,
% \[ \dadv{\adv}{\distr}: = \left|\Pr_{ f\gets \distr}[\adv^{f}= 1] -
%     \Pr[\adv^{\fnull}= 1] \right| \le \solnl \cdot (2\sqrt{\cquery}
%   + 2 \qquery + 1)^2 \, .\]
% \label{thm:ddsearch}
% \end{theorem}

%\Cojo{The rest of the Technical Details are presented in Appendix ...}

%\Cojo{Continue with Case Studies.}

%--------------------------%
\subsection{Case Studies} %\Cojo{Move to 10th page} 
\label{sec:cases} 
%--------------------------%

In this section, we will apply our main result to 
% a couple of 
two common
% typical 
function
distributions. As a common ingredient, it will be helpful to consider the following
indicator random variable:
\begin{equation*}
\identity_{x}^f \defeq
  \begin{cases}
    1 & \text{if }  f(x) = 1 \, ;\\
    0 & \text{if }
        f(x) = 0 \, ,    
  \end{cases}
\end{equation*}
for all $f\in \func$ and $x\in X$. Then, for a distribution $\distr$,
\begin{equation*}
  \expt_{f\gets \distr} (\identity_x^f) = \Pr_{f\gets \distr}[f(x) = 1]
  \, .
\end{equation*}

%--------------------------%
%\subsubsection{Case Study: 
\subsubsection{Grover-like Search}
\label{sec:grover} 
%--------------------------%

The first interesting case is a general Grover-type search. We
consider a distribution $\distr_w$ which is \emph{uniform} over
functions that exactly map $w$ inputs to $1$. In other words, drawing
$f\gets \distr_w$ is equivalent to sampling a subset $S\subseteq X$
with $|S| = w$ uniformly at random and set $f(x) = 1$ 
% \emph{iff.}
if and only if
$x\in S$. We consider the resulting multi-uniform search problem:

\begin{mdframed}[linecolor=black!7, backgroundcolor=black!7]
  \begin{center}
    \textbf{Multi-Uniform Search}
  \end{center}
  \noindent\textbf{Given}: $f \gets \distr_w$, which maps a uniform
  size-$w$ subset to 1.

  \noindent\textbf{Goal}: Find $x$ such that $f(x) = 1$.
\end{mdframed}

\begin{theorem}
  For any adversary $\adv$ making up to $\cquery$ classical queries
  and $\qquery$ quantum queries,

  \[ \succp_{\adv,\distr_w} \le \frac{w}{M} \cdot (2\sqrt{\cquery} + 2
    \qquery + 1)^2 \, ,\] where $M = |X|$ is the domain size.
\label{thm:wunifsearch}
\end{theorem}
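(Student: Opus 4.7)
The plan is to reduce this theorem directly to the general hardness bound in Theorem~\ref{thm:main}, so the only real work is to compute the quantity $\solnl$ for the distribution $\distr_w$. Once $\solnl$ is pinned down, substitution into the bound $\solnl \cdot (2\sqrt{\cquery}+2\qquery+1)^2$ finishes the proof mechanically.

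First, I would unpack the definition
$\solnl = \sup_{\varphi : \|\varphi\|\le 1} \expt_{f\gets \distr_w} \|\pi_f \varphi\|^2$.
Using the Schmidt-decomposition trick already exploited in the proof of Lemma~\ref{lemma:gammavssolnl}, it suffices to maximize over unit vectors $\varphi = \sum_x \alpha_x |x\rangle$ on the input register. Then
\begin{equation*}
\|\pi_f \varphi\|^2 \;=\; \sum_{x : f(x)=1} |\alpha_x|^2 \;=\; \sum_{x\in X} |\alpha_x|^2 \cdot \identity_x^f,
\end{equation*}
so by linearity of expectation,
\begin{equation*}
\expt_{f\gets \distr_w} \|\pi_f \varphi\|^2 \;=\; \sum_x |\alpha_x|^2 \cdot \Pr_{f\gets \distr_w}[f(x)=1].
\end{equation*}

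Next I would compute this single-point probability. Since $f\gets \distr_w$ corresponds to choosing the preimage set $S\subseteq X$ uniformly at random among subsets of size $w$, we have $\Pr[x\in S] = w/M$ for every $x\in X$, independently of $x$. Hence
\begin{equation*}
\expt_{f\gets \distr_w} \|\pi_f \varphi\|^2 \;=\; \frac{w}{M} \sum_x |\alpha_x|^2 \;=\; \frac{w}{M}\|\varphi\|^2 \;\le\; \frac{w}{M},
\end{equation*}
with equality for any unit $\varphi$. Therefore $\solnl = w/M$.

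Finally, I would invoke Theorem~\ref{thm:main} with $\solnl = w/M$, which immediately gives the desired bound $\succp_{\adv,\distr_w}\le (w/M)(2\sqrt{\cquery}+2\qquery+1)^2$. There is no serious obstacle here; the only subtlety worth double-checking is the reduction from a general (possibly entangled) $\varphi$ to pure input-register states, which is handled exactly as in Lemma~\ref{lemma:gammavssolnl} via the Schmidt decomposition and the fact that $\Pr[f(x)=1]$ depends only on $x$.
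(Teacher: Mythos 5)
Your proposal is correct and follows essentially the same route as the paper: compute $\solnl$ for $\distr_w$ by expanding an arbitrary unit vector, using linearity of expectation and $\Pr_{f\gets\distr_w}[f(x)=1]=w/M$, and then plug into Theorem~\ref{thm:main}. The only difference is that you explicitly flag the Schmidt-decomposition reduction to input-register states, which the paper's proof of this particular theorem leaves implicit.
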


\begin{proof}
  We just need to show that
  $\solnl = \sup_{\varphi : \norm{\varphi} \leq 1} \expt_{f\gets \distr_w} (\norm{\pi_f \varphi}^2)\le \frac{w}{M}$ in
  this case. Consider an arbitrary unit vector
  $\varphi = \sum_{x} \alpha_x\ket{x}$ with $\sum_x |\alpha_x|^2 = 1$.
  \begin{align*}
    \expt_{f\gets \distr_w} (\norm{\pi_f\varphi}^2) & =
                                                    \expt_{f\gets \distr_w}
                                                    (\left|\sum_x
                                                    \alpha_x
                                                    \identity_x^f
                                                    \ket{x} \right|^2)\\
                                                  & = \sum_x
                                                    |\alpha_x|^2
                                                    \cdot 
                                                    \expt_{f\gets
                                                    \distr_w}(\identity_x^f
                                                    )\\
                                                  & = \sum_x
                                                    |\alpha_x|^2
                                                    \cdot \Pr_{f\gets \distr_w}[ f(x) = 1] \\
                                                  & = \frac{w}{M} \, .
  \end{align*}
\end{proof}

We note two special scenarios. When $w =1$, this 
% recovers
reproduces
Rosmanis's result~\cite{Ros22}, and when $\cquery = 0$, 
% this
our result
reproduces the fully quantum query complexity of Grover search with
multiple marked items (cf.~\cite{BBBV97,Zalka99}). 

%--------------------------%
% \subsubsection{Case Study: 
\subsubsection{Bernoulli Search}
\label{sec:bersearch} 
%--------------------------%

The second
% useful 
interesting case 
% concerns 
is what we call a Bernoulli distribution
$\distr_\berpar$ on $\ff$, as specified below:

\begin{mdframed}[linecolor=black!7, backgroundcolor=black!7]

  \begin{center}
    \textbf{Bernoulli Search}
  \end{center}
  \noindent\textbf{Given}: $f \gets \distr_\berpar$ drawn via the following
  sampling procedure:

  For each $x\in X$, \emph{independently} set

  \[ f(x) = \left\{ \begin{matrix}
                      1, & \text{with probability } \bpara \\
                      0, & \text{otherwise} \\
                    \end{matrix}\right. \, .\]

\noindent\textbf{Goal}: Find $x$ such that $f(x) = 1$.
\end{mdframed}

\begin{theorem} \label{thm:bernoulli}
  For any adversary $\adv$ making up to $\cquery$ classical queries
  and $\qquery$ quantum queries,
  \[ \succp_{\adv,\distr_\eta} \le \berpar \cdot (2\sqrt{\cquery} + 2
    \qquery + 1)^2 \, .\]
\label{thm:bersearch}
\end{theorem}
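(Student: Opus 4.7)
The plan is to apply Theorem~\ref{thm:dsearch} directly, so the only task is to upper bound the quantity $\solnl$ associated with the Bernoulli distribution $\distr_\berpar$. I expect to show $\solnl \le \berpar$, after which plugging into Theorem~\ref{thm:dsearch} immediately yields the claimed bound $\berpar \cdot (2\sqrt{\cquery} + 2\qquery + 1)^2$.

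To bound $\solnl$, I would mimic the calculation in the proof of Theorem~\ref{thm:wunifsearch}. Fix an arbitrary unit vector $\varphi$ on the input register (the same argument extends routinely to states on input together with ancilla registers, by expanding $\varphi = \sum_{x,a} \alpha_{x,a} \ket{x,a}$ and observing that $\pi_f\otimes\identity$ acts diagonally in $x$). Write $\varphi = \sum_x \alpha_x \ket{x}$ with $\sum_x |\alpha_x|^2 \le 1$, and use the indicator $\identity_x^f$ introduced in Section~\ref{sec:cases} to express
\begin{equation*}
\expt_{f \gets \distr_\berpar} \norm{\pi_f \varphi}^2
= \expt_{f \gets \distr_\berpar} \left\|\sum_x \alpha_x \identity_x^f \ket{x}\right\|^2
= \sum_x |\alpha_x|^2 \cdot \expt_{f \gets \distr_\berpar}(\identity_x^f).
\end{equation*}

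The key structural observation, which is where the Bernoulli sampling enters, is that only the \emph{marginal} probability $\Pr_{f \gets \distr_\berpar}[f(x) = 1]$ matters here, not the joint distribution across different $x$'s. Since each coordinate is marked independently with probability $\berpar$, we have $\expt(\identity_x^f) = \berpar$ for every $x$, and thus $\expt \norm{\pi_f \varphi}^2 = \berpar \sum_x |\alpha_x|^2 \le \berpar$. Taking the supremum over unit vectors $\varphi$ gives $\solnl \le \berpar$.

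There is essentially no technical obstacle here; the independence assumption of $\distr_\berpar$ is \emph{stronger} than needed, since the bound only depends on single-coordinate marginals. The proof is then completed by a single line invoking Theorem~\ref{thm:dsearch} with $\solnl \le \berpar$.
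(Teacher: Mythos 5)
Your proposal is correct and follows essentially the same route as the paper: both reduce the theorem to showing $\solnl \le \berpar$ via the expansion $\expt_{f\gets \distr_\berpar}\norm{\pi_f\varphi}^2 = \sum_x |\alpha_x|^2\Pr_{f\gets\distr_\berpar}[f(x)=1] = \berpar$ and then invoke Theorem~\ref{thm:dsearch}. Your added observation that only the single-coordinate marginals matter (so independence is stronger than needed) is accurate and consistent with how the paper's calculation actually works.
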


\begin{proof}
  Consider an arbitrary unit vector
  $\varphi = \sum_{x} \alpha_x\ket{x}$ with $\sum_x |\alpha_x|^2 =
  1$. Again, we just need to show that
  $\expt_{f\gets \distr_\berpar} (\norm{\pi_f \varphi}^2 ) \le
  \berpar$. 
  Similarly as above, 
  \begin{equation*}
    \expt_{f\gets \distr_\berpar} (\norm{\pi_f\varphi}^2 )= \sum_x |\alpha_x|^2
    \cdot \Pr_{f\gets \distr_\berpar}[ f(x) = 1]  = \berpar \, .    
  \end{equation*}
\end{proof}

Note that when $\cquery = 0$, this bound reproduces the complexity of
Bernoulli Search using fully quantum queries % in the
% literature
(cf.~\cite{HRS16,ARU14}).

%--------------------------%
\subsection{Bounding the Progress Measures
  (Proposition~\ref{prop:bounding})} %
\label{sec:progress}
% --------------------------%

We repeat the proposition statement for convenience here:
% \begin{proposition}[Bounding the Progress Measures]

\noindent {\bf Proposition~\ref{prop:bounding} (Bounding the Progress Measures).}
{\em After
  $\aquery = \cquery + \qquery$ queries,
  \[ A^{(\aquery)} \ge 1 - 4 \solnl \cdot (\sqrt{\cquery} + \qquery)^2
    \, , \quad B^{(\aquery)} \le \solnl \cdot (\sqrt{\cquery} +
    2\qquery)^2 \, .\] } %EM

First, we consider a fixed function $f$, and bound how much each query
can possibly reduce $A_f^{(t)}$ and increase $B_f^{(t)}$.
\begin{lemma}[Progress Measures for a Fixed
  Function] \label{lemma:progress_fixed} For every $t$
  the progress measures after the $t + 1$-th query satisfy the
  following recurrent relations:
\begin{itemize}
\item If the $t + 1$-th query is \emph{pseudo-classical}, then there
  exists a sequence $\left(z_f^{(t)}\right)_{t \geq 0}$, satisfying
  $0 \leq z_f^{t} \leq B_f^{(t)}$, such that:
  \begin{equation}
    %\boxed{
      \begin{split}
        A_f^{(t + 1)} &\geq A_f^{(t)} - 2 \gamma_f^{(t)} - 2 \cdot \sqrt{z_f^{(t)}} \cdot \sqrt{\gamma_f^{(t)}} \\
        B_f^{(t + 1)} &\leq B_f^{(t)} + \gamma_f^{(t)} - z_f^{(t)}
      \end{split}
    %}
      \end{equation}
    \item If the $t + 1$-th query is \emph{quantum}, then:
        \begin{equation}
           % \boxed{
            \begin{split}
                A_f^{(t + 1)} &\geq A_f^{(t)} - 4 \gamma_f^{(t)} - 4 \cdot \sqrt{B_f^{(t)}} \cdot  \sqrt{\gamma_f^{(t)}} \\
               B_f^{(t + 1)} &\leq B_f^{(t)} +  4 \gamma_f^{(t)} + 4 \cdot \sqrt{B_f^{(t)}} \cdot \sqrt{\gamma_f^{(t)}}
            \end{split}
           % }
    \end{equation}
\end{itemize}
\end{lemma}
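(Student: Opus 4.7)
The plan is to compute the per-query effect on $A_f$ and $B_f$ directly via the $2$-D decomposition from the preceding lemma, reading off the changes by elementary algebra in the plane $\Phi^{(t)}$. First I would pick coordinates: write $\psi_f^{(t)} = \alpha\phi^{(t)} + \eta\phi^{(t)\perp} + c$ with $c\perp\Phi^{(t)}$ and $|\eta|^2 = B_f^{(t)} - \|c\|^2$, so that in the equivalent in-plane basis $\{\phi_f^{(t)},\phi_f^{(t)\perp}\}$ the in-plane coefficients are $p=\alpha\sqrt{\gamma_f^{(t)}}+\eta\sqrt{1-\gamma_f^{(t)}}$ and $q=\alpha\sqrt{1-\gamma_f^{(t)}}-\eta\sqrt{\gamma_f^{(t)}}$. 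Because the outer unitary $U^{(t+1)}$ preserves inner products and norms, it suffices to analyze the oracle step itself: a quantum step applies $\identity-2\Pi_f$, producing $\langle\phi^{(t+1)},\psi_f^{(t+1)}\rangle = \alpha-2\beta$ with $\beta:=\langle\phi^{(t)},\Pi_f\psi_f^{(t)}\rangle = \gamma_f^{(t)}\alpha+\eta\sqrt{\gamma_f^{(t)}(1-\gamma_f^{(t)})}$, while preserving $\|\psi_f^{(t)}\|$ so that $\Delta B=-\Delta A$; a pseudo-classical step applies $\Pi_f^\perp\otimes\ket{0}$, which kills the $\phi_f^{(t)}$ direction in the plane and yields the clean formulas $A_f^{(t+1)}=(1-\gamma_f^{(t)})|q|^2$ and $B_f^{(t+1)}=\gamma_f^{(t)}|q|^2+\|\Pi_f^\perp c\|^2$.

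For the quantum case, expanding $|\alpha-2\beta|^2$ gives $-\Delta A \leq 4|\alpha||\beta|$. Bounding $|\beta|\leq \gamma_f^{(t)}|\alpha|+\sqrt{\gamma_f^{(t)}}|\eta|$ by the triangle inequality, and applying the subnormalization $|\alpha|\leq 1$ together with $|\eta|^2\leq B_f^{(t)}$, yields the key estimate $|\alpha||\beta|\leq \gamma_f^{(t)}+\sqrt{\gamma_f^{(t)} B_f^{(t)}}$. Combined with $\Delta B=-\Delta A$ this delivers both required inequalities for the quantum case.

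For the pseudo-classical case, set $D:=A_f^{(t)}-A_f^{(t+1)}$ and choose $z_f^{(t)}:=\max\{0,\,(D-2\gamma_f^{(t)})^2/(4\gamma_f^{(t)})\}$. Inequality (a) then holds by construction (equality when $D>2\gamma_f^{(t)}$, trivial otherwise). For (b) and the range $z_f^{(t)}\in[0,B_f^{(t)}]$, I would establish two small in-plane estimates: the Cauchy-Schwarz bound $\sqrt{\gamma_f^{(t)}}|p|+\sqrt{1-\gamma_f^{(t)}}|q|\leq\sqrt{|p|^2+|q|^2}\leq 1$, which after a quadratic rearrangement implies $D\leq 2\sqrt{\gamma_f^{(t)}}\,\|\Pi_f\psi_f^{(t)}\|$ and hence $z_f^{(t)}\leq\gamma_f^{(t)}+\|\Pi_f\psi_f^{(t)}\|^2-D$, which is exactly (b); and the triangle inequality $|p|\leq\sqrt{\gamma_f^{(t)}}+|\eta|$, which gives $z_f^{(t)}\leq(|p|-\sqrt{\gamma_f^{(t)}})^2\leq|\eta|^2\leq B_f^{(t)}$.

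The main obstacle will be identifying the right $z_f^{(t)}$ in the pseudo-classical case and verifying that a single choice simultaneously satisfies both inequalities while remaining within $[0,B_f^{(t)}]$. This reduces to the two in-plane estimates above; the generalization from the uniform-marked-input setting to an arbitrary distribution is transparent because the lemma is stated pointwise in $f$, so all one needs is to keep the marked fraction $\gamma_f^{(t)}=\|\Pi_f\phi^{(t)}\|^2$ explicit throughout the calculation.
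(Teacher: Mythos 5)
Your proof is correct and rests on the same machinery as the paper's: the two-dimensional plane of \cref{def:2dplane}, the decomposition of $\psi_f^{(t)}$ from \cref{lemma:psidecompose}, and a per-query computation of how each oracle acts on the in-plane coefficients; the quantum case in particular is essentially the paper's argument (expand $|\alpha-2\beta|^2$, bound the cross term by $4|\alpha||\beta|$, use unitarity to get $\Delta B=-\Delta A$). The one genuine difference is the bookkeeping in the pseudo-classical case. The paper fixes $z_f^{(t)}:=B_f^{(t)}-\norm{\Pi_f^{\perp}c}^2$ up front, for which (a) follows from the explicit formula $A_f^{(t+1)}=(1-\gamma_f^{(t)})\,|q|^2$ and (b) from $E^{(t)}\le 1$; you instead define $z_f^{(t)}$ from the realized decrement $D=A_f^{(t)}-A_f^{(t+1)}$ so that (a) holds by construction, and shift all the work onto (b) and the range constraint. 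Your route then hinges on the single extra estimate $D\le 2\sqrt{\gamma_f^{(t)}}\,\norm{\Pi_f\psi_f^{(t)}}$, which does hold: with $u=\sqrt{\gamma_f^{(t)}}|p|$ and $v=\sqrt{1-\gamma_f^{(t)}}|q|$ one gets $D\le(u+v)^2-v^2=u(u+2v)\le 2u$ since $u+v\le 1$ and $v\le 1$; from it, (b) follows via $B_f^{(t)}-B_f^{(t+1)}=\norm{\Pi_f\psi_f^{(t)}}^2-D$ together with AM--GM, and $z_f^{(t)}\le|\eta|^2\le B_f^{(t)}$ follows as you indicate. The only point to state more carefully is that the step $(|p|-\sqrt{\gamma_f^{(t)}})^2\le|\eta|^2$ presupposes $|p|\ge\sqrt{\gamma_f^{(t)}}$; this is automatic precisely in the nontrivial branch $D>2\gamma_f^{(t)}$ (since $0<D-2\gamma_f^{(t)}\le 2\sqrt{\gamma_f^{(t)}}(|p|-\sqrt{\gamma_f^{(t)}})$ forces it), while otherwise $z_f^{(t)}=0$ and the range constraint is vacuous. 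Net comparison: your choice of $z_f^{(t)}$ makes (a) tight by fiat at the cost of one additional in-plane inequality, whereas the paper's choice lets both bounds drop out of the same decomposition with no extra estimate; both are valid and give the same final recurrences.
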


\begin{proof} We analyze the two cases separately.
  
\paragraph{Pseudo-classical query case.}
If the $(t+1)$-th query is pseudo-classical, according to the evolution
of the state definition (\cref{eq:def_state_evolution}), the states
after the $t + 1$ query are:
\begin{equation*}
  \psi_f^{(t + 1)} = U^{(t + 1)} P_{f, 0} \psi_{f}^{(t)}\, , \qquad   
  \phi^{(t + 1)} = U^{(t + 1)} P_{\emptyset, 0} \phi^{(t)}\, . 
\end{equation*}

Therefore, we have: 
\begin{align*}
  A_f^{(t + 1)} & = |\langle \phi^{(t + 1)}, \psi_f^{(t + 1)}\rangle|^2
  \\
                & = |\langle P_{\emptyset, 0}\phi^{(t)}, P_{f, 0}\psi_f^{(t)}\rangle|^2 \\
                & = |\langle \phi^{(t)} \ket{0},
                  \Pi_f^{\perp}\psi_f^{(t)} \ket{0}\rangle|^2 \\
                & = |\langle \phi^{(t)},
                  \Pi_f^{\perp}\psi_f^{(t)}\rangle|^2
\end{align*}

By the decomposition of $\psi_{f}^{(t)}$ (\cref{lemma:psidecompose}), we
know that:

\begin{equation*}
  \Pi_f^\perp\psi_f^{(t)} =
  {\phi_f^{(t)\perp}} \left(\sqrt{1 -
      \tproj_f^{(t)}} \sqrt{A_f^{(t)}} - \sqrt{\tproj_f^{(t)}}
    \cdot \omega \sqrt{B_f^{(t)}
      -\norm{c}^2} \right) + c_f^\perp \, ,
\end{equation*}
  
where $|\omega| = 1$ and $c_f^\perp := \Pi_f^{\perp}c$ and
$c \perp \Phi = \text{span}\{\phi^{(t)}, \phi^{(t)\perp}\}$.

Note that
$\langle \phi^{(t)}, c_f^{\perp} \rangle = \langle \phi_f^{(t)\perp},
c_f^\perp \rangle = 0$ and
$\langle \phi^{(t)}, \phi_f^{(t)\perp} \rangle = \sqrt{ 1 -
  \tproj_f^{(t)}}$. As a result, we obtain:
\begin{align*}
  |\langle \phi^{(t)},
  \Pi_f^{\perp}\psi_f^{(t)}\rangle|^2 & = \left| \sqrt{1 -
                                        \tproj_f^{(t)}}\sqrt{A_f^{(t)}} - \sqrt{\tproj_f^{(t)}}
                                        \cdot \omega \sqrt{B_f^{(t)}
                                        -\norm{c}^2} \right |^2
                                        \cdot (1 - \tproj_f^{(t)})\\ 
                                      & \ge (1 - \tproj_f^{(t)})^2 A_f^{(t)} - 2\cdot \sqrt{\tproj_f^{(t)}}
                                        \cdot \sqrt{B_f^{(t)}
                                        -\norm{c}^2} \\
                                      & \ge A_f^{(t)} -
                                        2\tproj_f^{(t)} 
                                        - 2\cdot \sqrt{\tproj_f^{(t)}}
                                        \cdot \sqrt{B_f^{(t)}
                                        -\norm{c}^2} \, .
\end{align*}

Noting that $|| c_f^{\perp} ||^2  = || \Pi_f^{\perp} c ||^2 = ||c - \Pi_f c||^2 \leq ||c||^2$, we get:
\begin{equation*}
     A_f^{(t+1)} \ge A_f^{(t)} -  2\tproj_f^{(t)}  - 2\cdot \sqrt{\tproj_f^{(t)}}  \cdot \sqrt{B_f^{(t)} - \norm{c_f^{\perp}}^2}
\end{equation*}

Hence, by setting
$z_f^{(t)} \defeq B_f^{(t)} - \norm{c_f^{\perp}}^2 \in [0,B_f^{(t)}]$, we have that:

\begin{equation*}
  A_f^{(t+1)} \ge A_f^{(t)} - 2\tproj_f^{(t)} - 2 \sqrt{z_f^{(t)}} \cdot \sqrt{\tproj_f^{(t)}}  \, .
\end{equation*}

Next we analyze $B_f^{(t + 1)}$. By definition,

\begin{align*}
  B_f^{(t + 1)} & = \norm{\psi_f^{(t + 1)}}^2 - A_f^{(t + 1)} \\
                & = \norm{U^{(t + 1)} \coracle_{f, 0} \psi_f^{(t)}}^2
                  - A_f^{(t + 1)}\\
                & = \norm{\Pi_f^\perp \psi_f^{(t)}}^2 - A_f^{(t +
                  1)} \, .
\end{align*}

Denote
$E^{(t)} \defeq \left| \sqrt{1 - \tproj_f^{(t)}}\sqrt{A_f^{(t)}} - \sqrt{\tproj_f^{(t)}}
  \cdot \omega \sqrt{B_f^{(t)} -\norm{c}^2} \right|^2$.  Observe that:

\begin{equation*}
  \norm{\Pi_f^\perp \psi_f^{(t)}}^2  =  \norm{\phi_f^{\perp}}^2 \cdot E^{(t)} + \norm{c_f^{\perp}}^2 
  = E^{(t)} + \norm{c_f^{\perp}}^2  \, .
\end{equation*}

Meanwhile, from above,
\begin{equation*}  
  A_f^{(t+1)} = |\langle \phi^{(t)},
  \Pi_f^{\perp}\psi_f^{(t)}\rangle|^2 = E^{(t)}(1 - \tproj_f^{(t)}) \, .
\end{equation*}

Since $E^{(t)} \le A_f^{(t)} + B_f^{(t)} \le 1$, we have that:
\begin{equation*}
  B_f^{(t + 1)} = \norm{\Pi_f^\perp \psi_f^{(t + 1)}}^2 - A_f^{(t + 1)} = E^{(t)} \tproj_f^{(t)}+
  \norm{c_f^{\perp}}^2  \le
  B_f^{(t)}
  +
  \tproj_f^{(t)}
  -
  z_f^{(t)}
  \, .
\end{equation*}
where recall that the sequence $(z_f^{(t)})_t$ is equal to $z_f^{(t)} := B_f^{(t)} - \norm{c_f^{\perp}}^2 \in [0,B_f^{(t)}]$.

\paragraph{Quantum query case.} %Firstly, we recall that the evolution
%of the algorithm state after a quantum query is:
If the $(t+1)$-th query is quantum, according to the evolution
of the state definition (\cref{eq:def_state_evolution}), the algorithm states
after the $t + 1$ query are:
\begin{equation*}
    \begin{split}
    \psi_{f}^{(t + 1)} = U^{(t + 1)} Q_f \psi_{f}^{(t)} \ \ ; \ \ \phi^{(t + 1)} = U^{(t + 1)}  \phi^{(t)}.
    \end{split}
\end{equation*}
where $U^{(t + 1)}$ is a unitary independent of input, and $Q_f = I - 2 \Pi_f$.

Then, we have that:
\begin{equation*} \label{eq:a_f_t_plus_1_q}
    \begin{split}
    \sqrt{A_f^{(t+1)}} &= |\langle \phi^{(t+1)}, \psi_{f}^{(t+1)} \rangle| \\
    &= | \langle \phi^{(t)} U^{(t + 1)}, U^{(t + 1)} Q_f \psi_{f}^{(t)} \rangle | \\
    &= | \langle \phi^{(t)} ,  Q_f \psi_{f}^{(t)} \rangle | \\
   % &= | \langle \phi^{(t)} ,  (I - 2 \Pi_f) \psi^{(t)} \rangle | = |  \langle \phi^{(t)} ,  \psi^{(t)} \rangle - 2 \cdot  \langle \phi^{(t)} , \Pi_f \psi^{(t)} \rangle  | \\
    %&= |  \langle \phi^{(t)} ,  \psi_{f}^{(t)} \rangle - 2 \langle \phi^{(t)} \Pi_f, \Pi_f \psi_{f}^{(t)} \rangle  | \\
    &= | \sqrt{A_f^{(t)}} - 2 \langle \Pi_f \phi^{(t)} , \Pi_f \psi_{f}^{(t)} \rangle |
    \end{split}
\end{equation*}

%As previously done in Equation~\ref{eq:decomp_psi_f} \jnote{Missing eq. reference.} we will first decompose $\psi^{(t)}$ as:
By the decomposition of $\psi_{f}^{(t)}$ (\cref{lemma:psidecompose}), we
know that:
% \begin{equation} 
%     \psi^{(t)} = \sqrt{A_f^{(t)}} \cdot \phi^{(t)} + \omega \sqrt{B_f^{(t)} - || \xi ||^2 }  \cdot \overline{\phi^S} + \xi
% \end{equation}
\begin{equation*}
   \psi_{f}^{(t)} =  \sqrt{A_f^{(t)}} \cdot \phi^{(t)} + \omega \sqrt{B_f^{(t)} - || c ||^2 }  \cdot \phi^{(t)\perp} + c
\end{equation*}

where $|\omega| = 1$ and %is a complex number of norm 1 and 
$c \perp \Phi = \text{span}\{\phi^{(t)}, \phi^{(t)\perp}\}$.
%where $|\omega| = 1$ and $c_f^\perp := \Pi_f^{\perp}c$
%$\xi$ is a vector orthogonal to  $\phi^{(t)}$ and to $\phi^{\perp}$.  
Then, we have:
\begin{equation*}
    \begin{split}
    \langle \Pi_f \phi^{(t)} , \Pi_f \psi_{f}^{(t)} \rangle 
    &= \langle \Pi_f \phi^{(t)}, \sqrt{A_f^{(t)}} \cdot \Pi_f \phi^{(t)} + \omega \sqrt{B_f^{(t)} - ||c||^2 }  \cdot \Pi_f \phi^{(t)\perp} + \Pi_f c \rangle \\
  %  &= \sqrt{A_f^{(t)}} \cdot \langle \Pi_f \phi^{(t)}, \Pi_f \phi^{(t)} \rangle  + \omega \sqrt{B_f^{(t)} - || \xi ||^2} \langle \Pi_f \phi^{(t)}, \Pi_f \phi^{\perp}  \rangle + \langle \Pi_f \phi^{(t)}, \Pi_f \xi \rangle \\
    &= \sqrt{A_f^{(t)}} \cdot \tproj_f^{(t)} + \omega \sqrt{B_f^{(t)} - || c ||^2} \langle \Pi_f \phi^{(t)}, \Pi_f \phi^{(t)\perp} \rangle
    \end{split}
\end{equation*}
where in the last equality we used that $\langle \phi^{(t)} | \Pi_f c \rangle = 0$,
%as shown in Equation~\ref{eq:psi_0_pi_xi_gen}. \jnote{Undefined references.}
Using the definition of the state $\phi^{(t)\perp}$ (Def.~\ref{def:2dplane}), we have that:
\begin{equation*}
    \begin{split}
   \Pi_f  \phi^{(t)\perp} &= \Pi_f \left( \phi_f^{(t)} \sqrt{1-\tproj_f^{(t)}} - \phi_f^{(t)\perp} \sqrt{\tproj_f^{(t)}} \right) = \sqrt{1-\tproj_f^{(t)}} \phi_f^{(t)} \\
       \langle \Pi_f \phi^{(t)}, \Pi_f \phi^{(t)\perp}  \rangle &= 
   % \langle \Pi_f \phi^{(t)}, \frac{||\Pi_f^{\perp} \phi^{(t)}||}{||\Pi_f \phi^{(t)}||} \Pi_f \phi^{(t)} \rangle = 
  %  \frac{||\Pi_f^{\perp} \phi^{(t)}||}{||\Pi_f \phi^{(t)}||} \cdot || \Pi_f \phi^{(t)} ||^2 = ||\Pi_f^{\perp} \phi^{(t)}|| \cdot || \Pi_f \phi^{(t)} ||
  \langle \sqrt{\tproj_f^{(t)}} \phi_f^{(t)} , \sqrt{1-\tproj_f^{(t)}} \phi_f^{(t)} \rangle =  \sqrt{\tproj_f^{(t)}} \cdot \sqrt{1-\tproj_f^{(t)}}
      \end{split}
\end{equation*}

As a result, we can rewrite $A_f^{(t + 1)}$ as:
\begin{equation*}
    \begin{split}
    \sqrt{A_f^{(t+1)}} %&= \left| \sqrt{A_f^{(t)}} - 2 \cdot  \langle \Pi_f \phi^{(t)}, \Pi_f \psi^{(t)} \rangle \right| \\
%    &= \left| \sqrt{A_f^{(t)}} - 2  \sqrt{A_f^{(t)}}  \cdot \tproj_f^{(t)} - 2 \omega \sqrt{B_f^{(t)} - || c ||^2} \cdot \langle \Pi_f \phi^{(t)}, \Pi_f \phi^{\perp} \rangle  \right| \\
    &= \left| (1 - 2 \tproj_f^{(t)}) \sqrt{A_f^{(t)}} - 2 \omega  \sqrt{B_f^{(t)} - || c ||^2} \cdot \sqrt{\tproj_f^{(t)}} \cdot \sqrt{1-\tproj_f^{(t)}} || \right| 
    \end{split}
\end{equation*}
Using the inequality $|a - b| \geq ||a| - |b|| \geq |a| - |b|$ for any $a, b$ complex numbers:%holding for any two complex numbers we get:
\begin{equation*}
    \begin{split}
        \sqrt{A_f^{(t+1)}} &\geq \left|1 -  2 \tproj_f^{(t)} \right| \sqrt{A_f^{(t)}} - 2 |\omega| \cdot \sqrt{\tproj_f^{(t)}} \cdot \sqrt{1-\tproj_f^{(t)}} \sqrt{B_f^{(t)} - || c ||^2} \\
   %     &\geq \left|1 -  2 || \Pi_f \phi^{(t)} ||^2 \right| \sqrt{A_f^{(t)}} - 2 ||\Pi_f^{\perp} \phi^{(t)}|| \cdot || \Pi_f \phi^{(t)} || \sqrt{B_f^{(t)}} \\
        &\geq \left|1 -  2 \tproj_f^{(t)} \right| \sqrt{A_f^{(t)}} - 2 \cdot \sqrt{\tproj_f^{(t)}} \sqrt{B_f^{(t)}}
    \end{split}
\end{equation*}
%where we used that $|\omega| = 1$ and $\sqrt{1-\tproj_f^{(t)}} \leq 1$.

Using that $\sqrt{A_f^{(t)}} \leq 1$ and the observation that if $x \geq a - b$, this implies that $x^2 \geq a (a - 2b)$ for any  $a, b > 0$, we can determine the 
%(possibly loose) 
lower bound on %the progress measure 
$A_f^{(t + 1)}$:
\begin{equation*}
    \begin{split}
    A_f^{(t+1)} &\geq \left|1 -  2\tproj_f^{(t)} \right| \sqrt{A_f^{(t)}} \cdot \left( \left|1 -  2 \tproj_f^{(t)} \right| \sqrt{A_f^{(t)}} - 4 \cdot \sqrt{\tproj_f^{(t)}} \sqrt{B_f^{(t)}} \right) \\
    %&\geq \left(1 -  2 || \Pi_f \phi^{(t)} ||^2 \right)^2 A_f^{(t)} - 4 \left|1 -  2 || \Pi_f \phi^{(t)} ||^2 \right| || \Pi_f \phi^{(t)} || \sqrt{B_f^{(t)}} \cdot \sqrt{A_f^{(t)}} \\
    &\geq \left(1 -  2 \tproj_f^{(t)} \right)^2 A_f^{(t)} - 4 \left|1 -  2 \tproj_f^{(t)} \right| \sqrt{\tproj_f^{(t)}} \sqrt{B_f^{(t)}} \\
    %&\geq \left(1 -  2 \tproj_f^{(t)} \right)^2 A_f^{(t)} - 4 \sqrt{\tproj_f^{(t)}} \sqrt{B_f^{(t)}} \\
    %&= \left(1 -  4 || \Pi_f \phi^{(t)} ||^2 + 4 || \Pi_f \phi^{(t)} ||^4 \right) A_f^{(t)} - 4 || \Pi_f \phi^{(t)} || \sqrt{B_f^{(t)}} \\
   % &\geq \left(1 -  4 || \Pi_f \phi^{(t)} ||^2 \right) A_f^{(t)} - 4 || \Pi_f \phi^{(t)} || \sqrt{B_f^{(t)}} \\
    &\geq A_f^{(t)} -  4 \tproj_f^{(t)}  - 4 \sqrt{\tproj_f^{(t)}} \sqrt{B_f^{(t)}}
    \end{split}
\end{equation*}
%where  we used that $\sqrt{A_f^{(t)}} \leq 1$. %and $\left|1 -  2 \tproj_f^{(t)} \right| \leq 1$.

As for a quantum query we have: $A_f^{(t+1)} + B_f^{(t+1)} = A_f^{(t)} + B_f^{(t)}$, we get:
\begin{equation*}
    \begin{split}
     B_f^{(t+1)} &= A_f^{(t)} + B_f^{(t)} - A_f^{(t+1)} \leq A_f^{(t)} + B_f^{(t)} -  A_f^{(t)} +  4 \tproj_f^{(t)} + 4 \sqrt{\tproj_f^{(t)}} \sqrt{B_f^{(t)}} \\
     &\leq  B_f^{(t)} +  4 \tproj_f^{(t)} + 4 \sqrt{\tproj_f^{(t)}} \sqrt{B_f^{(t)}}
    \end{split}
\end{equation*}

\end{proof}

\begin{lemma}[Progress Measures for
  $\dsearch$] \label{lemma:progress_general} For every $t$, the
  progress measures after the $t + 1$-th query satisfy the following
  recurrent relations:

  \begin{itemize}
  \item If the $t + 1$-th query is pseudo-classical, there exists
    $z_t \in [0, B^{(t)}]$ such that:
    \begin{equation}
      \begin{split}
        A^{(t + 1)} &\geq  A^{(t)} - 2 \solnl - 2 \sqrt{\solnl} \cdot \sqrt{z_t} \\
        B^{(t + 1)} &\leq  B^{(t)} - z_t + \solnl
        \end{split}
    \end{equation}
  \item If the $t + 1$-th query is quantum, then we have:
    \begin{equation}
        \begin{split}
            A^{(t + 1)} &\geq A^{(t)} - 4 \cdot \solnl - 4 \cdot \sqrt{\solnl} \cdot \sqrt{B^{(t)}} \\
            B^{(t + 1)} &\leq B^{(t)} + 4 \cdot \solnl + 4 \cdot \sqrt{\solnl} \cdot \sqrt{B^{(t)}}
        \end{split}
      \end{equation}
    \end{itemize}
\end{lemma}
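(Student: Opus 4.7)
The plan is to derive Lemma~\ref{lemma:progress_general} as a direct consequence of the per-function bounds in Lemma~\ref{lemma:progress_fixed}, by taking the expectation over $f \gets \distr$ and controlling the cross-term $\sqrt{\gamma_f^{(t)} \cdot (\cdot)}$ with the Cauchy-Schwarz bound in Corollary~\ref{lemma:expproduct}, followed by the uniform bound $\gamma^{(t)} \le \solnl$ from Lemma~\ref{lemma:gammavssolnl}.

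First I would handle the \textbf{pseudo-classical case}. From Lemma~\ref{lemma:progress_fixed} there is, for each $f$, a value $z_f^{(t)} \in [0, B_f^{(t)}]$ with
\begin{equation*}
A_f^{(t+1)} \ge A_f^{(t)} - 2\gamma_f^{(t)} - 2\sqrt{z_f^{(t)}}\sqrt{\gamma_f^{(t)}}, \qquad
B_f^{(t+1)} \le B_f^{(t)} + \gamma_f^{(t)} - z_f^{(t)}.
\end{equation*}
Define $z_t \defeq \expt_{f \gets \distr}(z_f^{(t)})$, which automatically lies in $[0, B^{(t)}]$ by linearity of expectation and the per-function containment $z_f^{(t)} \le B_f^{(t)}$. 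Taking $\expt_{f \gets \distr}$ of both inequalities and applying Corollary~\ref{lemma:expproduct} to the product $\sqrt{z_f^{(t)}} \cdot \sqrt{\gamma_f^{(t)}}$ gives
\begin{equation*}
A^{(t+1)} \ge A^{(t)} - 2\gamma^{(t)} - 2\sqrt{z_t \cdot \gamma^{(t)}}, \qquad B^{(t+1)} \le B^{(t)} + \gamma^{(t)} - z_t.
\end{equation*}
The claim then follows by invoking Lemma~\ref{lemma:gammavssolnl} to replace $\gamma^{(t)}$ by $\solnl$ in the directions that weaken the inequalities (upper-bound $\gamma^{(t)}$ in the subtracted term of $A^{(t+1)}$ and in the additive term of $B^{(t+1)}$, and upper-bound $\sqrt{\gamma^{(t)}}$ by $\sqrt{\solnl}$ in the cross-term).

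Next I would handle the \textbf{quantum case} analogously. Starting from the per-function recurrences
\begin{equation*}
A_f^{(t+1)} \ge A_f^{(t)} - 4\gamma_f^{(t)} - 4\sqrt{B_f^{(t)}}\sqrt{\gamma_f^{(t)}}, \qquad
B_f^{(t+1)} \le B_f^{(t)} + 4\gamma_f^{(t)} + 4\sqrt{B_f^{(t)}}\sqrt{\gamma_f^{(t)}},
\end{equation*}
taking expectations, applying Corollary~\ref{lemma:expproduct} with $g(f) = B_f^{(t)}$ and $h(f) = \gamma_f^{(t)}$ to get $\expt_{f \gets \distr}\sqrt{B_f^{(t)}\gamma_f^{(t)}} \le \sqrt{B^{(t)} \cdot \gamma^{(t)}}$, and finally bounding $\gamma^{(t)} \le \solnl$ delivers the stated inequalities.

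The main (mild) subtlety is the simultaneous choice of $z_t$ that must work for both bounds on $A^{(t+1)}$ and $B^{(t+1)}$; the choice $z_t = \expt z_f^{(t)}$ is forced and one has to verify it lies in $[0, B^{(t)}]$, but this is immediate from $0 \le z_f^{(t)} \le B_f^{(t)}$ pointwise. All remaining steps are direction-aware applications of monotonicity together with Lemma~\ref{lemma:gammavssolnl} and Corollary~\ref{lemma:expproduct}; no new conceptual ingredient is needed beyond what has already been set up.
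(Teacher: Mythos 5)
Your proposal is correct and follows exactly the paper's own route: define $z_t \defeq \expt_{f\gets\distr}(z_f^{(t)})$, take expectations of the per-function recurrences from Lemma~\ref{lemma:progress_fixed}, control the cross-terms via Corollary~\ref{lemma:expproduct}, and finish with $\gamma^{(t)} \le \solnl$ from Lemma~\ref{lemma:gammavssolnl}. In fact you spell out the direction-of-inequality bookkeeping more explicitly than the paper's one-line proof does.
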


\begin{proof}
    Letting $z_t \defeq \expt_{f\gets \distr}(z_f^{t})$, we can observe that
$z_t\in[0, B^{(t)}]$. Taking expectations over $\distr$, and
applying~\cref{lemma:expproduct}
($\expt \sqrt{g(Z)\cdot h(Z)} \le \sqrt{\expt g(Z) \cdot \expt h(Z)}$)
and~\cref{lemma:gammavssolnl} ($\gamma^{(t)} \le \solnl$), 
% we can get
the relations for $A^{(t)}$ and $B^{(t)}$ follow.

\end{proof}

Next, since we intend to lower bound $A^{(\aquery)}$ and upper bound
$B^{(\aquery)}$, we can change the inequalities to equalities and
analyze instead the new sequences $(a_t,b_t)$ defined below. It is
clear that $A^{(\aquery)}\ge a_\aquery$ and
$B^{(\aquery)}\le b_\aquery$.

\begin{definition}[Sequences ${(a_t)}_{t \geq 0}, {(b_t)}_{t \geq
    0}$] \label{def:seq_a_b_general} We define the following sequences
  based on the evolution of the progress measures $A$ and $B$:
  \begin{align*}
    a_0 & \defeq A^{(0)} = 1\\
    b_0 & \defeq B^{(0)} = 0\\
    a_{t + 1} & \defeq 
                \begin{cases}
                  a_{t}  - 2 \cdot \solnl - 2 \cdot \sqrt{\solnl} \cdot
                  \sqrt{z_t} \, , \qquad  \text{if $t + 1 \in T_c$} \\
                  a_{t}  - 4 \cdot \solnl - 4 \cdot \sqrt{\solnl} \cdot \sqrt{b_{t}} \, , \qquad \text{if $t + 1 \in T_q$} 
                \end{cases}  \\
    b_{t + 1} & \defeq 
         \begin{cases}
           b_{t} + \solnl - z_t \, , \qquad \qquad \qquad \qquad  \text{if $t + 1 \in T_c$} \\
           b_{t} + 4 \cdot \solnl + 4 \cdot \sqrt{\solnl} \cdot
           \sqrt{b_{t}}  \, ,\qquad \text{if $t + 1 \in T_q$} 
         \end{cases}
  \end{align*}
  where $(z_t)_{t \geq
    1}$ is the sequence defined in the proof of
  Lemma~\ref{lemma:progress_general}, which satisfies
% the property:
$0 \leq z_t \leq B^{(t)}$ for any $t$.
\end{definition}

\begin{lemma}[Bounding $a_\aquery$ and
  $b_{\aquery}$]
  \label{lemma:bound_ab}
  \begin{equation}
    a_\aquery \ge 1 - 4 \solnl \cdot (\sqrt{\cquery} + \qquery)^2\, , \qquad b_{\aquery} \le \solnl \cdot (\sqrt{\cquery} + 2
    \qquery)^2 \, .
  \end{equation}
\end{lemma}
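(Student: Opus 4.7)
The natural strategy is to bound $b_\tau$ first, then use that bound to control the decrement of $a$ at each query. The key structural observation is that the quantum recurrence $b_{t+1} = b_t + 4\nu + 4\sqrt{\nu b_t}$ is exactly $(\sqrt{b_t}+2\sqrt{\nu})^2$, so it acts additively on $\sqrt{b_t}$, while the classical recurrence satisfies the crude additive bound $b_{t+1} \le b_t + \nu$ (since $z_t \ge 0$). Together these suggest the invariant
\[
  b_t \;\le\; \nu\bigl(\sqrt{t_c(t)} + 2\,t_q(t)\bigr)^2,
\]
where $t_c(t),t_q(t)$ count the classical and quantum queries through step $t$. I would prove this by induction on $t$: the quantum step follows immediately from $\sqrt{b_{t+1}} = \sqrt{b_t}+2\sqrt{\nu}$, and the classical step from expanding $\nu(\sqrt{t_c}+2t_q)^2 + \nu$ and using $\sqrt{t_c}\le \sqrt{t_c+1}$. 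Setting $t=\tau$ yields the desired bound on $b_\tau$.

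For $a_\tau$, I would telescope the decrements:
\[
  1 - a_\tau \;=\; \sum_{t+1\in T_c}\!\bigl(2\nu + 2\sqrt{\nu z_t}\bigr) + \sum_{t+1\in T_q}\!\bigl(4\nu + 4\sqrt{\nu b_t}\bigr),
\]
and bound the two non-trivial terms separately. For the quantum sum, plug in $\sqrt{b_t}\le \sqrt{\nu}(\sqrt{\tau_c}+2(q-1))$ at the $q$-th quantum query (valid for every ordering of queries, since moving classicals earlier only enlarges $\sqrt{b_t}$); summing over $q=1,\dots,\tau_q$ gives $\sum_{\text{quantum}} \sqrt{b_t} \le \sqrt{\nu}\bigl(\tau_q\sqrt{\tau_c}+\tau_q(\tau_q-1)\bigr)$. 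For the classical sum, apply Cauchy--Schwarz (Lemma~\ref{lemma:cs}) to get $\sum_{\text{classical}}\sqrt{z_t}\le\sqrt{\tau_c\cdot\sum z_t}$, then bound $\sum z_t$ by telescoping: the classical recurrence $b_{t+1}\le b_t + \nu - z_t$ gives $z_t\le \nu + (b_t - b_{t+1})$, and since $b_\tau\ge 0$ the net classical decrease is bounded by the total quantum increase, which was already controlled above. This chain yields $\sum z_t \le \nu(\sqrt{\tau_c}+2\tau_q)^2$, and hence $\sum_{\text{classical}}\sqrt{z_t}\le\sqrt{\nu}\bigl(\tau_c + 2\tau_q\sqrt{\tau_c}\bigr)$.

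Plugging both estimates into the telescoped expression and simplifying, the right-hand side collapses exactly to $4\nu(\sqrt{\tau_c}+\tau_q)^2$, giving the claimed lower bound on $a_\tau$.

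\textbf{Main obstacle.} The delicate point is the $\sum\sqrt{z_t}$ term: the naive bound $\sqrt{z_t}\le \sqrt{b_t}$ is wasteful and yields only the bound $1-a_\tau \le O(\nu(\sqrt{\tau_c}+\tau_q)^2\cdot\tau_c)$, which is far too loose. Getting the tight constant requires the telescoping argument above, together with the observation that $\sum z_t$ is itself controlled by the same quantity that bounds $b_\tau$. A secondary subtlety is that the bound on $b_t$ entering the quantum decrement must be uniform over all interleavings of classical and quantum queries; this follows from the monotone structure of the recurrences but needs to be stated carefully so that the $q$-th quantum query can always be charged with at most $\sqrt{\tau_c}+2(q-1)$ accumulated error.
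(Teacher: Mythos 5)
Your proposal is correct, and its overall architecture coincides with the paper's: bound $b_\aquery$, bound $\sum_{t\in T_q}\sqrt{b_{t-1}}$, bound $\sum_{t\in T_c}\sqrt{z_{t-1}}$ via the telescoping identity $z_t\le\solnl+(b_t-b_{t+1})$ plus Cauchy--Schwarz, and then telescope $a_t$; the final algebra collapses to $4\solnl(\sqrt{\cquery}+\qquery)^2$ exactly as in the paper. The one genuine difference is how the bound on $b$ is obtained. The paper sets $z_t=0$ to get a dominating sequence $d_t$, then runs a greedy exchange argument (swapping adjacent classical/quantum queries) to show the all-classical-first interleaving maximizes $d_\aquery$, and finally computes that worst case explicitly by induction. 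You instead prove the running invariant $b_t\le\solnl\bigl(\sqrt{t_c(t)}+2t_q(t)\bigr)^2$ directly by induction over an arbitrary interleaving, using that the quantum step acts additively on $\sqrt{b_t}$ and the classical step adds at most $\solnl$. Your route is shorter and, importantly, delivers the per-step bound $\sqrt{b_t}\le\sqrt{\solnl}(\sqrt{\cquery}+2(q-1))$ at the $q$-th quantum query as an immediate corollary, whereas the paper has to reuse its monotonicity-plus-optimal-strategy argument a second time to justify the analogous estimate in its step (2). Both yield identical constants; the exchange argument carries the extra information that the classical-first strategy is extremal among all interleavings, which your invariant does not explicitly isolate but also does not need.
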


\begin{proof}

The proof consists of four steps.

\medskip
% \paragraph{
{\bf (1)} First we show that $b_\aquery \le \left( \sqrt{\tau_c} + 2 \tau_q \right)^2 \cdot \solnl$.
%First we will show that $b_\aquery \le \left( \sqrt{\tau_c} + 2 \tau_q \right)^2 \cdot \solnl$.

To get an upper bound for each term of this sequence, we can let $z_t = 0$ and instead consider the sequence:
\begin{equation*}
    \begin{split}
    d_{t + 1} \defeq  
         \begin{cases}
            d_{t} + \solnl \, ,\qquad \qquad \qquad \qquad \qquad \text{ if $t + 1 \in T_c$} \\
            d_{t} + 4 \cdot  \solnl + 4 \cdot \sqrt{\solnl} \cdot  \sqrt{d_{t}} \, ,\qquad \text{ if $t + 1 \in T_q$} 
        \end{cases}
    \end{split}
\end{equation*}
As a result we have: $b_t \leq d_t$ for any $t \in [\tau]$.

Our task is to bound the last term $d_{\tau}$ in the sequence. Every hybrid strategy $A$ that uses $\tau_c$ classical queries and $\tau_q$ quantum queries can be expressed by $A = [x_1, \cdots , x_{\tau}]$, where if $x_i = 0$ (resp. $x_i=1$) indicates that the $i$-th query of $A$ is classical (resp. quantum), and there are exactly $\tau_c$ values of $0$ and $\tau_q$ values of $1$. Therefore, the sequence $(d_t)_t$ parameterized by the strategy $A$, denoted as $(d_t^A)_t$, can be re-written as:
\begin{equation}\label{eq:d_t_a}
    \begin{split}
    d_{t + 1}^A \defeq  
         \begin{cases}
            d_{t}^A + \solnl \, ,\qquad \qquad \qquad \qquad \qquad \text{ if $x_{t + 1} = 0$} \\
            d_{t}^A + 4 \cdot  \solnl + 4 \cdot \sqrt{\solnl} \cdot  \sqrt{d_{t}} \, ,\qquad \text{ if $x_{t + 1} = 1$} 
        \end{cases}
    \end{split}
\end{equation}
 
Our task then becomes determining the strategy $A^*$ which achieves the maximum $d_{\tau}^{A^*}$. We claim that 
\begin{equation*}
    A^* \defeq  [0, \cdots , 0, 1, \cdots , 1] \, ,
\end{equation*} 
namely the strategy of making all classical queries upfront is optimal. This follows from a greedy argument. 

Consider two arbitrary strategies $A = [x_1, \cdots, x_i, x_{i + 1}, \cdots, x_{\tau}]$ and 
%$B = [y_1, \cdots , y_i, y_{i + 1}, \cdots, y_{\tau}]$
$B = [y_1, \cdots , y_i,$ $ y_{i + 1}, \cdots, y_{\tau}]$
which only differ in the $i$ and $i + 1$-th queries. Namely, $x_i = 0$, $x_{i + 1} = 1$ and $y_i = 1$, $y_{i + 1} = 0$ and $x_j = y_j$ for $j \in \{1, \cdots , \tau\} - \{i, i + 1\}$. We next show that $d_{\tau}^A > d_{\tau}^B$. As $x_1 = y_1, \cdots x_{i - 1} = y_{i - 1}$, this implies directly that $d_{i - 1}^A = d_{i - 1}^B$. Then for the $i$-th and $i + 1$ terms of the two sequences we have:
\begin{equation*}
    \begin{split}
        d_i^A &= d_{i - 1}^A + \solnl \qquad \qquad \qquad \ \ \ \ \ \ ; \ \ \ \ d_{i+1}^A = d_{i - 1}^A + 5 \solnl + 4 \sqrt{\solnl} \sqrt{d_{i - 1}^A + \solnl} \\
        d_i^B &= d_{i - 1}^B + 4 \solnl + 4 \sqrt{\solnl} \sqrt{d_{i - 1}^B} \ \ \ ; \ \ \ d_{i+1}^B = d_{i - 1}^B + 5 \solnl + 4 \sqrt{\solnl} \sqrt{d_{i - 1}^B}
    \end{split}
\end{equation*}
Then, as $d_{i - 1}^A = d_{i - 1}^B$ it is clear that $d_{i+1}^A > d_{i+1}^B$. As $x_j = y_j$ for all $i + 2 \leq j \leq \tau$, this also implies that $d_{\tau}^A > d_{\tau}^B$. 

Denote the following swap operation on strategies. Given as input a strategy $A = [x_1, ..., x_i, x_{i + 1}, \cdots, x_{\tau}]$ the function $\mathsf{swap}_i$ outputs a strategy $A'$:
\begin{equation*}
    \mathsf{swap}_i(A) = A' \text{ where } A' = [x_1, ..., x_{i + 1}, x_i, \cdots, x_{\tau}]
\end{equation*}
Our previous argument implies that for a strategy $A$ such that $x_i = 0$ and $x_{i + 1} = 1$, we have: $d_{\tau}^{A} > d_{\tau}^{\mathsf{swap}_i(A)}$. Therefore, we can see that any strategy $A = [x_1, ..., x_{\tau}]$ can be obtained from a sequence of applications of $\mathsf{swap}_i$ on $A^*$.
\begin{equation*}
    A^* \defeq  [0, \cdots , 0, 1, \cdots , 1] \xrightarrow{\mathsf{swap}_{i_1}} \cdots \xrightarrow{\mathsf{swap}_{i_k}} A \ \ \text{for some indices $i_1, ..., i_k$} \, .
\end{equation*}
It hence follows that $d_{\tau}^{A^*} \geq d_{\tau}^{A}$, i.e., $A^*$ is the optimal strategy. 

Now, let us compute the last term of the optimal strategy, i.e.: $d_{\tau}^{A^*}$. We can rewrite the sequence $d_t$ as:
\begin{equation*}
    \begin{split}
    d_{t + 1}^{A^*} = 
         \begin{cases}
            d_{t}^{A^*} + \solnl  \, ,\qquad \qquad \qquad \qquad \qquad \qquad \qquad \qquad \qquad \qquad  \text{ if $0 \leq t < \tau_c$} \\
            d_{t}^{A^*} + 4 \cdot  \solnl + 4 \cdot \sqrt{\solnl} \cdot  \sqrt{d_{t}^{A^*}} = \left(\sqrt{d_{t}^{A^*}} + 2 \sqrt{\solnl} \right)^2  \, , \ \ \ \text{ if $\tau_c \leq t < \tau$} 
        \end{cases}
    \end{split}
\end{equation*}
As $d_0^{A^*} = 0$, it is clear that we have: $d_{\tau_c}^{A^*} = \tau_c \cdot \solnl$. 
For $\tau_c \leq t \leq \tau$, we will prove by induction that:
\begin{equation*}
    d_{t}^{A^*} = \left( \sqrt{\tau_c} + 2 (t - \tau_c) \right)^2 \cdot \solnl 
\end{equation*}
For the base case $t = \tau_c$, we already showed that $d_{\tau_c}^{A^*} = \tau_c \cdot \solnl$. 
For the inductive step, we have that:
\begin{equation*}
    \begin{split}
    d_{t + 1}^{A^*} %&= \left(\sqrt{d_{t}^{A^*}} + 2 \sqrt{\solnl} \right)^2 \ \ \ \text{from definition of sequence} \\
                    &= \left(  \sqrt{\left( \sqrt{\tau_c} + 2 (t - \tau_c) \right)^2 \cdot \solnl}  + 2 \sqrt{\solnl} \right)^2  \\
                    &= \left( \sqrt{\tau_c} + 2(t - \tau_c + 1) \right) \cdot \solnl 
    \end{split}
\end{equation*}
which concludes the inductive proof. Hence, by putting things together we have:
\begin{equation}\label{eq:b_tau}
    b_{\tau} \leq d_{\tau} \leq d_{\tau}^{A^*} = \left( \sqrt{\tau_c} + 2 \tau_q \right)^2 \cdot \solnl
\end{equation}

\medskip
{\bf (2)} Secondly, we show that  $\sum_{t \in T_q} \sqrt{b_{t - 1}} \leq \sqrt{\solnl} \cdot  \tau_q(\sqrt{\tau_c} + \tau_q - 1)$.

As for $b_{\tau}$, to get an upper bound we let $z_t = 0$ and use the sequence $(d_t^A)_t$. From the definition of the sequence (Equation~\ref{eq:d_t_a}), 
 it is clear that $(d_t^A)_t$ is a strictly increasing sequence for any strategy $A$. This also implies that for any strategy $A$ we have:
 \begin{equation*}
     \sum_{t \in T_q} \sqrt{d_{t - 1}^A} \leq \sum_{\tau_c \leq t \leq \tau} \sqrt{d_t^A}
 \end{equation*}

  In other words, $\sum_{t \in T_q} \sqrt{d_{t - 1}^A}$ is maximized when the strategy performs first all $\tau_c$ classical queries and then the $\tau_q$ quantum queries. Hence, the maximum is achieved for the strategy described above by the sequence $(d_t^{A^*})_t$.

Using the previous result in Equation~\ref{eq:b_tau}:

\begin{equation*}
     \sum_{\tau_c \leq t \leq \tau} d_{t}^{A^*} = \solnl \cdot \sum_{\tau_c \leq t \leq \tau} \left( \sqrt{\tau_c} + 2 (t - \tau_c) \right)^2   
\end{equation*}

This gives us:
\begin{align*}
  \sum_{t \in T_q} \sqrt{b_{t - 1}} \leq 
  \sum_{\cquery \leq t \leq \aquery} \sqrt{d_{t}^{A^*}} & =
                                                          \sqrt{\solnl} \sum_{\cquery \leq t \leq \aquery}  \sqrt{\cquery} +
                                                          2 (t - \cquery) \\
                                                        & \leq \sqrt{\solnl} \left( \qquery (\sqrt{\cquery} - 2 \cquery) + 2 \sum_{\cquery \leq t \leq \aquery} t \right) \\
                                                        & =
                                                          \sqrt{\solnl}
                                                          \left(
                                                          \qquery
                                                          \sqrt{\cquery}
                                                          (1 -
                                                          2\sqrt{\cquery}
                                                          +
                                                          2\sqrt{\cquery})
                                                          + 2 \cdot
                                                          \frac{(\qquery
                                                          -
                                                          1)\qquery}{2}
                                                          \right) \\
  & =  \sqrt{\solnl} \qquery (\sqrt{\cquery} + \qquery - 1)
\end{align*}

{\bf (3)} Thirdly, we show that  $\sum_{t \in T_c} \sqrt{z_{t-1}} \leq \sqrt{\solnl} \cdot (\tau_c + 2\sqrt{\tau_c} \tau_q)$.

By definition of the sequence $z_t$ (Def.~\ref{def:seq_a_b_general}), we know that for $t \in T_c$:

\begin{equation*}
    \sum_{t \in T_c} z_{t - 1} = \solnl \cdot \tau_c + \sum_{t \in T_c} (b_{t - 1} - b_{t})
\end{equation*}

It hence suffices to derive an upper bound on $\sum_{t \in T_c} (b_{t - 1} - b_{t})$. We can rewrite $b_{\tau}$ as:
\begin{equation*}
    \begin{split}
    b_{\tau} &= 
     b_0 + \sum_{t = 1}^{\tau} (b_t - b_{t - 1}) 
        = \sum_{b_t \geq b_{t - 1}} (b_t - b_{t - 1}) + \sum_{b_t < b_{t - 1}} (b_t - b_{t - 1})
    \end{split}
\end{equation*}

As a result, we  have that:
\begin{equation*}
    \begin{split}
        \sum_{t \in T_c \ \wedge \ b_t < b_{t - 1}} (b_{t - 1} - b_{t}) < \sum_{b_t < b_{t - 1}} (b_{t - 1} - b_{t}) 
        %= \sum_{1 \leq t \leq \tau \ : \ b_t \geq b_{t - 1}} (b_t - b_{t - 1}) - b_{\tau} \\
        = \sum_{b_t \geq b_{t - 1}} (b_t - b_{t - 1}) - b_{\tau}
       % < \sum_{b_t \geq b_{t - 1}} (b_t - b_{t - 1})
    \end{split}
\end{equation*}
In other words we also have:
\begin{equation*}
    \sum_{t \in T_c  \ \wedge \ b_t < b_{t - 1}} (b_{t - 1} - b_{t}) < 
    %\sum_{1 \leq t \leq \tau \ : \ b_t \geq b_{t - 1}} (b_t - b_{t - 1}) = 
    \sum_{t \in T_c \ \wedge \ b_t \geq b_{t - 1}} (b_t - b_{t - 1}) + \sum_{t \in T_q \ \wedge \  b_t \geq b_{t - 1}} (b_t - b_{t - 1})
\end{equation*}

For $t \in T_q$, from the sequence definition (Def.~\ref{def:seq_a_b_general}), we have that %: $b_t = b_{t - 1} + 4 \cdot  \frac{w}{M} + 4 \cdot \sqrt{\frac{w}{M}} \cdot  \sqrt{b_{t - 1}}$, this implies that
$b_t > b_{t - 1}$ and hence:
\begin{equation*}
    \begin{split}
    \sum_{t \in T_c  \ \wedge \  b_t < b_{t - 1}} (b_{t - 1} - b_{t}) 
    %&< \sum_{t \in T_c \ : \ b_t \geq b_{t - 1}} (b_t - b_{t - 1}) + \sum_{t \in T_q} (b_t - b_{t - 1}) \\
    < \sum_{t \in T_c  \ \wedge \  b_t \geq b_{t - 1}} (b_t - b_{t - 1}) + 4 \tau_q \cdot \solnl + 4 \sqrt{\solnl} \sum_{t \in T_q} \sqrt{b_{t - 1}}
    \end{split}
\end{equation*}

By applying step (2), we get:
\begin{equation*}
     \sum_{t \in T_c  \ \wedge \  b_t < b_{t - 1}} (b_{t - 1} - b_{t}) <  
     \sum_{t \in T_c  \ \wedge \  b_t \geq b_{t - 1}} (b_t - b_{t - 1}) + 4 \solnl \tau_q  + 4 \solnl \tau_q(\sqrt{\tau_c} + \tau_q - 1)
\end{equation*}
By subtracting the first sum from the right hand side we get: %the following lower bound for the sum of $z_t$:
% \begin{equation}
%     \sum_{t \in T_c} (b_{t - 1} - b_{t}) < 4 \tau_q \cdot \solnl (1 + \sqrt{\tau_c} + \tau_q - 1)
% \end{equation}

% As a result, this implies the following lower bound for the sum of $z_t$:
\begin{equation*}
     \sum_{t \in T_c} z_{t - 1} =  \solnl \cdot \tau_c + \sum_{t \in T_c} (b_{t - 1} - b_{t}) 
     %< \frac{w}{M} \cdot \tau_c + 4 \tau_q \cdot \frac{w}{M} (\sqrt{\tau_c} + \tau_q) 
     < \solnl \cdot \left(\tau_c + 4 \tau_q^2 + 4 \tau_q \sqrt{\tau_c}  \right)
\end{equation*}

Finally, by using the Cauchy-Schwarz inequality: %$\left( \sum_{t \in T_c} \sqrt{z_{t - 1}} \right)^2 \leq \sum_{t \in T_c} z_{t - 1} \cdot \sum_{t \in T_c} 1 $:
\begin{equation*}
    \begin{split}
     \sum_{t \in T_c} \sqrt{z_{t - 1}} \leq \sqrt{ \solnl \cdot \left(\tau_c + 4 \tau_q^2 + 4 \tau_q \sqrt{\tau_c}  \right)} \cdot \sqrt{\tau_c} %\\
                                %&\leq \sqrt{\frac{w}{M}} \cdot (\sqrt{\tau_c} + 2\tau_q) \sqrt{\tau_c} =
                                \leq \sqrt{\solnl} \cdot (\tau_c + 2\tau_q \sqrt{\tau_c})
    \end{split}
\end{equation*}

% \paragraph{
\medskip 
{\bf (4)} In the final step, we show that $a_\aquery \ge 1 - 4 \solnl (\sqrt{\tau_c} + \tau_q)^2$ .  
% We can rewrite $a_{\tau}$ as:
% \begin{equation}
%      a_{\tau} = a_0 - a_0 + \cdots + a_{\tau - 1} - a_{\tau - 1} + a_{\tau} = a_0 + \sum_{t = 1}^{\tau} (a_t - a_{t - 1}) = 1 + \sum_{t = 1}^{\tau} (a_t - a_{t - 1})
% \end{equation}

From the definition of $a_{t}$ (Def.~\ref{def:seq_a_b_general}):
\begin{align*}
  a_{\tau} %& = a_0 - a_0 + \cdots + a_{\tau - 1} - a_{\tau - 1} +
           %  a_{\tau}\\
           & = a_0 + \sum_{t = 1}^{\tau} (a_t - a_{t - 1}) \\
           & = 1 - \sum_{t \in T_c} \left( 2\solnl + 2 \sqrt{\solnl} \cdot \sqrt{z_{t - 1}} \right) - \sum_{t \in T_q} \left( 4 \solnl + 4 \sqrt{\solnl} \cdot \sqrt{b_{t - 1}} \right) \\
           & = 1 - 2 \tau_c \solnl - 4 \tau_q \solnl - 2 \sqrt{\solnl} \sum_{t \in T_c} \sqrt{z_{t - 1}} - 4 \sqrt{\solnl} \sum_{t \in T_q} \sqrt{b_{t - 1}} \\
\end{align*}
Using the bounds derived in steps (2) and (3), we get :
\begin{align*}
  a_{\tau} &\geq 1 - 2 \tau_c \solnl - 4 \tau_q \solnl - 2 \solnl \cdot (\tau_c + 2\sqrt{\tau_c} \tau_q) - 4 \solnl \cdot \tau_q(\sqrt{\tau_c} + \tau_q - 1) \\
          %   &= 1 - 2\frac{w}{M} \left(\tau_c + 2 \tau_q + \tau_c + 2 \sqrt{\tau_c} \tau_q + 2 \tau_q \sqrt{\tau_c} + 2\tau_q^2 - 2 \tau_q \right) \\
            % &= 1 - 4 \solnl \left(\tau_c + 2 \sqrt{\tau_c} \tau_q + \tau_q^2 \right) 
        &= 1 - 4 \solnl (\sqrt{\tau_c} + \tau_q)^2
\end{align*}
\end{proof}

%--------------------------%
\section{Applications of Bernoulli Search}
\label{sec:apps}
%--------------------------%
\subsection{Generic Security of Hash Functions against Hybrid Adversaries}
\label{sec:hash} 
%--------------------------%

In this section we apply our hardness result on the Bernoulli Search problem to establish important security properties of hash functions against hybrid adversaries. We adapt the techniques developed in \cite{HRS16} against full quantum adversaries, and our proof proceeds as follows:
\begin{tiret}
\item Reduce the hash security properties to the Bernoulli Search problem. More concretely, show how an instance of the Bernoulli Search problem can be turned into an instance of the security experiment.
\item Determine the number of classical and quantum queries in the
  reduction.
\item Apply our hardness bound on the Bernoulli Search problem  (Theorem~\ref{thm:bernoulli}) to bound the success probability of breaking the security properties of hash functions.
\end{tiret}

We now formally introduce the security properties of hash functions to be analyzed in the hybrid model.

\subsubsection{Hash Functions Background}

Let $n \in \mathbb{N}$ be the security parameter, $m=\poly(n)$, $k=\poly(n)$, and $\cH_n :=\{H_K : \bool^m \rightarrow \bool^n\}_{K\in \bool^k}$ be a family of hash functions, where $K$ denotes the index of the hash function. We will denote by $M$ the input of the hash function.

\begin{definition}[$\ow$] For any $\qpt$ adversary $\cA$, we define the probability of success of breaking the one-wayness ($\ow$) of a family of hash functions ${\mathcal{H}_n}$ as: 
\begin{equation*}
    \begin{split}
    \succp^{\ow}_{\mathcal{H}_n}(\cA) &= \Pr[ K \leftarrow \bool^k \ , \ M  \leftarrow \bool^m,Y \leftarrow H_K(M) \ ; \\
    & M' \leftarrow \cA(K,Y) : Y = H_K(M')]
    \end{split}
\end{equation*}
Similarly, we define single-function, multi-target preimage resistance ($\smow$):
\begin{equation*}
    \begin{split}
    \succp^{p-\smow}_{\mathcal{H}_n}(\cA) &= \Pr[K \leftarrow \bool^k \ , \ M_i \leftarrow \bool^m \ , Y_i \leftarrow H_K(M_i) \, \ 0 < i \leq p; \\
    & M' \leftarrow \cA(K, (Y_1, \cdots ,Y_p)) : \exists 0 < i \leq p, Y_i = H_K(M')]
    \end{split}
\end{equation*}
And we also define multi-function, multi-target preimage resistance ($\mmow$): %as:
\begin{equation*}
    \begin{split}
    \succp^{\mmow}_{\mathcal{H}_n}(\cA) &= \Pr[K_i \leftarrow \bool^k \ , \ M_i \leftarrow \bool^m \ , \ Y_i \leftarrow H_{K_i}(M_i) \ , \ 0 < i \leq p  \\
    & (j, M') \leftarrow \cA((K_1, Y_1), \cdots , (K_p, Y_p)) \ : \ Y_j = H_{K_j}(M')]
    \end{split}
\end{equation*}
\end{definition}

\begin{definition}[$\spr$] For any $\qpt$ adversary $\cA = (\cA_1, \cA_2)$, we define the probability of success of breaking the second-preimage resistance ($\spr$)  of a family of hash functions ${\mathcal{H}_n}$ as: 
    \begin{equation*}
    \begin{split}
    \succp^{\spr}_{\mathcal{H}_n}(\cA) &= \Pr[ K \leftarrow \bool^k \ , \ M  \leftarrow \bool^m \ ; \\
    & M' \leftarrow \cA(K,M) : M' \neq M \text{ and } H_K(M) = H_K(M')]
    \end{split}
\end{equation*}
\end{definition}

\begin{definition}[$\etcr$] For any $\qpt$ adversary $\cA$, we define the probability of success of breaking the extended target collision-resistance ($\etcr$) of a family of hash functions ${\mathcal{H}_n}$ as: 
        \begin{equation*}
    \begin{split}
    \succp^{\etcr}_{\mathcal{H}_n}(\cA) &= \Pr[ M \leftarrow \cA_1(1^n) \ , \ K  \leftarrow \bool^k  ; \\
    & (M', K') \leftarrow \cA_2(K, M) : M' \neq M \text{ and } H_K(M) = H_{K'}(M')]
    \end{split}
\end{equation*}
\end{definition}
Second-preimage resistance and extended target collision-resistance can be extended to the $\mathsf{SM}$ (single-function, multi-target) and $\mathsf{MM}$ (multi-function, multi-target) versions analogous to the one-wayness case.

%\fsnote{use bernoulli search specifically}

%\subsubsection{Main Results}
\subsubsection{Hybrid Security of Hash Functions}

\begin{lemma}[Hybrid Security of $\ow$] \label{lemma:proof_ow}
Let $m = cn$ for $c > 1$ constant and $p = o(2^n)$. For any \hybrid{} algorithm $\cA$ with $\tau_c$ classical queries and $\tau_q$ quantum queries we have:
\begin{align*}
  \succp^{\ow}_{\mathcal{H}_n}(\cA) & \leq \frac{1}{2^n} \cdot (2\sqrt{\cquery} + 4
                                      \qquery + 1)^2 \, ,\\
  \succp^{\smow}_{\mathcal{H}_n}(\cA) & \leq p \cdot \frac{1}{2^n} \cdot (2\sqrt{\cquery} + 4
                                        \qquery + 1)^2  \, , \\
  \succp^{\mmow}_{\mathcal{H}_n}(\cA) & \leq \frac{1}{2^n} \cdot (2\sqrt{\cquery} + 4
                                        \qquery + 1)^2 \, .
\end{align*}
\end{lemma}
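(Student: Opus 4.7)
The plan is to reduce each one-wayness experiment to Bernoulli Search with an appropriately chosen Bernoulli parameter $\berpar$, following the reduction template of~\cite{HRS16}, and then to invoke Theorem~\ref{thm:bersearch}.

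For $\ow$, I will construct an algorithm $\cB$ that solves Bernoulli Search with $\berpar = 1/2^n$ using any hybrid adversary $\cA$ against $\ow$. Given oracle access to $f \gets \distr_\berpar$ on domain $\bool^m$, $\cB$ samples uniform $K \in \bool^k$ and $Y \in \bool^n$ and simulates the random oracle $H_K$ as follows: if $f(M) = 1$, return $Y$; otherwise return a uniform element of $\bool^n \setminus \{Y\}$ using fresh randomness (realised coherently in the quantum case via a $2\qquery$-wise independent function). When $f \gets \distr_{1/2^n}$, the resulting $H_K$ is distributed exactly as a uniformly random function from $\bool^m$ to $\bool^n$, and any $M'$ returned by $\cA$ with $H_K(M') = Y$ is automatically a $1$-preimage of $f$. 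Each classical query of $\cA$ to $H_K$ costs one classical query to $f$, while a quantum superposition query to $H_K$ costs two via the standard compute--uncompute trick (query $f$ once to write $f(M)$ into an ancilla, apply the local map taking $(f(M),\text{randomness})$ to $H_K(M)$, XOR it onto the output register, and query $f$ again to erase the ancilla), yielding $2\qquery$ quantum queries in total. Plugging $(\cquery, 2\qquery)$ and $\berpar = 1/2^n$ into Theorem~\ref{thm:bersearch} yields the first bound.

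For $\smow$ I will adapt the same template with $\berpar = p/2^n$: on input $M$ the simulator returns a uniform element of $\{Y_1,\dots,Y_p\}$ when $f(M)=1$, and a uniform element of $\bool^n \setminus \{Y_1,\dots,Y_p\}$ otherwise. The hypothesis $p = o(2^n)$ ensures the $Y_i$ are distinct with overwhelming probability, so the simulation is statistically indistinguishable from a genuine random oracle, giving the bound $p\cdot(1/2^n)(2\sqrt{\cquery}+4\qquery+1)^2$. For $\mmow$, I define a single Bernoulli function on the enlarged domain $[p]\times\bool^m$ by $f(j, M) = [H_{K_j}(M) = Y_j]$, which is Bernoulli with $\berpar = 1/2^n$; the simulator handles each $H_{K_j}$ using only the corresponding slice $f(j,\cdot)$, so per-query counts are unchanged and Theorem~\ref{thm:bersearch} gives the third bound.

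The main technical obstacle will be verifying that the coherent simulation of the random oracle is unitary and consistent across queries while staying within $2\qquery$ quantum queries to $f$; this is precisely what forces the factor $4\qquery$ (rather than $2\qquery$) in the bounds. Replacing the truly random function $H_K$ by a $2\qquery$-wise independent surrogate preserves indistinguishability against a $\qquery$-quantum-query adversary by standard results on limited-independence oracles, so the simulator needs only polynomially much randomness, and the reduction is faithful.
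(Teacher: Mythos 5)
Your proposal matches the paper's proof: the same reduction from Bernoulli Search with $\berpar = 1/2^n$ (marked inputs sent to the target $Y$, all others to a random function into $\bool^n\setminus\{Y\}$), the same query accounting (one classical $f$-query per classical $H$-query, two quantum $f$-queries per quantum $H$-query via compute--uncompute), and the same invocation of Theorem~\ref{thm:bersearch} with parameters $(\cquery, 2\qquery)$, with your $\smow$/$\mmow$ variants following the intended HRS16 template that the paper leaves implicit. The only point you gloss over in the $\ow$ case is that the simulated pair $(Y, H_K)$ with $Y$ uniform and independent of the oracle is not identically distributed to the genuine pair $(H_K(M), H_K)$; the paper bounds this statistical distance by $\tfrac{1}{2}\sqrt{2^n/2^m}$ via Jensen's inequality, which is exactly where the hypothesis $m = cn$ with $c>1$ enters.
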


\begin{lemma}[Hybrid Security of $\spr$] For any \hybrid{} algorithm
  $\cA$ with $\tau_c$ classical queries and $\tau_q$ quantum queries
  we have:
  \begin{align*}
    \succp^{\spr}_{\mathcal{H}_n}(\cA) & \leq \frac{1}{2^n} \cdot (2\sqrt{\cquery} + 4
                                         \qquery + 1)^2 \, , \\
    \succp^{\smspr}_{\mathcal{H}_n}(\cA) & \leq p \cdot \frac{1}{2^n} \cdot (2\sqrt{\cquery} + 4
                                           \qquery + 1)^2 \, , \\
    \succp^{\mmspr}_{\mathcal{H}_n}(\cA) & \leq \frac{1}{2^n} \cdot (2\sqrt{\cquery} + 4
                                           \qquery + 1)^2 \, .
  \end{align*}
\end{lemma}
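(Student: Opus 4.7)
The plan is to follow the three-step blueprint used for the one-wayness results in Lemma~\ref{lemma:proof_ow}: (i) reduce each $\spr$ variant to a Bernoulli Search instance with an appropriate $\berpar$, (ii) account for the overhead of simulating the hash oracle with Bernoulli queries, and (iii) apply Theorem~\ref{thm:bersearch}.

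For the single-target $\spr$ case, consider an instance $(K, M)$ with $Y := H_K(M)$. Modelling $H_K$ as a random oracle, for each $M' \in \{0,1\}^m \setminus \{M\}$ the event $H_K(M') = Y$ occurs independently with probability $1/2^n$. Hence the Boolean function $g(M') := \identity[H_K(M') = Y]$ (with $g(M) := 0$) is distributed as $\distr_\berpar$ with $\berpar = 1/2^n$. The reduction uses its Bernoulli oracle for $g$ to simulate $H_K$ toward $\cA$: on input $M'$, if $g(M') = 1$ return $Y$, otherwise return an independent uniform value from $\{0,1\}^n \setminus \{Y\}$, drawn using fresh randomness in the classical case and via a $2\qquery$-wise independent function in the quantum case. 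This simulation is distributionally identical to a random $H_K$ conditioned on $H_K(M) = Y$, and uses one Bernoulli query per classical hash query and two Bernoulli queries per quantum hash query (the second query serves to uncompute the auxiliary flag for reversibility). Any adversary that returns $M' \neq M$ with $H_K(M') = Y$ thereby outputs a preimage of $1$ for $g$; plugging $\cquery$ classical and $2\qquery$ quantum Bernoulli queries into Theorem~\ref{thm:bersearch} yields $\frac{1}{2^n}\cdot(2\sqrt{\cquery} + 4\qquery + 1)^2$.

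The $\smspr$ and $\mmspr$ cases proceed identically, only changing the parameter $\berpar$ and (for $\mmspr$) the search domain. For $\smspr$, let $g(M') = 1$ iff $H_K(M')$ collides with some $H_K(M_i)$; the random-oracle values at distinct $M'$ are independent, and a union bound over $i$ gives $\Pr[g(M') = 1] \le p/2^n$, so $g \sim \distr_{\berpar}$ for some $\berpar \le p/2^n$. Running the same simulation, now planting the various $Y_i$'s at flagged positions, delivers the bound $p \cdot \frac{1}{2^n}(2\sqrt{\cquery} + 4\qquery + 1)^2$. For $\mmspr$, enlarge the search domain to $[p] \times \{0,1\}^m$ and define $g(j, M') = 1$ iff $H_{K_j}(M') = H_{K_j}(M_j)$; the independence of the $p$ random oracles gives $\Pr[g(j, M') = 1] = 1/2^n$ independently across the expanded domain, so $g \sim \distr_{1/2^n}$. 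Simulating each $H_{K_j}$ from the corresponding slice of the Bernoulli oracle at the same query cost yields $\frac{1}{2^n}(2\sqrt{\cquery} + 4\qquery + 1)^2$.

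The principal technical obstacle is to verify that the coherent simulation of a random oracle from a Bernoulli oracle is (a) distributionally indistinguishable from a true random oracle against $\qquery$ quantum queries, (b) implementable as a reversible procedure on the adversary's registers plus a bounded-size ancilla, and (c) uses only two Bernoulli queries per quantum hash query. Item (a) rests on the fact that $2\qquery$-wise independent functions perfectly fool any quantum adversary making $\qquery$ queries, while items (b)--(c) rely on the standard ``compute flag, write hash, uncompute flag'' structure already used in the proof of Lemma~\ref{lemma:proof_ow} and in~\cite{HRS16}; the only new ingredient in the hybrid setting is tracking the classical and quantum query budgets of the simulated oracle separately so as to apply Theorem~\ref{thm:bersearch} with the correct substitution $(\cquery, \qquery) \mapsto (\cquery, 2\qquery)$.
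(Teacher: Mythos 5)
Your proposal follows the same route the paper intends: the paper proves only the one-wayness case (Lemma~\ref{lemma:proof_ow}) and states that the $\spr$ variants ``follow similarly,'' and your reduction-to-Bernoulli-Search template, the query accounting (one Bernoulli query per classical hash query, two per quantum hash query via compute/uncompute), and the final substitution $(\cquery,\qquery)\mapsto(\cquery,2\qquery)$ into Theorem~\ref{thm:bersearch} are exactly the intended adaptation. The choice of $\berpar$ in each variant ($1/2^n$ for $\spr$ and $\mmspr$ on the enlarged domain $[p]\times\bits^m$, and $p/2^n$ for $\smspr$) also reproduces the stated bounds.

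One concrete slip needs fixing in the single-target case: you set $g(M):=0$ and then route \emph{every} input with $g(M')=0$, including $M$ itself, through the branch that returns a uniform value from $\bits^n\setminus\{Y\}$. That makes the simulated $H_K(M)$ different from $Y$, whereas the $\spr$ game requires $Y=H_K(M)$ and the winning condition is $H_K(M')=H_K(M)$; as written, the planted target $Y$ is not the value the adversary is trying to collide with, so a successful $\spr$ adversary need not output a $1$-preimage of $g$. The repair is immediate: hard-wire the simulated oracle to return $Y$ on input $M$ (and mark only points $M'\neq M$ via the Bernoulli function). With that fix the simulated oracle is in fact \emph{exactly} a uniformly random function --- each point independently hits $Y$ with probability $1/2^n$ and is otherwise uniform on the complement --- so the statistical-distance step of the OW proof is not even needed here, and the requirement $M'\neq M$ is automatic because only points other than $M$ are marked. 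The $2\qquery$-wise-independence remark is harmless but unnecessary, since the reduction is a query-complexity argument and need not be time-efficient, exactly as in the paper's OW proof.
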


\begin{lemma}[Hybrid Security of $\etcr$] For any \hybrid{} algorithm $\cA$ with $\tau_c$ classical queries and $\tau_q$ quantum queries we have:
\begin{align*}
  \succp^{\etcr}_{\mathcal{H}_n}(\cA) & \leq \frac{1}{2^n} \cdot (2\sqrt{\cquery} + 4
                                        \qquery + 1)^2 + \frac{8(\tau_c + \tau_q)^2}{2^k} \, , \\
  \succp^{\mmetcr}_{\mathcal{H}_n}(\cA) & \leq p \cdot \left(\frac{1}{2^n} \cdot (2\sqrt{\cquery} + 4 
                                          \qquery + 1)^2 +
                                          \frac{8(\tau_c +
                                          \tau_q)^2}{2^k} \right) \, .
\end{align*}
\end{lemma}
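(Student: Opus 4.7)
The plan is to follow the template used for the preceding $\ow$ and $\spr$ lemmas and adapt the $\etcr$ reduction of~\cite{HRS16} to the hybrid query setting. The Bernoulli parameter will again be $\eta = 1/2^n$, but because $\etcr$ allows the adversary to choose its own key $K'$ in the response, the reduction must additionally simulate a random hash \emph{family}, not just a single random function. The first term of the bound will come from the Bernoulli Search hardness result (Theorem~\ref{thm:bersearch}), while the extra $8(\tau_c+\tau_q)^2/2^k$ summand accounts for the cost of this simulation.

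Concretely, I would build a hybrid solver $\mathcal{B}$ for Bernoulli Search on domain $\{0,1\}^k \times \{0,1\}^m$ from any $\etcr$ adversary $\cA=(\cA_1,\cA_2)$ as follows: $\mathcal{B}$ runs $\cA_1$ to obtain $M$, then samples $K \leftarrow \{0,1\}^k$ and $Y \leftarrow \{0,1\}^n$ uniformly, and runs $\cA_2$ on $(K,M)$. To each hash query $H_{K''}(M'')$ from $\cA_2$, $\mathcal{B}$ answers $Y$ if $f(K'',M'')=1$ and answers with a freshly sampled value consistent with a small-range distribution otherwise, implicitly defining $H_K(M):=Y$. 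If $\cA_2$ outputs $(M', K')$ with $M'\neq M$ and $H_{K'}(M')=Y$, then $(K',M')$ is a preimage of $1$ under $f$. Each of $\cA_2$'s hash queries becomes (at most) two oracle calls to $f$ via the standard $|0\rangle-|1\rangle$ phase trick, so a classical/quantum query budget of $(\tau_c,\tau_q)$ translates into $(\tau_c, 2\tau_q)$ queries to the Bernoulli oracle. Plugging this into Theorem~\ref{thm:bersearch} with $\eta=1/2^n$ yields the $\frac{1}{2^n}(2\sqrt{\tau_c}+4\tau_q+1)^2$ summand.

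The additive term $8(\tau_c+\tau_q)^2/2^k$ arises from bounding the distinguishing advantage between the genuine random hash family and the small-range distribution that $\mathcal{B}$ uses to simulate it; this is exactly Zhandry's small-range distribution bound as invoked in~\cite{HRS16}, now with the total query count $\tau_c+\tau_q$ in place of the purely quantum one. A classical query can always be treated as a quantum query immediately followed by a computational-basis measurement, so the small-range distinguishing bound continues to hold when some of the queries are classical, giving exactly $8(\tau_c+\tau_q)^2/2^k$. Combining the two contributions via a triangle-style inequality on the success probabilities gives the stated single-target bound; the $\mmetcr$ bound follows by a union bound over the $p$ targets, which multiplies both summands by $p$.

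The main obstacle I anticipate is a careful accounting of the query budget inside the simulation, together with justifying that the small-range distribution argument of Zhandry transfers cleanly to the hybrid model; the rest is a direct instantiation of the Bernoulli Search bound. A secondary delicate point is handling the event that $\cA_2$ returns $K'=K$: since $K'$ is the adversary's output (and $M'\neq M$), this case is still a valid Bernoulli preimage at $(K,M')$, so no separate bound is needed, but I would verify this explicitly before invoking Theorem~\ref{thm:bersearch}.
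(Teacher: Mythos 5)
Your handling of the first summand is fine and matches the paper's template: embed a Bernoulli instance with $\berpar = 1/2^n$ into the hash oracle, note that a quantum query to $H$ costs two queries to $f$ (compute/uncompute) while a classical query costs one, and invoke Theorem~\ref{thm:bersearch} with budget $(\cquery, 2\qquery)$ to get $\frac{1}{2^n}(2\sqrt{\cquery}+4\qquery+1)^2$. (The paper itself only writes out the $\ow$ case and defers the rest to the reductions of~\cite{HRS16}, so you are being judged against that source.) Your side remark about $K'=K$ is also correct: since $M'\neq M$, the pair $(K,M')$ is not the programmed point and a win still yields a Bernoulli preimage.

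The genuine gap is in the second summand. You attribute $8(\cquery+\qquery)^2/2^k$ to Zhandry's small-range distribution bound, but that bound has the form $O(q^3/r)$ for a range of size $r$ and is used to make simulations efficient; it neither produces a quadratic dependence on the query count nor a $2^k$ in the denominator. In the $\etcr$ reduction of~\cite{HRS16} the $2^k$ term comes from a different place entirely: $\cA_1$ makes oracle queries to $H$ \emph{before} the challenge key $K$ is sampled, so after $K$ is drawn the reduction must \emph{reprogram} the already-queried oracle at the single point $(K,M)$ to equal the target $Y$. The adversary's ability to detect this adaptive reprogramming is bounded by (a BBBV/one-way-to-hiding style argument using) the expected query mass that $\cA_1$ placed on the uniformly random key $K$, which is where $O(q^2/2^k)$ arises. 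Your reduction as written never gives $\cA_1$ oracle access at all and only answers $\cA_2$'s queries, so it silently assumes away the very issue the $2^k$ term is there to pay for; conversely, if $\cA_1$ does query $H$, then ``implicitly defining $H_K(M):=Y$'' afterwards is an inconsistency you have not bounded. Your observation that a classical query can be viewed as a measured quantum query is the right tool for lifting the reprogramming bound to the hybrid setting with total count $\cquery+\qquery$ — but it needs to be applied to the reprogramming lemma, not to the small-range simulation. The union bound over $p$ targets for $\mmetcr$ is fine once the single-target argument is repaired.
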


The results are compared with the classical and quantum adversary settings in~\cref{tab:comparison}.

\begin{table}[H]
\begin{center}
\setlength\tabcolsep{4.0pt}
 \begin{tabular}{||c | c | c | c ||} 
 \multicolumn{1}{c}{}  &  \multicolumn{1}{c}{$\begin{array} {lcl} \text{Classical} \\ \text{Adversary} \end{array}$} &   \multicolumn{1}{c}{$\begin{array} {lcl} \text{Quantum} \\ \text{Adversary} \end{array}$} &  \multicolumn{1}{c}{$\begin{array} {lcl} \text{Hybrid} \\ \text{Adversary} \end{array}$}  \\ 
 \hline\hline
$\begin{array} {lcl} \ow, \mmow, \\ \spr, \mmspr \end{array}$ & $\frac{q + 1}{2^n}$ & $O(\frac{(q + 1)^2}{2^n})$ & $O(\frac{(\sqrt{\tau_c} + \tau_q + 1)^2}{2^n})$ \\ 
 \hline
$\begin{array} {lcl} \smow, \\ \smspr \end{array}$ & $p \cdot \frac{q + 1}{2^n}$ & $O(p\frac{(q + 1)^2}{2^n})$ & $O(\frac{p(\sqrt{\tau_c} + \tau_q + 1)^2}{2^n})$ \\
 \hline
$\etcr$ & $\frac{q + 1}{2^n} + \frac{q}{2^k}$ & $O(\frac{(q + 1)^2}{2^n} + \frac{q^2}{2^k})$ & $O(\frac{(\sqrt{\tau_c} + \tau_q + 1)^2}{2^n} + \frac{(\tau_c + \tau_q)^2}{2^k})$ \\
\hline
$\begin{array} {lcl} \mmetcr \end{array}$ & $p \cdot \left( \frac{q + 1}{2^n} + \frac{q}{2^k} \right)$ & $O(p\frac{(q + 1)^2}{2^n} + p\frac{q^2}{2^k})$ & $O(\frac{p(\sqrt{\tau_c} + \tau_q + 1)^2}{2^n} + \frac{p(\tau_c + \tau_q)^2}{2^k})$ \\ 
\hline 
\end{tabular}
\end{center}
 \caption{Security of hash functions $\cH_n :=\{H_K : \bool^m \rightarrow \bool^n\}_{K\in \bool^k}$ against generic classical, quantum and hybrid adversaries. Entries represent the probability of success of classical adversaries equipped with $q$ classical queries, quantum adversaries equipped with $q$ quantum queries, respectively \hybrid{} adversaries equipped with $\tau_c$ classical and $\tau_q$ quantum queries.}
\label{tab:comparison}
\end{table}

\vspace{-2em}

We give an example proof for the one-wayness, the rest follows similarly. %very similar proofs. 

\begin{proof}[Proof of~\cref{lemma:proof_ow} ($\ow$)]
  Given an Bernoulli Search instance, we will show how to construct an
  instance of one-wayness ($\ow$).

\begin{pcvstack}[boxed,center,space=1em]
\procedureblock[linenumbering]
{Bernoulli Search to $\ow$ Reduction }{
 	\textbf{Input}: f : \bool^m \rightarrow \bool \text{ sampled from distribution } \distr_\berpar. \text{ Set } \berpar = \frac{1}{2^n}; \\
        \text{Sample uniformly } y \in \bool^n; \\
        \text{Let random function } g : \bool^m \rightarrow \bool^n - \{y\}; \\
        \text{Construct function } G: \bool^m \rightarrow \bool^n \text{ defined as:} \\
          G(x) =  y \ \ \ \ \ \ \ \text{, if } f(x) = 1 \\
          G(x) = g(x) \ \ \ \text{, else} \\
        % \begin{cases}
         %     & y \ \ \ \text{ if } f(x) = y \\
         %     & g(x) \ \ \ \text{else}
         % \end{cases}
        \textbf{Output}: \text{$\ow$ instance } (y, G).
  }
  
  \procedure[linenumbering,mode=text]{$\ow$ Adversary}{
 \textbf{Input}: Given $y$ and oracle access to $G$ \\
 \textbf{Task}: Find $x \in \bool^m$ such that $G(x) = y$
 }
\end{pcvstack}

\emph{Analysis of the reduction.} We first argue that the reduction outputs $(y, G)$ a valid instance of the $\ow$ experiment except with negligible discrepancy. Observe that $(y, G)$ is distributed identically to the distribution $D_1 = \{(z, H)\}$, where $z$ is sampled uniformly at random from $\bool^n$ and $H : \bool^m \rightarrow \bool^n$ random function. On the other hand, the distribution $D_0 := \{(H(x), H)\}$ in the $\ow$ experiment is obtained by choosing $H$ uniformly at random and $x$ uniformly from domain $\bool^m$. Then:
\begin{equation*}
  \mathsf{SD}(D_0, D_1) = \frac{1}{2} \sum_{z, H} \left| \Pr_{z, H}[z, H] - \Pr_{x, H}[H(x), H] \right| 
            = \frac{1}{2} \sum_{z} \sum_{H} \frac{1}{|\cH|} \left|\frac{|H^{-1}(z)}{2^m} - \frac{1}{2^n} \right| \, .
    \end{equation*}
    By Jensen's inequality we get that $\mathsf{SD}(D_0, D_1) \leq \frac{1}{2} \cdot \sqrt{\frac{2^n}{2^m}}$, which is
   negligibly small when setting $m \geq 2n$. 

\emph{Implementation of $G$ using $f$.} Now we show how to implement oracle $G$ using oracle access to $f$ and knowledge of $g$ and $y$. 
\begin{itemize}
\item Quantum query to $G$:
  $\sum_{x, z} a_{x, z} \ket{x, z} \xrightarrow{G} \sum_{x, z} a_{x,
    z} \ket{x, z + G(x)}$.
  \begin{align*}
    & \sum_{x, z} a_{x, z} \ket{x} \ket{z} \ket{0} \quad \text{(initial state to be queried) } \\
    \mapsto & \sum_{x, z} a_{x, z} \ket{x} \ket{z} \ket{f(x)} \quad \text{(evaluate $f$)}\\
    \mapsto & \sum_{x, z} a_{x, z} \ket{x} \ket{z + f(x) \cdot y + (1 - f(x)) \cdot g(x)} \ket{f(x)} \\
    = & \sum_{x, z} a_{x, z} \ket{x} \ket{z + G(x)} \ket{f(x)}  \\
    \mapsto & \sum_{x, z} a_{x, z} \ket{x} \ket{z + G(x)} \ket{0} \quad \text{(uncompute $f$)}
  \end{align*}
  It is clear that a quantum query to $G$ requires two quantum queries
  to the Bernoulli function $f$.
\item Classical query to $G$:
  \begin{equation*}
    x \mapsto f(x) \cdot y + (1 - f(x)) \cdot g(x) = G(x)
    \, .
  \end{equation*}
A classical query to $G$ requires a classical query to Bernoulli function $f$.
\end{itemize}
Therefore, a hybrid adversary against $\ow$ with $\tau_c$ classical queries and $\tau_q$ quantum queries would give rise to a hybrid algorithm that solves the Bernoulli Search problem using $\tau_c$ classical queries and $2 \cdot \tau_q$ quantum queries. By our hardness bound on the Bernoulli Search (Theorem~\ref{thm:bernoulli}, with $\berpar = \frac{1}{2^n}$), we conclude that:
\begin{equation*}
    \succp^{\ow}_{\mathcal{H}_n}(\cA) \leq \frac{1}{2^n} \cdot (2\sqrt{\cquery} + 4
    \qquery + 1)^2 \, .
\end{equation*}
\end{proof}

%--------------------------%
\subsection{The Bitcoin Blockchain 
% against 
in the Presence of Hybrid Adversaries}
\label{sec:pqbc_hybrid} 
%--------------------------%
In this section, we apply our hardness bound on the Bernoulli search to derive a query complexity of 
%the Proof of work ($\pow$) 
proofs of work in the Bitcoin blockchain against hybrid adversaries, which in turn enables us to establish the \emph{hybrid} security of the Bitcoin
backbone protocol based on existing analysis in the classical and fully quantum setting~\cite{GKL15,CGKSW23}.

\subsubsection{The Bitcoin Backbone Protocol}

A {\em proof of work} (\pow)
enables a party to convince other parties that considerable effort has
been invested in solving a computational task.  In the blockchain setting, the objective of a \pow{} is to confirm new
transactions to be included in the blockchain.
To successfully create a \pow{} in Bitcoin, one needs to find a value
(``witness'') such that evaluating a hash function (SHA-256) on this
value together with (the hash of) the last block and new transactions
to be incorporated, yields an output below a threshold. A party who
produces such a \pow{} gets to append a new block to the blockchain
and is rewarded. A {\em blockchain} hence consists of a sequence of
such \emph{blocks}. Each party maintains such a blockchain, and
attempts to extend it via solving a \pow.

\begin{definition}[Blockchain \pow---Informal]
\label{def:blockchain-pow}
%\textit{
Given a hash function $h$, a positive integer $T$, and a string $z$
  representing the hash value of the previous block, the goal is to
  find a value $ctr$ such that $h(ctr, z) \le T$.
\end{definition}

It is shown
in~\cite{GKL15} that the blockchain data structure built by the
Bitcoin backbone protocol satisfies a number of basic properties when $h$ is modelled as a random oracle. One important one, called {\em common prefix}, requires the existence and persistence of a common prefix
of blocks among the chains of honest parties. Another one, called
\emph{chain quality}, stipulates the proportion of honest blocks in
any portion of some honest party’s chain.

\begin{definition}[Common Prefix]
The {\em common prefix} property with parameter $k \in \mathbb{N}$, 
states that for any pair of honest players $P_1, P_2$ adopting chains $\chain_1, \chain_2$ at rounds $r_1 \leq r_2$, it holds that $\chain_1^{\lceil k} \preceq \chain_2$ (the chain resulting from pruning the $k$ rightmost blocks of $\chain_1$ is a prefix of $\chain_2$).
\end{definition}

\begin{definition}[Chain Quality] \label{def:chain_quality}
The {\em chain quality} property with parameters $\mu \in \mathbb{R}$ and $l \in \mathbb{N}$, states that for any honest party $P$ with
chain $\chain$, it holds that for any $l$ consecutive blocks of $\chain$, the ratio of blocks created by honest players is at least $\mu$.
\end{definition}

\paragraph{Parameters and random variables.}
Next, we recall some important notions in the Bitcoin backbone protocol setting.
\begin{itemize}[noitemsep]
\item $\tau_c$ and $\tau_q$ denotes the number of adversarial classical queries respectively quantum queries per round;
\item $f$ is the probability that at least one honest party generates a \pow{}
in a round;
\item $\epsilon$ will be used for the concentration quality of random variables;
\item $\kappa$ denotes the security parameter;
\item $k$ denotes the number of blocks for common prefix property and $\mu$ denotes the chain quality parameter;
\item $s$ refers to the total number of rounds;
\item $p = \frac{T}{2^{\kappa}}$, where $T$ denotes the difficulty parameter for solving a \pow{}. $p$ can be understood as the probability of success of generating a \pow{} using a single classical query;
\item $f$ denotes the
probability that at least one honest player generates a \pow{} in a single
round (e.g., in the Bitcoin system, $f$ is about $2-3\%$).

\end{itemize}

\subsubsection{Hybrid Security of the Bitcoin Backbone Protocol}

\begin{theorem}
Under the following condition, which we call the \emph{hybrid honest-majority} condition:
\begin{equation*}
    \sqrt{\tau_c} + 2 \tau_q \leq \frac{1}{\sqrt{f(1 - f)p}} \cdot \mathsf{negl}(\kappa) \, ,
\end{equation*}
the desired properties of a blockchain hold with probability $1 - \mathsf{negl}(\kappa)$:
\begin{tiret}
\item The common prefix property of the Bitcoin backbone protocol holds with parameter $k \geq 2sf$, for any $s \geq \frac{2}{f}$ consecutive rounds; 
\item the chain quality property holds with parameter $l \geq 2sf$ and ratio of honest blocks $\mu$ with $\mu = f$.
\end{tiret}
\end{theorem}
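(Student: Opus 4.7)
The plan is to follow the overall strategy of the classical and fully quantum analyses of the Bitcoin backbone protocol, but replace the per-round adversarial \pow{}-production bound with the hybrid Bernoulli Search bound from \cref{thm:bernoulli}. The Bitcoin \pow{} problem (\cref{def:blockchain-pow}) on a random oracle $h$ with difficulty $T$ is a Bernoulli Search instance with parameter $\berpar = T/2^\kappa = p$: an input $ctr$ is ``marked'' iff $h(ctr,z)\le T$, independently over the uniform outputs of $h$. Hence, for an adversary making $\tau_c$ classical and $\tau_q$ quantum queries in a single round, \cref{thm:bernoulli} gives that the probability of producing even one \pow{} is at most $p\cdot(2\sqrt{\tau_c}+2\tau_q+1)^2$. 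Meanwhile, in the random oracle model the probability that at least one honest party produces a \pow{} in a round is $f$ (by assumption), a classical quantity unaffected by the adversary's quantum power.

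Next I would introduce the standard round-indexed random variables of~\cite{GKL15,CGKSW23}: $X_r\in\{0,1\}$ indicating that some honest party finds a \pow{} in round $r$, $Y_r\in\{0,1\}$ indicating \emph{uniquely successful} honest rounds, and $Z_r$ counting adversarial \pow{}s in round $r$. We have $\expt[X_r]=f$, $\expt[Y_r]\ge f(1-f)$ (because with probability $(1-f)$ the remaining honest parties fail, independently), and by the hybrid Bernoulli bound $\expt[Z_r]\le p\cdot(2\sqrt{\tau_c}+2\tau_q+1)^2 \le p\cdot 2(2\sqrt{\tau_c}+2\tau_q)^2 \approx 8p(\sqrt{\tau_c}+\tau_q)^2$ up to lower-order terms. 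The hybrid honest-majority hypothesis $\sqrt{\tau_c}+2\tau_q \le \mathsf{negl}(\kappa)/\sqrt{f(1-f)p}$ is exactly designed so that $\expt[Z_r] \le \mathsf{negl}(\kappa)\cdot\expt[Y_r]$, i.e., the expected number of adversarial \pow{}s per round is negligibly smaller than the expected number of uniquely successful honest rounds.

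Having established the per-round inequality $\expt[Z_r] \ll \expt[Y_r]$, I would apply a Chernoff/Hoeffding concentration argument over any window of $s\ge 2/f$ consecutive rounds. Since the $Z_r$'s across different rounds are independent (the adversary's fresh randomness and fresh oracle values per round make them so, modulo the usual care with adaptivity handled as in~\cite{CGKSW23}), with probability $1-\mathsf{negl}(\kappa)$ we obtain $\sum_r Z_r \le (1+\epsilon)\expt[\sum Z_r]$ and $\sum_r Y_r \ge (1-\epsilon)\expt[\sum Y_r] \ge (1-\epsilon)sf(1-f)$, preserving $\sum Z_r < \sum Y_r$ in every window. This is the typical form of ``honest majority of uniquely successful rounds'' that the common-prefix and chain-quality arguments of~\cite{GKL15} require. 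Plugging the above into those arguments verbatim yields common prefix with parameter $k\ge 2sf$ (by choosing the window length $s\ge 2/f$ as stated) and chain quality with $l\ge 2sf$ and ratio $\mu = f$.

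The main obstacle is the composition step: applying a \emph{per-round} query-complexity bound obtained in the stand-alone Bernoulli Search model to an adversary that acts \emph{adaptively across many rounds} inside the full backbone protocol, while sharing state and observing honest parties' broadcasts. I would handle this exactly as in~\cite{CGKSW23}, by charging each of the adversary's $\tau_c$ classical and $\tau_q$ quantum queries in a round to an independent Bernoulli Search instance parameterized by $(z,T)$ (the previous-block hash and difficulty) freshly sampled via the random oracle at the start of that round, and invoking \cref{thm:bernoulli} as a black box per round; the union bound over the at most $\poly(\kappa)$ rounds of the execution costs only a polynomial factor, which is absorbed in the $\mathsf{negl}(\kappa)$ slack built into the hybrid honest-majority condition.
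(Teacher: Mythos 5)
Your proposal is correct and follows essentially the same route as the paper: identify per-round \pow{} production with a Bernoulli Search instance at parameter $\berpar = p = T/2^{\kappa}$, apply the hybrid hardness bound of \cref{thm:bernoulli} to cap the adversary's per-round \pow{} count below the threshold $(1-\epsilon)f(1-f)$ required by the existing analysis of~\cite{CGKSW23}, and then inherit common prefix and chain quality from that analysis. The paper simply black-boxes the round-indexed random variables, concentration bounds, and cross-round composition that you spell out (and uses the slightly looser $(2\sqrt{\cquery}+4\qquery+1)^2$ to account for the two-queries-per-simulated-query overhead in the oracle reduction, which your condition still absorbs).
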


\begin{proof}

From \cite{CGKSW23} it is known that the common prefix and chain quality properties hold as long as in any round, the number of solved $\pow$s using $\tau_c$ classical queries and $\tau_q$ quantum queries is at most $E := (1 - \epsilon)f (1 - f)$.
%Let us denote the honest-majority condition parameter by $E := (1 - \epsilon)f (1 - f)$. 
It should be clear from 
% the 
Definition~\ref{def:blockchain-pow} 
%definition of a 
%Blockchain 
%$\pow$ in the random-oracle model 
that solving a single $\pow$ is equivalent to solving the Bernoulli Search problem with distribution $\distr_\berpar$, where we set $\eta = p = \frac{T}{2^{\kappa}}$.
Then, the probability that a \hybrid{} algorithm equipped with $\tau_c$ classical and $\tau_q$ quantum queries to solve $E$ $\pow$s is at most:
\begin{equation*}
    P_{\succp}^{\pow} \leq E \cdot  \succp_{\adv,\distr_\eta} \leq E \cdot p \cdot (2\sqrt{\cquery} + 4
    \qquery + 1)^2.
\end{equation*}
As a result, the probability that the two security properties hold under the hybrid honest-majority is:
\begin{equation*}
    \begin{split}
    P_{\mathsf{sec}} &= 1 - P_{\succp}^{\pow} \geq 1 - (1 - \epsilon)f (1 - f) p  \cdot (2\sqrt{\cquery} + 4  \qquery + 1)^2 \\
                    &\geq 1 - \mathsf{negl}(\kappa)
    \end{split}
\end{equation*}

\end{proof}

%--------------------------%
\section{A Quantum Algorithm for Distributional Search} %
\label{sec:quantum_algorithm}
In this section, we describe a quantum search algorithm that is takes
into account a given distribution $\distr$ by adapting Grover's
algorithm. For convenience, we first introduce some new notations, and
derive an alternative characterization the $\solnl$ parameter. \\

%--------------------------%
\subsection{Characterization of \texorpdfstring{\solnl}{}} %
\label{sec:nu_D}
%--------------------------%

Recall that
$\solnl = \max_{\phi} \expt_{f \leftarrow \distr}{\norm{\pi_f
    \phi}^2}$. We write down the truth table to represent each
$f: [m] \rightarrow \bool$ as a bit-string $x \in \bool^m$. $\distr$
becomes a distribution on $\bool^m$ and we write $d_x := D(x)$. Then
\begin{align*}
  \solnl = \max_{\phi} \expt_{x \leftarrow \distr}{\norm{\pi_x \phi}^2}
  \, ,
  \quad 
  \text{where } \pi_x := \sum_{i : x_i = 1} \ket{i}\bra{i} \, .
\end{align*}
Let $\phi := \sum_{i = 1}^m \alpha_i \ket{i}$, with
$\norm{\alpha} = 1$. We have
\begin{align*}
  \solnl = & \max_{\alpha: \norm{\alpha} = 1} \expt_{x \leftarrow
             \distr}{\sum_{i : x_i = 1} \alpha_i^2} \\
  = & \max_{\alpha: \norm{\alpha} =
      1} \sum_{i = 1}^m \alpha_i^2 \cdot \sum_{x \in \bool^m} d_x
      \cdot  x_i \\
  = &  \max_{\alpha: \norm{\alpha} =
      1} \sum_{i = 1}^m \alpha_i^2 \cdot \weight_i  \, ,
\end{align*}
where for each $i\in [m]$, we define 
\begin{align*}
  \weight_i := \sum_{x \in \bool^m} d_x \cdot x_i \, .
\end{align*}
In other words, $\weight_i$ captures the likelihood that $x_i$ is assigned to $1$
under $\distr$.
Then it becomes clear that the maximum is achieved by a vector
$\alpha$ having 0 entries except taking $1$ on $i^*$ where
$\weight_{i^*}$ is maximized. Therefore, 
\begin{align}
  \label{eq:nu}
  \solnl = \weight_{i^*} = \max_{i \in [m]} \weight_i \, .
\end{align}

This matches our earlier analysis in the special
cases. We also note that for any $i\in [m]$, $\weight_i / \weight$,
where $\weight := \sum_i \weight_i$, can be viewed as the probability
that $x_i = 1$ when $x$ is sampled according to $\distr$.

%--------------------------%
\subsection{Quantum Search Algorithm on \texorpdfstring{$\distr$}{}} %
\label{sec:mgrover}
%--------------------------%

We are now ready to describe our search algorithm. A main distinction
from standard Grover is that the amplitudes in our initial state are
proportional to the weight $\weight_i$ rather than a uniform
superposition. 

\begin{mdframed}[linecolor=black!7, backgroundcolor=black!7]
  \begin{center}
    \textbf{Quantum Search Algorithm $\cA$ for an arbitrary
      distribution $\distr$ }
  \end{center}

  \textbf{Given}: $x \in \bool^m$ drawn from $\distr$ as a black-box
  function.

  \textbf{Goal}: find $i \in [m]$ such that $x_{i} = 1$ making $\qquery$
  quantum queries to $x$.

  \emph{Initialization}: $\cA$ constructs a unitary $U_\distr$ such
  that
\begin{equation*}
  \ket{\phi_0}: = U_\distr \ket{0} = \frac{1}{\sqrt{\weight}}\sum_{i}
  \sqrt{\weight_i} \ket{i} \, .  
\end{equation*}

\emph{Modified Grover iteration}: Repeatedly apply $G:= R_0R_x$, where
\begin{align*}
  R_0 &:= - (\identity - 2\ket{\phi_0}\bra{\phi_0}) \, ,\\
  R_x \ket{i} &:= (-1)^{x_i} \ket{i} \, .
\end{align*}

\emph{Output}: Measure the state in the computational basis and output
the measurement outcome $i$.

\end{mdframed}

Note that once $U_\distr$ is available
$R_0 = - U_\distr (\identity - \ket{0}\bra{0})U_\distr^\dagger$ can be
implemented easily, and one application of $R_x$ can be realized by
\emph{one} query to $x$. 

For any fixed $x$, we let $\varepsilon_x$ denote the probability that
$\cA$ finds a solution (i.e,. some $i$ with $x_i=1$), and
$\varepsilon:= \expt_{x\gets \distr} \varepsilon_x$ represents the
success probability of $\cA$ averaged over the distribution $\distr$.

\begin{theorem} \label{thm:general_quantum_success}
Algorithm $\cA$ with $\qquery$ quantum queries finds an $i$
  with $x_i = 1$ with probability:
  \begin{align*}
    \varepsilon \ge \qquery^2 \cdot \frac{\sum_i \weight_i^2}{\weight} \, .
  \end{align*}
\label{thm:dgrover}
\end{theorem}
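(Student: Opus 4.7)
The plan is to mimic the standard two-dimensional analysis of Grover's algorithm, carried out for each fixed input $x$ separately, and then take an expectation over $\distr$. For a fixed $x$, introduce the normalized ``good'' state $\ket{G_x} := \pi_x\ket{\phi_0}/\|\pi_x\ket{\phi_0}\|$ and the normalized ``bad'' state $\ket{B_x} := \pi_x^\perp\ket{\phi_0}/\|\pi_x^\perp\ket{\phi_0}\|$, and define $\theta_x \in [0,\pi/2]$ by $\sin\theta_x := \|\pi_x\ket{\phi_0}\|$, so that $\ket{\phi_0} = \cos\theta_x\ket{B_x} + \sin\theta_x\ket{G_x}$. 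A direct check (identical to textbook Grover) shows that $R_x$ reflects about $\ket{B_x}$ within the plane $\mathrm{span}\{\ket{G_x},\ket{B_x}\}$ and $R_0$ reflects about $\ket{\phi_0}$ in the same plane, so $G = R_0 R_x$ acts as a rotation by $2\theta_x$ inside this plane. By induction on $\tau$, the state after $\tau$ iterations equals $\cos((2\tau+1)\theta_x)\ket{B_x} + \sin((2\tau+1)\theta_x)\ket{G_x}$, and hence the probability of measuring some $i$ with $x_i=1$ is exactly
\begin{equation*}
\varepsilon_x \;=\; \bigl\|\pi_x G^{\tau}\ket{\phi_0}\bigr\|^2 \;=\; \sin^2\!\bigl((2\tau+1)\theta_x\bigr).
\end{equation*}

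Next, I would relate $\sin^2\theta_x$ to the weights. Using the explicit form of $\ket{\phi_0}$,
\begin{equation*}
\sin^2\theta_x \;=\; \|\pi_x\ket{\phi_0}\|^2 \;=\; \frac{1}{\weight}\sum_{i:\,x_i=1}\weight_i \;=\; \frac{1}{\weight}\sum_i \weight_i\, x_i,
\end{equation*}
so that by linearity of expectation and the identity $\expt_{x\gets\distr}[x_i] = \weight_i$ (which is just the definition of $\weight_i$ recalled in Section~\ref{sec:nu_D}),
\begin{equation*}
\expt_{x\gets\distr}\bigl[\sin^2\theta_x\bigr] \;=\; \frac{1}{\weight}\sum_i \weight_i^2.
\end{equation*}

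The final, and main, step is to deduce
\begin{equation*}
\varepsilon \;=\; \expt_{x\gets\distr}\!\sin^2\!\bigl((2\tau+1)\theta_x\bigr) \;\ge\; \tau^2 \cdot \expt_{x\gets\distr}\!\sin^2\theta_x,
\end{equation*}
which together with the previous identity yields the claimed bound. This is the part I expect to be the technical bottleneck: the naive pointwise inequality $\sin^2((2\tau+1)\theta)\ge \tau^2 \sin^2\theta$ holds cleanly only in the small-angle (pre-overshoot) regime where $(2\tau+1)\theta_x\le\pi/2$, since for small $\theta$ one has $\sin((2\tau+1)\theta)\sim (2\tau+1)\sin\theta$ and $(2\tau+1)^2\ge \tau^2$; but once $\theta_x$ is large enough that Grover has rotated past $\pi/2$, the pointwise comparison can fail. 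I would therefore treat the bound as meaningful in the regime $\tau^2 \sum_i\weight_i^2/\weight\le 1$ (otherwise the inequality is trivial because $\varepsilon\le 1$), and within that regime invoke a Taylor-style estimate such as $\sin(y)\ge y - y^3/6$, applied to $y=(2\tau+1)\theta_x$ and combined with $\sin\theta_x\le\theta_x$, to obtain $\sin^2((2\tau+1)\theta_x) \ge (2\tau+1)^2\sin^2\theta_x - c\cdot(2\tau+1)^4\sin^4\theta_x$, and then absorb the quartic correction after averaging by showing it is dominated by the leading quadratic term in the regime of interest. The constants are chosen loosely enough that the clean bound $\tau^2$ emerges instead of $(2\tau+1)^2$.
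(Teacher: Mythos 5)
Your proposal follows the paper's proof of this theorem essentially step for step: the same two-dimensional plane spanned by the weighted good and bad states, the same identification of $G=R_0R_x$ as a rotation by $2\theta_x$ with $\sin^2\theta_x=\frac{1}{\weight}\sum_i\weight_i x_i$, and the same computation $\expt_{x\gets\distr}[\sin^2\theta_x]=\frac{1}{\weight}\sum_i\weight_i^2$. The only divergence is at the step you flag as the bottleneck, and your instinct there is correct: the paper simply writes $\sin^2((2\qquery+1)\theta)\ge\left(\frac{2\qquery+1}{2}\theta\right)^2\ge\qquery^2\sin^2\theta$, which rests on the pointwise inequality $\sin y\ge y/2$; this holds only for $y\lesssim 1.9$, i.e., in the pre-overshoot regime, and the paper gives no argument that $(2\qquery+1)\theta_x$ stays in that range. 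Without such a restriction the stated bound can actually fail: for the Bernoulli distribution with $\berpar=3/4$ and $\qquery=1$, concentration gives $\theta_x\approx\pi/3$ for typical $x$, so $\varepsilon\approx\sin^2(\pi)\approx 0$, while the claimed lower bound is $3/4$. So your small-angle hypothesis plus Taylor estimate is not a stylistic preference but the honest repair, and in that sense your writeup is more careful than the paper's.

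Two corrections to your framing of that repair. First, when $\qquery^2\sum_i\weight_i^2/\weight>1$ the inequality is not ``trivially true because $\varepsilon\le1$''; a lower bound exceeding $1$ on a probability is simply false, so that regime must be excluded by hypothesis, not absorbed. Second, the averaged condition $\qquery^2\sum_i\weight_i^2/\weight\le1$ is not sufficient on its own: the Bernoulli example above satisfies it (the left-hand side equals $3/4$) and still violates the conclusion. What you need is pointwise control, $(2\qquery+1)\theta_x=O(1)$ for all $x$ in the support of $\distr$ (or for all but a mass of $x$ that you account for separately), after which your estimate $\sin^2 y\ge y^2(1-y^2/3)$ applied to $y=(2\qquery+1)\theta_x$ cleanly yields the $\qquery^2\sin^2\theta_x$ bound per $x$ and the expectation goes through by linearity; absorbing the quartic term only ``after averaging'' is dangerous because $\expt[\theta_x^4]$ is not controlled by $(\expt[\theta_x^2])^2$.
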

\begin{proof}
  We adapt the geometric analysis of standard Grover's algorithm to
  analyze $\cA$. First for any $x$, define two states below:
  \begin{align*}
    \ket{A_x} : = \frac{1}{\sqrt{\alpha_x}} \sum_{i:x_i =1}
    \sqrt{\weight_i}\ket{i} \, , \quad  
    \ket{B_x} : = \frac{1}{\sqrt{\beta_x}} \sum_{i: x_i = 0} \sqrt{\weight_i} \ket{i} \, ,
  \end{align*}
  with normalization factors
  \begin{align*}
    \alpha_x := \sum_{i: x_i =1} \weight_i = \sum_i
    \weight_i x_i, \quad \text{and} \quad 
    \beta_x : = \sum_{i: x_i =0} \weight_i = \sum_i \weight_i(1-x_i) \, .
  \end{align*}
  We will focus on the two dimensional plane spanned by $\ket{A_x}$
  and $\ket{B_x}$. Observe that $\phi_0$ belongs to this plane, and
  can be decomposed under the basis $\{\ket{A_x},\ket{B_x}\}$:
  \begin{align*}
    \ket{\phi_0} := \sin\theta \ket{A_x} + \cos\theta \ket{B_x} \, ,
  \end{align*}
  where
  \begin{align*}
    \sin^2\theta = \left| \langle \phi_0 | A_x\rangle \right|^2 =
    \frac{1}{\weight \cdot \alpha_x} (\sum_i \weight_i x_i)^2 = \frac{\alpha_x}{\weight}   \, . 
  \end{align*}
  We then show that on the two dimensional plane, $R_0$ is a
  reflection about $\ket{\phi_0}$ and $R_x$ is a reflection
  $\ket{B_x}$. We introduce a state $\ket{\phi_0^\perp}$ on the plane
  orthogonal to $\ket{\phi_0}$, which can be written as
  \begin{align*}
    \ket{\phi_0^\perp} = \cos\theta \ket{A_x} - \sin\theta\ket{B_x} \, .
  \end{align*}
  Clearly $\{\phi_0, \phi_0^\perp\}$ forms another basis on the plane,
  under which we can express $\ket{A_x}$ and $\ket{B_x}$ as below.
  \begin{align*}
    \ket{A_x} & = \sin\theta\ket{\phi_0} + \cos\theta\ket{\phi_0^\perp}\, , \\
    \ket{B_x} & = \cos\theta\ket{\phi_0} - \sin\theta\ket{\phi_0^\perp}   \, .
  \end{align*}
  It then becomes easy to verify that
  \begin{align*}
    R_0 \ket{A_x} & = \sin\theta\ket{\phi_0} -
                    \cos\theta\ket{\phi_0^\perp} \, , \\
    R_0 \ket{B_x} & = \cos\theta\ket{\phi_0} +
                    \sin\theta\ket{\phi_0^\perp} \, .
  \end{align*}
  Hence $R_0$ reflects about $\phi_0$. Similarly, $R_x$ reflects about
  $\ket{B_x}$ as can be seen below.
  \begin{align*}
    R_x \ket{\phi_0} & = - \sin\theta\ket{A_x} +
                    \cos\theta\ket{B_x} \, , \\
    R_0 \ket{\phi_0^\perp} & = -\sin\theta\ket{\phi_0} -
                    \cos\theta\ket{\phi_0^\perp} \, .
  \end{align*}
  As a consequence, $G = R_0R_x$ composes two reflections and
  effectively amounts to an rotation of $2\theta$.
  
  Therefore, after $\qquery$ iterations, the state becomes
  \begin{align*}
    \ket{\phi_{\qquery}} := \sin((2\qquery + 1)\theta) \ket{A_x} +
    \cos((2\qquery + 1)\theta)\ket{B_x} \, .
  \end{align*}
  This is illustrated in Figure~\ref{fig:grover}.
%  \fsnote{any expert on tikz? draw it in tikz code}
  \begin{figure}[h!]
    \centering
    \includegraphics[width= .4\textwidth]{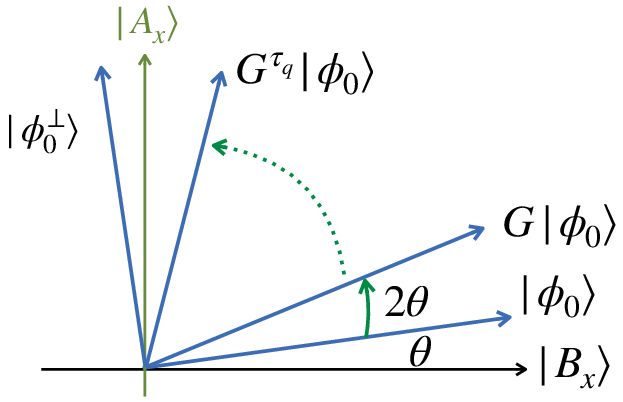}
    \caption{Illustration of the evolution in the two-dimensional
      plane.}
    \label{fig:grover}
  \end{figure}
  When measuring $\ket{\phi_{\qquery}}$, an outcome $i$ with $x_i = 1$
  occurs with probability
  \begin{align*}
    \varepsilon_x = \sin^2((2\qquery + 1)\theta) \ge \left(\frac{2 \qquery + 1}{2}\theta \right)^2
    \ge \qquery^2 \sin^2 \theta = \qquery^2 \frac{\alpha_x}{\weight} \, . 
  \end{align*}
%\Cojo{The first inequality: $sin(x) > x/2$ holds only when $x \in [0, \pi/2]$, so we need to add the assumption that the number of queries is not too large, namely: $(2T + 1) \theta < \pi /2 $.}
  Thus
  \begin{align*}
    \varepsilon  = \expt_{x\gets \distr} \varepsilon_x  \ge \qquery^2
    \frac{\expt_x\alpha_x}{\weight} = \qquery^2 \frac{\sum_i \weight_i \sum_x d_x x_i}{\weight} 
    = \qquery^2 \frac{\sum_i \weight_i^2}{\weight} \, . 
  \end{align*}
\end{proof}

\paragraph{Optimality for permutation-invariant distributions.}
Consider a special family of distributions, where $\weight_i$ are
identical for all $i \in [m]$ implying that every $i$ is mapped to $1$
with equal probability. We call such a distribution $\distr$
\emph{permutation invariant}, and in this case our quantum algorithm
$\cA$ becomes identical to the standard Grover's algorithm. It also
follows immediately~\cref{eq:nu} that for any $i, \weight_i =
\solnl$. Therefore we obtain that
\begin{align*}
  \frac{\sum_i \weight_i^2}{\weight} =   \frac{\sum_i
    \weight_i^2}{\sum_i \weight_i} = \frac{m \solnl^2}{m \solnl} =
  \solnl \, . 
\end{align*}
As a result, the quantum algorithm $\cA$ succeeds with probability
$\Omega(\qquery^2\solnl)$ in the case of permutation-invariant distribution,
which is in turn \emph{optimal} by our hardness bound
(\cref{thm:main}). This also reproves the tight quantum query
complexity for multi-uniform search and Bernoulli search. We summarize
it below.

\begin{corollary} For a permutation-invariant distribution $\distr$,
  the quantum algorithm $\cA$ coincides with the standard Grover's
  algorithm, and it succeeds with probability
  $\Omega(\qquery^2 \cdot \solnl)$ with $\qquery$ quantum queries which is
  \emph{tight}.

  In particular, multi-uniform search and Bernoulli search have tight
  quantum query complexity $\Theta(\qquery^2\frac{w}{m})$ and
  $\Theta(\qquery^2\bpara)$ for quantum algorithms with $\qquery$ queries.
  \label{cor:optimalgrover}
\end{corollary}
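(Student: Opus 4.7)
The plan is to verify the three claims of the corollary by combining the characterization of $\solnl$ in Equation~\eqref{eq:nu} with the lower bound from Theorem~\ref{thm:general_quantum_success} and the matching upper bound from Theorem~\ref{thm:main} (specialized to $\cquery = 0$). First I would unpack the definition of a permutation-invariant distribution: since $\distr$ is invariant under any permutation $\sigma$ acting on the coordinates of $\bool^m$, the marginal quantities $\weight_i = \sum_x d_x\,x_i = \Pr_{x\gets\distr}[x_i=1]$ must all coincide. Let $w_0$ denote this common value. From~\eqref{eq:nu}, $\solnl = \max_i \weight_i = w_0$, and the initial state $\ket{\phi_0}=\frac{1}{\sqrt{\weight}}\sum_i \sqrt{\weight_i}\ket{i}$ degenerates to the uniform superposition $\frac{1}{\sqrt{m}}\sum_i \ket{i}$, so $\cA$ reduces to standard Grover.

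Second I would substitute into the general lower bound of Theorem~\ref{thm:general_quantum_success}: the weight ratio simplifies as
\begin{equation*}
\frac{\sum_i \weight_i^2}{\weight} \;=\; \frac{m\,w_0^2}{m\,w_0} \;=\; w_0 \;=\; \solnl,
\end{equation*}
so $\cA$ succeeds with probability at least $\qquery^2 \solnl$, i.e. $\Omega(\qquery^2\solnl)$. For tightness I would invoke Theorem~\ref{thm:main} with $\cquery=0$: any algorithm making $\qquery$ quantum queries succeeds with probability at most $\solnl\,(2\qquery+1)^2 = O(\qquery^2\solnl)$, which matches the lower bound up to constants.

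Finally, the two specializations follow by computing $\solnl$ for each distribution. For multi-uniform search $\distr_w$, the distribution is manifestly permutation invariant (it is uniform over size-$w$ subsets), and $\weight_i = \Pr[x_i=1] = w/m$, so $\solnl = w/m$, yielding tight complexity $\Theta(\qquery^2 w/m)$. For Bernoulli search $\distr_\bpara$, independence across coordinates makes it permutation invariant and $\weight_i = \bpara$, so $\solnl = \bpara$, yielding $\Theta(\qquery^2 \bpara)$. There is no real obstacle here since the corollary is a direct unification of the algorithm analysis with the earlier hardness result; the only subtlety worth flagging is confirming that permutation invariance really does force the $\weight_i$ to coincide, which is immediate from the invariance of $d_x$ under transpositions swapping coordinates $i$ and $j$.
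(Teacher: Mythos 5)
Your proposal is correct and follows essentially the same route as the paper: equal weights $\weight_i = \solnl$ from Equation~\eqref{eq:nu}, the ratio $\sum_i\weight_i^2/\weight = \solnl$ plugged into Theorem~\ref{thm:dgrover}, and tightness via Theorem~\ref{thm:main} with $\cquery=0$. The only cosmetic difference is that the paper \emph{defines} permutation invariance directly as the $\weight_i$ being identical, whereas you derive this from invariance of $d_x$ under coordinate permutations; since that is all the argument uses, nothing changes.
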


%--------------------------%
\section{A Hybrid Algorithm for Distributional Search}
\label{sec:hybrid_algorithm}
We are now ready to describe a hybrid algorithm equipped with $\cquery$ classical queries
and $\qquery$ quantum queries.
The idea is simple: Given the distribution $\distr$, let
$S = \{i_1, ..., i_{\cquery}\} \subseteq [m]$ be the set of indices
with the $\cquery$ largest values of $\weight_i$. In case of ties, we break them
arbitrarily. Our algorithm will first issue the $\cquery$ classical
queries on $S$ to verify if there exists any $i\in S$ such that $x_i = 1$. If not, we
run the quantum search algorithm $\cA$ from before, but on the reduced search space $[m] - S$.

To run the quantum algorithm in a modular fashion, we can define an
induced distribution $\tilde D$ on $\bool^{m - \cquery}$. 
We will denote here by $x_{T}$ the substring of $x$ of size $|T|$ obtained from concatenating the bits $x_i$ for all $i \in T$  %of restricting to a set
%$T\subseteq [m]$, 
and we denote by $\bar S$ the set defined as $\bar S=[m]\backslash S$. 

To define $\newd{\distr}$, we first define
$d:= \sum_{x\in\bool^m: x_S =0} d_x$. Then for each
%$\newd{x} \in \bool^{m - \cquery}$,
$\textbf{x} \in \bool^{m - \cquery}$,
we %can set
will define
%$\newd{d}_{\newd{x}}: = \frac{d_x}{d}$,
$\newd{d}_{\textbf{x}}: = \frac{d_x}{d}$,
where $x$ is the unique string
with $x_S = 0$ and 
%$x_{\bar S}= \newd{x}$.
$x_{\bar S}= \textbf{x}$.
Note that there is a fixed
mapping that matches every index 
%$\newd{i}\in \bar S$ with $i \in [m]$
$\textbf{i} \in \bar S$ with an index $i \in [m]$
such that
%$\newd{x}_{\newd{i}} = 1$
$\textbf{x}_{\textbf{i}} = 1$
if and only if $x_i = 1$.  We assume this
mapping is performed implicitly whenever necessary. Therefore for
every 
%$\newd{i}\in \bar S$,
$\textbf{i}\in \bar S$,
we can write the weight under
$\newd{\distr}$,
% \begin{align*}
%   \newd{\weight}_{\newd{i}}= \sum_{\newd{x}\in \bool^{m-\cquery}}
%   \newd{d}_{\newd{x}} \cdot \newd{x}_{\newd{i}} = \frac{\sum_{x:x_S =0} d_x
%   \cdot x_i}{\sum_{x:x_S = 0} d_x} \, . 
% \end{align*}
\begin{align*}
  \newd{\weight}_{\textbf{i}}= \sum_{\textbf{x} \in \bool^{m-\cquery}}
  \newd{d}_{\textbf{x}} \cdot {\textbf{x}}_{\textbf{i}} = \frac{\sum_{x : x_S =0} d_x
  \cdot x_i}{\sum_{x:x_S = 0} d_x} \, . 
\end{align*}

Our hybrid algorithm can now be described %cleanly as below.
as follows.
\begin{mdframed}[linecolor=black!7, backgroundcolor=black!7]
  \begin{center}
    \textbf{Hybrid Search Algorithm $\cA_h$ for an arbitrary
      distribution $\distr$ }
  \end{center}

  \textbf{Given}: $x \in \bool^m$ drawn from $\distr$ as a black-box
  function.

  \textbf{Goal}: find $i \in [m]$ such that $x_{i} = 1$ making
  $\cquery$ classical queries $\qquery$ and quantum queries to $x$.

  \emph{Classical Stage.} $\cA$ makes classical queries for each
  $i \in S$, where $S$ as defined above consists of the indices with
  the $\cquery$ largest $\weight_i$. If some $x_i=1$, output $i$
  and abort. Otherwise, continue. 
  
  \emph{Quantum Stage.} Run the quantum algorithm $\cA$ on the induced
  distribution $\newd{\distr}$. 

\end{mdframed}

The success probability can be split into analyzing the classical and
quantum stages separately as we show below.  First, we define the following
binary random variables:
\begin{itemize}
\item $Z_c^x = 1$ if and only if $x_i=1$ for some $i\in S$ (the
  classical stage succeeds);
\item $Z_q^x = 1$ if and only if the quantum stage is successful.
\end{itemize}

\begin{lemma} \label{lemma:hybrid_lb} For any distribution $\distr$,
  the probability that hybrid algorithm $\cA_h$ succeeds is
  % with probability:
  \begin{align*}
    \Pr[\mathsf{Hybrid \ Success}] \geq \frac{1}{2} \left(\expt_{x
    \leftarrow \distr}[Z_c^x] + \expt_{x \leftarrow \distr}[Z_q^x]
    \right) \, .
  \end{align*}
\end{lemma}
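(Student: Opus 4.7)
The plan is to decompose $\Pr[\mathsf{Hybrid \ Success}]$ by conditioning on whether the classical stage succeeds, then extract two separate lower bounds on this probability and average them. The central observation is that the hybrid algorithm succeeds whenever at least one of its two stages produces a valid answer. Since the classical stage succeeds exactly on the event $\{x_S \neq 0\}$, which coincides with $\{Z_c^x = 1\}$, and since conditional on $x_S = 0$ the input is distributed according to $\newd{\distr}$ (on which the quantum stage is invoked), I would write
\begin{equation*}
\Pr[\mathsf{Hybrid \ Success}] = \expt_{x \leftarrow \distr}[Z_c^x] + \sum_{x : x_S = 0} d_x \cdot \Pr[Z_q^x = 1].
\end{equation*}

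From this identity I will extract two complementary bounds. The first, $\Pr[\mathsf{Hybrid \ Success}] \geq \expt_x[Z_c^x]$, is immediate because the residual sum is non-negative. For the second, $\Pr[\mathsf{Hybrid \ Success}] \geq \expt_x[Z_q^x]$, I would split the full expectation as
\begin{equation*}
\expt_x[Z_q^x] = \sum_{x : x_S = 0} d_x \cdot \Pr[Z_q^x = 1] + \sum_{x : x_S \neq 0} d_x \cdot \Pr[Z_q^x = 1],
\end{equation*}
and then use $\Pr[Z_q^x = 1] \leq 1$ to bound the second summand by $\sum_{x : x_S \neq 0} d_x = \expt_x[Z_c^x]$. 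Substituting into the decomposition gives $\Pr[\mathsf{Hybrid \ Success}] - \expt_x[Z_q^x] \geq \expt_x[Z_c^x] - \expt_x[Z_c^x] = 0$, which yields the second bound.

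Averaging the two inequalities produces exactly $\Pr[\mathsf{Hybrid \ Success}] \geq \frac{1}{2}(\expt_x[Z_c^x] + \expt_x[Z_q^x])$, as required. There is no real obstacle here: the argument is pure bookkeeping. The only mildly subtle point is that $Z_q^x$ is a well-defined random variable for every $x$ (the quantum stage can be analyzed on any input $x_{\bar S}$ even though the hybrid algorithm never actually executes it when $x_S \neq 0$); the $x_S \neq 0$ contribution to $\expt_x[Z_q^x]$ is harmlessly absorbed by the classical-success mass, which is precisely what lets the two bounds combine cleanly.
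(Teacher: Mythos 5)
Your proof is correct, but it takes a genuinely different route from the paper's. The paper works with the failure probability $\expt_{x \gets \distr}[(1 - Z_c^x)(1 - Z_q^x)]$, expands it by inclusion--exclusion, and then controls the cross term $\expt_x[Z_c^x Z_q^x]$ via Cauchy--Schwarz followed by AM--GM, i.e.\ $\expt[Z_c^x Z_q^x] \le \sqrt{\expt[Z_c^x]\,\expt[Z_q^x]} \le \tfrac12(\expt[Z_c^x]+\expt[Z_q^x])$. You instead condition on the outcome of the classical stage to get the exact identity $\Pr[\mathsf{Hybrid\ Success}] = \expt_x[Z_c^x] + \sum_{x: x_S=0} d_x \Pr[Z_q^x=1]$, from which you read off the two separate bounds $\Pr[\mathsf{Hybrid\ Success}] \ge \expt_x[Z_c^x]$ and $\Pr[\mathsf{Hybrid\ Success}] \ge \expt_x[Z_q^x]$ and average. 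Your argument is more elementary (no Cauchy--Schwarz) and in fact establishes the slightly stronger conclusion $\Pr[\mathsf{Hybrid\ Success}] \ge \max\left(\expt_x[Z_c^x], \expt_x[Z_q^x]\right)$, of which the stated lemma is a consequence; the same max bound could also be extracted from the paper's expansion by noting $Z_c^x Z_q^x \le \min(Z_c^x, Z_q^x)$, but the paper does not do so. Your side remark that $Z_q^x$ must be regarded as defined even for $x$ with $x_S \ne 0$ (where the quantum stage is never executed) is a real subtlety that the paper's proof also implicitly relies on when it writes $\expt_x[(1-Z_c^x)(1-Z_q^x)]$; you handle it correctly by absorbing that mass into the classical-success term.
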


\begin{proof} The algorithm fails if both classical and quantum stages
  fail. Hence the failure probability is
  \begin{align*}
    \expt_{x \gets \distr} [(1 - Z_c^x)(1 - Z_q^x)] = 1 - \expt_{x \gets \distr}[Z_c^x] - \expt_{x \gets
    \distr}[Z_q^x] + \expt_{x \gets \distr}[Z_c^x \cdot Z_q^x] \, .
  \end{align*}
  Then by using the Cauchy-Schwartz inequality (\cref{lemma:cs}) and as
  $Z_c^x$ and $Z_q^x$ are both binary variables, we have
  \begin{align*}
    \expt_{x \gets \distr}[Z_c^x \cdot Z_q^x] & \le \sqrt{\expt_{x
    \gets \distr}[Z_c^x] \cdot \expt_{x \gets \distr}[Z_q^x]} \\
    & \le \frac{1}{2}(\expt_{x
    \gets \distr}[Z_c^x] + \expt_{x \gets \distr}[Z_q^x]) \, .
  \end{align*}
  We can then conclude that the algorithm's success probability is
  \begin{align*}
    \Pr[\mathsf{Hybrid \ Success}] = 1- \expt_{x \gets \distr} [(1 -
    Z_c^x)(1 - Z_q^x)]  \ge \frac{1}{2} \left(\expt_{x
    \leftarrow \distr}[Z_c^x] + \expt_{x \leftarrow \distr}[Z_q^x]
    \right) \, .
  \end{align*}
\end{proof}
By~\cref{thm:dgrover}, we can immediately give an expression for the
quantum success probability. Namely, 
% \begin{align*}
%   \expt_{x \leftarrow \distr}[Z_q^x] \ge \qquery^2
%   \frac{\sum_{\newd{i}\in \bar S}
%   \newd{\weight}_{\newd{i}}^2}{\sum_{\newd{i}\in \bar S}
%   \newd{\weight}_{\newd{i}}} \, .
% \end{align*}
\begin{align*}
  \expt_{x \leftarrow \distr}[Z_q^x] \ge \qquery^2
  \frac{\sum_{\textbf{i} \in \bar S}
  \newd{\weight}_{\textbf{i}}^2}{\sum_{\textbf{i}\in \bar S}
  \newd{\weight}_{\textbf{i}}} \, .
\end{align*}

\subsection{Success Probability for Special Distributions}

We now show that for some special cases the hybrid algorithm above is
optimal. We note that in these cases, the quantum stage actually
coincides with the standard Grover search, and the quantum success
probability can be obtained by the known result. Our analysis can be
viewed as an alternative approach following the general result
of~\cref{thm:dgrover}.

When $x\gets \distr$ assigns a single $i$ with $x_i=1$ uniformly at
random, $\newd{\distr}$ can be seen as the same distribution but
restricting to $x$ with $x_S=0$. For all 
% $\newd{i}\in \bar S$,
$\textbf{i} \in \bar S$, we have
% $\newd{\weight}_\newd{i} = \frac{1}{m-c} $, and hence
$\newd{\weight}_{\textbf{i}} = \frac{1}{m-c} $, and hence:
\begin{align*}
  \expt_{x \leftarrow \distr}[Z_q^x] = \qquery^2\cdot \frac{\sum_{\textbf{i}\in \bar S}
  \newd{\weight}_{\textbf{i}}^2}{\sum_{\textbf{i}\in \bar S}
  \newd{\weight}_{\textbf{i}}} = \qquery^2 \frac{1}{m-c}\, .  
\end{align*}

It is also easy to observe that $\expt_{x \leftarrow \distr}[Z_c^x] =
\cquery \frac{1}{m}$.

\begin{lemma}[Uniform Hybrid Query Complexity]
  When $\distr$ is the uniform distribution, our hybrid algorithm
  equipped with $\cquery$ classical queries and $\qquery$ quantum
  queries succeeds with probability 
\begin{align*}
  \Pr[\mathsf{Hybrid \ Success}] \geq \frac{1}{2} \left(\frac{\cquery}{m} + \frac{\qquery^2}{m - \cquery}  \right).
\end{align*}
Modulo constant factors and lower order terms, this
matches the hardness bound, and hence the hybrid query complexity is
$\Theta \left(\frac{1}{m}(\cquery + \qquery^2) \right)$. 
\end{lemma}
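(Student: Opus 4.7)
The plan is to directly combine the general hybrid bound (Lemma~\ref{lemma:hybrid_lb}) with the already-computed expressions for $\expt_{x \gets \distr}[Z_c^x]$ and $\expt_{x \gets \distr}[Z_q^x]$ in the uniform case (displayed immediately before the lemma statement). Specifically, I would observe that when $\distr$ is the uniform distribution over single-marked inputs, one classical query to a fresh $i \in S$ finds the marked item with probability $1/m$, so by linearity $\expt_{x \gets \distr}[Z_c^x] = \cquery/m$. For the quantum stage, the induced distribution $\newd{\distr}$ is the uniform distribution over single-marked inputs on the reduced domain $\bar S$ of size $m - \cquery$, which is permutation invariant. Hence Theorem~\ref{thm:general_quantum_success} (or equivalently the standard Grover bound, cf.\ Corollary~\ref{cor:optimalgrover}) applies and yields $\expt_{x \gets \distr}[Z_q^x] \geq \qquery^2/(m-\cquery)$. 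Plugging both estimates into Lemma~\ref{lemma:hybrid_lb} gives the claimed lower bound.

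For the tightness claim, I would appeal to Theorem~\ref{thm:wunifsearch} (the $w=1$ Grover-like case, equivalently Theorem~\ref{thm:main} with $\solnl = 1/m$), which yields the upper bound
\begin{equation*}
\succp_{\adv,\distr} \le \frac{1}{m} \cdot (2\sqrt{\cquery} + 2\qquery + 1)^2 = O\!\left(\frac{\cquery + \qquery^2}{m}\right).
\end{equation*}
Comparing with the lower bound from Lemma~\ref{lemma:hybrid_lb}, I would note that assuming $\cquery \le m/2$ (otherwise classical exhaustive search already solves the problem), $m - \cquery = \Theta(m)$, so the lower bound is $\Omega((\cquery + \qquery^2)/m)$, which matches the upper bound modulo constant factors. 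The two-sided bounds therefore establish hybrid query complexity $\Theta((\cquery + \qquery^2)/m)$.

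The only mildly non-obvious step is verifying that the induced distribution $\newd{\distr}$ after conditioning on $x_S = 0$ remains the uniform single-marked distribution on $\bar S$; this follows because conditioning the uniform distribution over $\{e_1,\dots,e_m\}$ on the marked coordinate lying outside $S$ simply yields the uniform distribution over $\{e_i : i \in \bar S\}$, giving $\newd{\weight}_{\textbf{i}} = 1/(m-\cquery)$ for each $\textbf{i} \in \bar S$ as asserted. No step here should be technically delicate—essentially all the machinery has been built up earlier, and the lemma is a clean instantiation of that machinery on the simplest nontrivial distribution.
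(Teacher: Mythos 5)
Your proposal is correct and follows essentially the same route as the paper: compute $\expt[Z_c^x]=\cquery/m$ directly, note that the induced distribution $\newd{\distr}$ is again uniform single-marked on a domain of size $m-\cquery$ so that Theorem~\ref{thm:dgrover} gives $\expt[Z_q^x]\ge \qquery^2/(m-\cquery)$, combine via Lemma~\ref{lemma:hybrid_lb}, and match against the $\solnl=1/m$ hardness bound for tightness. The paper presents exactly this calculation in the text immediately preceding the lemma, so there is nothing to add.
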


Similarly, we can 
% similarly get 
obtain a tight bound for the Bernoulli distribution, by
the observation that $\newd{\distr}$ in this case is just another
Bernoulli with the same $\berpar$. Hence 
\begin{align*}
  \expt_{x \leftarrow \distr}[Z_q^x] = \berpar \cdot \qquery^2 \, .
\end{align*}
On the other hand,
\begin{align*}
  \expt_{x \leftarrow \distr}[Z_c^x] = 1 - (1-\berpar)^{\cquery} \ge
  \frac{1}{2}\berpar\cdot \cquery \, . 
\end{align*}
% The inequality follows because $(1 - x) \le e^{-x} $

\begin{lemma}[Bernoulli Hybrid Success and Optimality]
When $\distr$ is the Bernoulli distribution, our hybrid algorithm equipped with $\cquery$ classical queries and $\qquery$ quantum queries succeeds with probability at least:
\begin{align*}
  \Pr[\mathsf{Hybrid \ Success}] \ge \frac{1}{2}\eta \left(
  \frac{1}{2} \cquery + \qquery^2 \right)   \, .
\end{align*}
Hence the hybrid query complexity for the Bernoulli distribution is
$\Theta(\berpar(\cquery + \qquery^2))$.
\end{lemma}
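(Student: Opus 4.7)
The plan is to combine the general hybrid lower bound of~\cref{lemma:hybrid_lb} with explicit evaluations of the two stage success probabilities under $\distr = \distr_\berpar$, then invoke~\cref{thm:bersearch} for the matching upper bound establishing optimality. By~\cref{lemma:hybrid_lb} it suffices to lower-bound $\expt_{x \gets \distr}[Z_c^x]$ and $\expt_{x \gets \distr}[Z_q^x]$ separately and sum them.

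For the classical stage, because under $\distr_\berpar$ every coordinate $x_i$ is set to $1$ independently with probability $\berpar$, the $\cquery$ classical queries on the set $S$ amount to $\cquery$ independent Bernoulli trials, so
\[
\expt_{x \gets \distr}[Z_c^x] = 1 - (1-\berpar)^{\cquery} \ge \tfrac{1}{2}\berpar\cquery,
\]
where the second inequality is the one already stated immediately before the lemma (it follows from $(1-x)^n \le e^{-nx}$ together with $1-e^{-t} \ge t/2$ on $[0,1]$; if $\berpar\cquery > 1$ the claimed bound holds trivially since $Z_c^x \in \{0,1\}$). For the quantum stage, the key observation is that the induced distribution $\newd{\distr}$, obtained by conditioning $\distr_\berpar$ on $x_S = 0$ and projecting to the coordinates in $\bar S$, is again a Bernoulli product distribution with the same parameter $\berpar$, precisely because the coordinates under $\distr_\berpar$ are mutually independent. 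Hence $\newd{\weight}_{\textbf{i}} = \berpar$ for every $\textbf{i} \in \bar S$, and~\cref{thm:dgrover} applied to $\newd{\distr}$ gives
\[
\expt_{x \gets \distr}[Z_q^x] \ge \qquery^2 \cdot \frac{\sum_{\textbf{i}\in \bar S} \newd{\weight}_{\textbf{i}}^2}{\sum_{\textbf{i}\in \bar S} \newd{\weight}_{\textbf{i}}} = \qquery^2 \cdot \berpar.
\]

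Plugging these two bounds into~\cref{lemma:hybrid_lb} yields $\Pr[\mathsf{Hybrid\ Success}] \ge \frac{1}{2}\berpar(\frac{1}{2}\cquery + \qquery^2)$, which is the stated bound. For optimality, \cref{thm:bersearch} gives $\succp_{\adv,\distr_\berpar} \le \berpar(2\sqrt{\cquery} + 2\qquery + 1)^2 = O(\berpar(\cquery + \qquery^2))$ for every hybrid adversary, matching the lower bound up to constants and establishing hybrid query complexity $\Theta(\berpar(\cquery + \qquery^2))$. The only mildly delicate step is the identification of $\newd{\distr}$ as a Bernoulli product again, but this is immediate from coordinate-wise independence of $\distr_\berpar$, so I do not anticipate any genuine obstacle; the remainder of the argument is essentially bookkeeping combining two results that have already been proved.
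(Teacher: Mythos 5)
Your proposal is correct and follows essentially the same route as the paper: the paper likewise observes that conditioning $\distr_\berpar$ on $x_S=0$ yields another Bernoulli product with the same $\berpar$ (so $\expt[Z_q^x]\ge \berpar\qquery^2$ by Theorem~\ref{thm:dgrover}), uses $\expt[Z_c^x]=1-(1-\berpar)^{\cquery}\ge\frac{1}{2}\berpar\cquery$, combines them via Lemma~\ref{lemma:hybrid_lb}, and cites Theorem~\ref{thm:bersearch} for the matching upper bound. Your added justification of the elementary inequality (valid in the regime $\berpar\cquery\le 1$, which is the only regime where the lemma's conclusion is non-vacuous) is a detail the paper leaves implicit; otherwise the arguments coincide.
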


\section*{Acknowledgements} J.G. was partially supported by NSF grants no. 2001082 and 2055694. F.S. was partially supported by NSF grant no. 1942706 (CAREER). J.G. and F.S. 
% also thank 
were also partially support by Sony by means of the Sony Research Award Program. 
A.C. acknowledges support from the National Science Foundation grant CCF-1813814 and from the AFOSR under Award Number FA9550-20-1-0108.

%--------------------------%
%--------------------------%
{\small 
\bibliographystyle{alpha}
\bibliography{biblio}

\newcommand{\etalchar}[1]{$^{#1}$}
\begin{thebibliography}{ABKM22}

\bibitem[ABKM22]{ABKM22}
Gorjan Alagic, Chen Bai, Jonathan Katz, and Christian Majenz.
\newblock Post-quantum security of the even-mansour cipher.
\newblock In {\em Advances in Cryptology -- EUROCRYPT 2022}, pages 458--487.
  Springer, 2022.

\bibitem[AHU19]{AHU19}
Andris Ambainis, Mike Hamburg, and Dominique Unruh.
\newblock Quantum security proofs using semi-classical oracles.
\newblock In {\em Advances in Cryptology -- CRYPTO 2019}, pages 269--295.
  Springer, 2019.

\bibitem[AMRS20]{AMRS20}
Gorjan Alagic, Christian Majenz, Alexander Russell, and Fang Song.
\newblock Quantum-secure message authentication via blind-unforgeability.
\newblock In {\em Advances in Cryptology -- EUROCRYPT 2020}. Springer, 2020.

\bibitem[ARU14]{ARU14}
Andris Ambainis, Ansis Rosmanis, and Dominique Unruh.
\newblock Quantum attacks on classical proof systems: The hardness of quantum
  rewinding.
\newblock In {\em 2014 IEEE 55th Annual Symposium on Foundations of Computer
  Science}, pages 474--483. IEEE, 2014.

\bibitem[BBBV97]{BBBV97}
Charles~H Bennett, Ethan Bernstein, Gilles Brassard, and Umesh Vazirani.
\newblock Strengths and weaknesses of quantum computing.
\newblock {\em SIAM journal on Computing}, 26(5):1510--1523, 1997.

\bibitem[BDF{\etalchar{+}}11]{BDF+11}
Dan Boneh, {\"O}zg{\"u}r Dagdelen, Marc Fischlin, Anja Lehmann, Christian
  Schaffner, and Mark Zhandry.
\newblock Random oracles in a quantum world.
\newblock In {\em Advances in Cryptology -- ASIACRYPT 2011}, pages 41--69.
  Springer, 2011.

\bibitem[BR93]{BR93}
Mihir Bellare and Phillip Rogaway.
\newblock Random oracles are practical: A paradigm for designing efficient
  protocols.
\newblock In {\em Proceedings of the 1st ACM conference on Computer and
  Communications Security}, pages 62--73, 1993.

\bibitem[BR94]{BR94}
Mihir Bellare and Phillip Rogaway.
\newblock Optimal asymmetric encryption.
\newblock In {\em Advances in Cryptology--EUROCRYPT 1994}, pages 92--111.
  Springer, 1994.

\bibitem[BR96]{BR96}
Mihir Bellare and Phillip Rogaway.
\newblock The exact security of digital signatures-how to sign with rsa and
  rabin.
\newblock In {\em Advances in Cryptology--Eurocrypt 1996}, pages 399--416.
  Springer, 1996.

\bibitem[BZ13]{BZ13}
Dan Boneh and Mark Zhandry.
\newblock Secure signatures and chosen ciphertext security in a quantum
  computing world.
\newblock In {\em Advances in Cryptology -- CRYPTO 2013}, pages 361--379.
  Springer, 2013.

\bibitem[CCHL22]{CCHL22}
Sitan Chen, Jordan Cotler, Hsin-Yuan Huang, and Jerry Li.
\newblock The complexity of nisq, 2022.

\bibitem[CCL23]{CCL23}
Nai-Hui Chia, Kai-Min Chung, and Ching-Yi Lai.
\newblock On the need for large quantum depth.
\newblock {\em J. ACM}, 70(1), jan 2023.

\bibitem[CEV23]{CEV23}
C{\'e}line Chevalier, Ehsan Ebrahimi, and Quoc-Huy Vu.
\newblock On security notions for encryption in a quantum world.
\newblock In {\em Progress in Cryptology -- INDOCRYPT 2022}, pages 592--613.
  Springer, 2023.

\bibitem[CGK{\etalchar{+}}23]{CGKSW23}
Alexandru Cojocaru, Juan Garay, Aggelos Kiayias, Fang Song, and Petros Wallden.
\newblock Quantum {M}ulti-{S}olution {B}ernoulli {S}earch with {A}pplications
  to {B}itcoin's {P}ost-{Q}uantum {S}ecurity.
\newblock {\em {Quantum}}, 7:944, 2023.

\bibitem[CM20]{CM20}
Matthew Coudron and Sanketh Menda.
\newblock Computations with greater quantum depth are strictly more powerful
  (relative to an oracle).
\newblock In {\em Proceedings of the 52nd Annual ACM SIGACT Symposium on Theory
  of Computing}, STOC 2020, page 889–901, New York, NY, USA, 2020.
  Association for Computing Machinery.

\bibitem[CMS19]{CMS19}
Alessandro Chiesa, Peter Manohar, and Nicholas Spooner.
\newblock Succinct arguments in the quantum random oracle model.
\newblock In {\em 17th International Theory of Cryptography Conference -- TCC
  2019}, pages 1--29. Springer, 2019.

\bibitem[DFMS19]{DFMS19}
Jelle Don, Serge Fehr, Christian Majenz, and Christian Schaffner.
\newblock Security of the {Fiat-Shamir} transformation in the quantum
  random-oracle model.
\newblock In {\em Advances in Cryptology -- CRYPTO 2019}, pages 356--383.
  Springer, 2019.

\bibitem[DFMS22]{DFMS22}
Jelle Don, Serge Fehr, Christian Majenz, and Christian Schaffner.
\newblock Online-extractability in the quantum random-oracle model.
\newblock In {\em Advances in Cryptology -- EUROCRYPT 2022}, pages 677--706.
  Springer, 2022.

\bibitem[DH09]{DH09}
C{\u{a}}t{\u{a}}lin Dohotaru and Peter H{\o}yer.
\newblock Exact quantum lower bound for grover's problem.
\newblock {\em Quantum Information \& Computation}, 9(5):533--540, 2009.

\bibitem[ES15]{ES15}
Edward Eaton and Fang Song.
\newblock {Making Existential-unforgeable Signatures Strongly Unforgeable in
  the Quantum Random-oracle Model}.
\newblock In {\em 10th Conference on the Theory of Quantum Computation,
  Communication and Cryptography -- TQC 2015}, volume~44 of {\em Leibniz
  International Proceedings in Informatics (LIPIcs)}, pages 147--162. Schloss
  Dagstuhl--Leibniz-Zentrum fuer Informatik, 2015.

\bibitem[ES20]{ES20}
Edward Eaton and Fang Song.
\newblock A note on the instantiability of the quantum random oracle.
\newblock In {\em International Conference on Post-Quantum Cryptography}, pages
  503--523. Springer, 2020.

\bibitem[FO13]{FO-JoC13}
Eiichiro Fujisaki and Tatsuaki Okamoto.
\newblock Secure integration of asymmetric and symmetric encryption schemes.
\newblock {\em Journal of Cryptology}, 26(1):80--101, 2013.
\newblock Preliminary version in CRYPTO 1999.

\bibitem[FOPS04]{FOPS04}
Eiichiro Fujisaki, Tatsuaki Okamoto, David Pointcheval, and Jacques Stern.
\newblock {RSA-OAEP} is secure under the rsa assumption.
\newblock {\em Journal of Cryptology}, 17(2):81--104, 2004.
\newblock Preliminary version in CRYPTO 2001.

\bibitem[GKL15]{GKL15}
Juan Garay, Aggelos Kiayias, and Nikos Leonardos.
\newblock The bitcoin backbone protocol: Analysis and applications.
\newblock In {\em Advances in Cryptology -- EUROCRYPT 2015}, pages 281--310.
  Springer, 2015.

\bibitem[Gro96]{Grover96}
Lov~K Grover.
\newblock A fast quantum mechanical algorithm for database search.
\newblock In {\em Proceedings of the twenty-eighth annual ACM symposium on
  Theory of computing}, pages 212--219. ACM, 1996.

\bibitem[HHK17]{HHK17}
Dennis Hofheinz, Kathrin H{\"o}velmanns, and Eike Kiltz.
\newblock A modular analysis of the fujisaki-okamoto transformation.
\newblock In {\em 15th International Theory of Cryptography Conference -- TCC
  2017}, pages 341--371. Springer, 2017.

\bibitem[HLS22]{HLS22}
Yassine Hamoudi, Qipeng Liu, and Makrand Sinha.
\newblock Quantum-classical tradeoffs in the random oracle model, 2022.

\bibitem[HRS16]{HRS16}
Andreas H\"{u}lsing, Joost Rijneveld, and Fang Song.
\newblock Mitigating multi-target attacks in hash-based signatures.
\newblock In {\em 19th IACR International Conference on Public-Key Cryptography
  --- PKC 2016}, pages 387--416. Springer, 2016.

\bibitem[JST21]{JST21}
Joseph Jaeger, Fang Song, and Stefano Tessaro.
\newblock Quantum key-length extension.
\newblock In {\em 19th International Theory of Cryptography Conference -- TCC
  2021}, pages 209--239. Springer, 2021.

\bibitem[KM10]{KM10}
Hidenori Kuwakado and Masakatu Morii.
\newblock Quantum distinguisher between the 3-round feistel cipher and the
  random permutation.
\newblock In {\em 2010 IEEE International Symposium on Information Theory},
  pages 2682--2685. IEEE, 2010.

\bibitem[Pre18]{Preskill18}
John Preskill.
\newblock Quantum computing in the {NISQ} era and beyond.
\newblock {\em Quantum}, 2:79, 2018.

\bibitem[Ros22]{Ros22}
Ansis Rosmanis.
\newblock Hybrid quantum-classical search algorithms.
\newblock {\em arXiv preprint arXiv:2202.11443}, 2022.

\bibitem[Sho01]{Shoup01}
Victor Shoup.
\newblock {OAEP} reconsidered.
\newblock In {\em Advances in Cryptology—-CRYPTO 2001}, pages 239--259.
  Springer, 2001.

\bibitem[SZ19]{SZ19}
Xiaoming Sun and Yufan Zheng.
\newblock Hybrid decision trees: Longer quantum time is strictly more powerful,
  2019.

\bibitem[Unr15]{Unruh15}
Dominique Unruh.
\newblock Non-interactive zero-knowledge proofs in the quantum random oracle
  model.
\newblock In {\em Advances in Cryptology -- EUROCRYPT 2015}, pages 755--784.
  Springer, 2015.

\bibitem[YZ21]{YZ21}
Takashi Yamakawa and Mark Zhandry.
\newblock Classical vs quantum random oracles.
\newblock In {\em Advances in Cryptology -- EUROCRYPT 2021}, pages 568--597.
  Springer, 2021.

\bibitem[Zal99]{Zalka99}
Christof Zalka.
\newblock Grover's quantum searching algorithm is optimal.
\newblock {\em Physical Review A}, 60(4):2746, 1999.

\bibitem[Zha15]{Zhandry15_ibe}
Mark Zhandry.
\newblock Secure identity-based encryption in the quantum random oracle model.
\newblock {\em International Journal of Quantum Information}, 13(04):1550014,
  2015.
\newblock Preliminary version in IACR CRYPTO 2012.

\bibitem[Zha19]{Zhandry19}
Mark Zhandry.
\newblock How to record quantum queries, and applications to quantum
  indifferentiability.
\newblock In {\em Advances in Cryptology -- CRYPTO 2019}, pages 239--268.
  Springer, 2019.

\bibitem[Zha21]{Zhandry21_qprf}
Mark Zhandry.
\newblock How to construct quantum random functions.
\newblock {\em Journal of the ACM (JACM)}, 68(5):1--43, 2021.
\newblock Preliminary version in FOCS 2012.

\end{thebibliography}
} %SMALL
%--------------------------%
%--------------------------%

\end{document}